\documentclass[a4paper,11pt]{scrartcl}

\usepackage{amsmath,amssymb,amsthm}
\usepackage{hyperref, cleveref}
\usepackage{thm-restate}

\usepackage{graphicx}

\usepackage[utf8]{inputenc}
\usepackage[linesnumbered, vlined, ruled]{algorithm2e}
\usepackage{mathrsfs}

\newtheorem{theorem}{Theorem}
\newtheorem{definition}[theorem]{Definition}
\newtheorem{lemma}[theorem]{Lemma}

\newtheorem{corollary}[theorem]{Corollary}

\newcommand{\cost}{\mathsf{cost}}
\newcommand{\prev}{\mathsf{prev}}
\newcommand{\OPT}{\mathcal{O}}
\newcommand{\C}{\mathcal{C}}
\newcommand{\D}{\mathcal{D}}
\newcommand{\A}{\mathcal{A}}
\newcommand{\Ht}{\mathcal{H}}
\newcommand{\Hr}{\mathscr{H}}
\newcommand{\Cr}{\mathscr{C}}

\newcommand{\Bad}{\mathsf{Bad}}
\newcommand{\Anc}{\mathsf{Anc}}
\newcommand{\Ker}{\mathsf{Ker}}

\newcommand{\diam}{\mathsf{diam}}
\newcommand{\rad}{\mathsf{rad}}
\newcommand{\drad}{\mathsf{drad}}
\newcommand{\Aug}{\mathsf{Augment}}
\newcommand{\parent}{\mathsf{parent}}
\newcommand{\nest}{\mathsf{Nest}}

\newcommand{\X}{\mathcal{X}}
\newcommand{\PP}{\mathcal{P}}
\newcommand{\Q}{\mathcal{Q}}

\setkomafont{descriptionlabel}{\normalfont\bfseries}

\title{The Price of Hierarchical Clustering\thanks{This work has been supported by DFG grant RO 5439/1-1}}

\author{Anna Arutyunova\thanks{University of Bonn, Germany, \href{mailto:arutyunova@informatik.uni-bonn.de}{arutyunova@informatik.uni-bonn.de}} \and Heiko R\"oglin\thanks{University of Bonn, Germany, \href{mailto:roeglin@cs.uni-bonn.de}{roeglin@cs.uni-bonn.de}}}
\date{}

\begin{document}

\maketitle

\begin{abstract}
	Hierarchical Clustering is a popular tool for understanding the hereditary properties of a data set. Such a clustering is actually a sequence of clusterings that starts with the trivial clustering in which every data point forms its own cluster and then successively merges two existing clusters until all points are in the same cluster. A hierarchical clustering achieves an approximation factor of~$\alpha$ if  the costs of each $k$-clustering in the hierarchy are at most $\alpha$ times the costs of an optimal $k$-clustering. We study as cost functions the maximum (discrete) radius of any cluster ($k$-center problem) and the maximum diameter of any cluster ($k$-diameter problem).
	
	In general, the optimal clusterings do not form a hierarchy and hence an approximation factor of~$1$ cannot be achieved. We call the smallest approximation factor that can be achieved for any instance the \emph{price of hierarchy}. For the $k$-diameter problem we improve the upper bound on the price of hierarchy to $3+2\sqrt{2}\approx 5.83$. Moreover we significantly improve the lower bounds for $k$-center and $k$-diameter, proving a price of hierarchy of exactly $4$ and $3+2\sqrt{2}$, respectively.
\end{abstract}

\section{Introduction}

Clustering is an ubiquitous task in data analysis and machine learning. In a typical clustering problem, the goal is to partition a set of objects into different clusters such that only similar objects belong to the same cluster. There are numerous ways how clustering can be modeled formally and many different models have been studied in the literature in the last decades. In many theoretical models, one assumes that the data comes from a metric space and that the desired number of clusters is given. Then the goal is to optimize some objective function like $k$-center, $k$-median, or $k$-means. In most cases the resulting optimization problems are NP-hard and hence approximation algorithms have been studied extensively. 

One aspect of real-world clustering problems that is not captured by these models is that it is often already a non-trivial task to determine for a given data set the right or most reasonable number of clusters. One particularly appealing way to take this into account is hierarchical clustering. A hierarchical clustering of a data set is actually a sequence of clusterings, one for each possible number of clusters. It starts with the trivial clustering in which every data point forms its own cluster and then successively merges two existing clusters until all points are in the same cluster. This way for every possible number of clusters, a clustering is obtained. These clusterings help to understand the hereditary properties of the data and they provide information at different levels of granularity.

While hierarchical clustering is successfully used in many applications, it is not as well understood from a theoretical point of view as the models in which the number of clusters is given as part of the input. One reason for this is that it is not obvious how the quality of a hierarchical clustering should be measured. A possibility that has been explored in the literature is to define the quality of a hierarchical clustering based on its worst level. To be precise, let $(\X,d)$ be a metric space and $\PP\subset \X$ a set of $n$ points. Furthermore let $\Hr=(\Ht_n,\ldots,\Ht_1)$ be a hierarchical clustering of $\PP$, where $\Ht_k$ denotes a $k$-clustering, i.e., a clustering with at most $k$ non-empty clusters. Then $\Ht_{k-1}$ arises from $\Ht_k$ by merging some of the existing clusters.
We assume that some objective function like $k$-center, $k$-median, or $k$-means is selected and denote by $\cost(\Ht_k)$ the objective value of $\Ht_k$ with respect to the selected objective function. Furthermore, let $\OPT_k$ denote an optimal $k$-clustering and let $\cost(O_k)$ denote its objective value. Then we say that $\Hr$ achieves an approximation factor of $\alpha\ge 1$ if $\cost(\Ht_k)\le\alpha\cdot\cost(\OPT_k)$ for every $k$, assuming that $\cost$ is an objective that is to be minimized. In this work we consider the radius objective, which is well-known from the $k$-center problem. Here the cost is defined as the maximum radius of a cluster. Furthermore we consider the diameter objective, where the cost is defined as the maximum distance between any two points lying in the same cluster. 

An $\alpha$-approximation for small $\alpha$ yields a strong guarantee for the hierarchical clustering on every level. However, in general there do not exist optimal clusterings $\OPT_n,\ldots,\OPT_1$ that form a hierarchy. So even with unlimited computational resources, a $1$-ap\-pro\-xi\-ma\-ti\-on usually cannot be achieved.
In the literature different algorithms for computing hierarchical clusterings with respect to different objective functions have been developed and analyzed. Dasgupta and Long~\cite{DasguptaL05} and Charikar et al.~\cite{CCFM04} initiated this line of research and presented both independently from each other an algorithm that computes efficiently an 8-approximate hierarchical clustering with respect to the radius and diameter objective. That is, for every level~$k$, the maximal radius or diameter of any cluster in the $k$-clustering computed by their algorithms is at most 8 times the maximal radius or diameter in an optimal $k$-clustering. Inspired by~\cite{DasguptaL05}, Plaxton~\cite{P06} proposed a constant-factor approximation for the $k$-median and $k$-means objective. Later a general framework that also leads constant approximation guarantees for many objective functions including in particular $k$-median and $k$-means has been proposed by Lin et al.~\cite{LNRW10}.

Despite these articles and other related work, which we discuss below in detail, many questions in the area of hierarchical clustering are not yet resolved. We find it particularly intriguing to find out which approximation factors can be achieved for different objectives. This question comes in two flavors depending on the computational resources available. Of course it is interesting to study which approximation factors can and cannot be achieved in polynomial time, assuming P~$\neq$~NP. Since in general there do not exist hierarchical clusterings that are optimal on each level, it is also interesting to study which approximation factors can and cannot be achieved in general without the restriction to polynomial-time algorithms.

For an objective function like radius or diameter we define its \emph{price of hierarchy} as the smallest $\alpha$ such that for any instance there exists an $\alpha$-approximate hierarchical clustering. Hence, the price of hierarchy is a measure for how much quality one has to sacrifice for the hierarchical structure of the clusterings. 

Our main results are tight bounds for the price of hierarchy for the radius, discrete radius and diameter objective. Here the difference between radius and discrete radius lies in the choice of centers. For the radius objective we allow to choose the center of a cluster $C\subset \PP$ from the whole metric space $\X$, while for the discrete radius objective the center must be contained in $C$ itself. We will see that this has an impact on the price of hierarchy. For all three objectives the algorithms in~\cite{DasguptaL05,CCFM04} compute an $8$-approximate hierarchical clustering in polynomial time. Until recently this was also the best known upper bound for the price of hierarchy in the literature for hierarchical radius and diameter. For discrete radius, Gro{\ss}wendt~\cite{G20} shows an upper bound for the price of hierarchy of~4. The best known lower bounds are 2, proven by Das and Kenyon-Mathieu~\cite{DM09} for diameter and by Gro{\ss}wendt~\cite{G20} for (discrete) radius. We improve the framework in~\cite{LNRW10} for radius and diameter and show an upper bound on the price of hierarchy of $3+2\sqrt{2}\approx 5.83$. The upper bound of $3+2\sqrt{2}$ for the radius was also recently proved by Bock~\cite{Bock2022} in independent work. 
However our main contribution lies in the design of clustering instances to prove a lower bound of~4 for discrete radius and $3+2\sqrt{2}$ for radius and diameter.

\subparagraph*{Related work.}
Gonzales~\cite{G85} presents a simple and elegant incremental algorithm for $k$-center. The algorithm exhibits the following nice property: given a set $\PP$ which has to be clustered, it returns an ordering of the points, such that the first $k$ points constitute the centers of the $k$-center solution, and this solution is a $2$-approximation for every  $1\leq k\leq |\PP|$. However the resulting clusterings are usually not hierarchically compatible. Dasgupta and Long~\cite{DasguptaL05} use the ordering computed by Gonzales' algorithm to compute a hierarchical clustering. The authors present an 8-approximation for the objective functions (discrete) radius and diameter. In an independent work Charikar et al.~\cite{CCFM04} also present an $8$-approximation for the three objectives which outputs the same clustering as the algorithm in~\cite{DasguptaL05} under some reasonable conditions~\cite{DM09}. In a recent work, Mondal~\cite{Mdl18} gives a $6$-approximation for hierarchical (discrete) radius. In Appendix~\ref{ap:mondal} we present an instance where this algorithm computes only a $7$-approximation contradicting the claimed guarantee.

Plaxton~\cite{P06} shows that a similar approach as in~\cite{DasguptaL05} yields a hierarchical clustering with constant approximation guarantee for the $k$-median and $k$-means objectives. Later a general framework for a variety of incremental and hierarchical problems was introduced by Lin et al.~\cite{LNRW10}. Their framework can be applied to compute hierarchical clusterings for any cost function which satisfies a certain nesting property, especially those of $k$-median and $k$-means. This yields a $20.71\alpha$-approximation for $k$-median and a $576\beta$-approximation for $k$-means. Here $\alpha=2.675$ and $\beta=6.357$ are the currently best approximation guarantees for $k$-median~\cite{BPRST17} and $k$-means~\cite{ANSW17}.
The algorithms presented in \cite{CCFM04,DasguptaL05,LNRW10,P06} run in polynomial time. Unless P=NP there is no polynomial-time $\alpha$-approximation for $\alpha<2$ for hierarchical (discrete) radius and diameter. For (discrete) radius this is an immediate consequence of the reduction from dominating set presented by~\cite{HS86}. A similar reduction from clique cover yields the statement for hierarchical diameter. 

However even without time constraints it is not clear what approximation guarantee can be achieved for hierarchical clustering. It is easy to find examples, where the approximation guarantee of any hierarchical clustering for all three objectives is greater than one. Das and Kenyon-Mathieu \cite{DM09} and Gro\ss wendt~\cite{G20} present instances for diameter and (discrete) radius, where no hierarchical clustering has an approximation guarantee smaller than $2$.
On the other hand Gro\ss wendt~\cite{G20} proves an upper bound of $4$ on the approximation guarantee of hierarchical discrete radius by using the framework of Lin et al.~\cite{LNRW10}. In recent independent work Bock~\cite{Bock2022} improved the bound for hierarchical radius to $3+2\sqrt{2}$. While his approach is inspired by Dasgupta and Long~\cite{DasguptaL05}, the resulting algorithm is similar to the algorithm we present  in this paper as an improvement of~\cite{LNRW10}. 

Aside from the theoretical results, there also exist greedy heuristics, which are more commonly used in applications. One very simple bottom up, also called agglomerative, algorithm is the following: starting from the clustering where every point is separate, it merges in every step the two clusters whose merge results in the smallest increase of the cost function. For (discrete) radius and diameter this algorithm is known as complete linkage and for the $k$-means cost this is Ward's method~\cite{W63}.  Ackermann et al.~\cite{ABKS14} analyze the approximation guarantee of complete linkage in the Euclidean space. They show an approximation guarantee of $O(\log(k))$ for all three objectives assuming the dimension of the Euclidean space to be constant. This was later improved by Gro\ss wendt and R\"oglin~\cite{GR17} to $O(1)$. In arbitrary metric spaces complete linkage does not perform well. There Arutyunova et al.~\cite{AGRSW21} prove a lower bound of $\Omega(k)$ for all three objectives. For Ward's method Gro\ss wendt et al.~\cite{GRS19} show an approximation guarantee of $2$ under the strong assumption that the optimal clusters are well separated.  

Recently other cost functions for hierarchical clustering were proposed, which do not compare to the optimal clustering on every level. Dasgupta~\cite{D16} defines a new cost function for similarity measures and presents an $O(\alpha\log(n))$-approximation for the respective problem. This was later improved to $O(\alpha)$ independently by Charikar and Chatziafratis~\cite{CC17} and Cohen-Addad et al.~\cite{CKMM18}. Here $\alpha$ is the approximation guarantee of sparsest cut. However Cohen-Addad et al.~\cite{CKMM18} prove that every hierarchical clustering is an $O(1)$-approximation to the corresponding cost function for dissimilarity measures when the dissimilarity measure is a metric. A cost function more suitable for Euclidean spaces was developed by Wang and Moseley~\cite{WM20}. They prove that a randomly generated hierarchical clustering performs poorly for this cost function and show that bisecting $k$-means computes an $O(1)$-approximation.

\subparagraph*{Our results.} 
We define the \emph{price of hierarchy} $\rho_{\cost}$ with respect to an objective function $\cost$ as the smallest number such that for every clustering instance there exists a hierarchical clustering which is a $\rho_{\cost}$-approximation with respect to $\cost$. Observe that the results~\cite{CCFM04,DM09,DasguptaL05,G20} imply that the price of hierarchy for radius and diameter is between $2$ and $8$ and for discrete radius between $2$ and $4$. We close these gaps and prove that the price of hierarchy for radius and diameter is exactly $3+2\sqrt{2}$ and for discrete radius exactly $4$. Notice that this does not imply the existence of polynomial-time algorithms with approximation guarantee $\rho_\cost$. Especially our algorithm which computes a $3+2\sqrt{2}$-approximation for radius and diameter does not run in polynomial time. This is also the case for the $3+2\sqrt{2}$-approximation for radius presented by Bock~\cite{Bock2022} in independent work. Our upper bound of $3+2\sqrt{2}$ can be achieved by a small improvement in the framework of Lin et al.~\cite{LNRW10}. However our most technically demanding contribution is the design of a clustering instance for every $\epsilon>0$ such that every hierarchical clustering has approximation guarantee at least $ 3+2\sqrt{2}-\epsilon$ for radius and diameter and $4-\epsilon$ for discrete radius. It requires a careful analysis of all possible hierarchical clusterings, which is highly non-trivial for complex clustering instances.  

\section{Preliminaries}
A clustering instance $(\X,\PP,d)$ consists of a metric space $(\X,d)$ and a finite subset $\PP\subset \X$. For a set (or cluster) $C\subset \PP$ we denote by
\[\diam(C)=\max_{p,q\in C}d(p,q)\]
the \emph{diameter} of $C$. By $\rad(C,c)=\max_{p\in C}d(c,p)$ we denote the radius of $C$ with respect to a center $c\in \X$. This is the largest distance between $c$ and a point in $C$.
The \emph{radius} of $C$ is defined as the smallest radius of $C$ with respect to a center $c\in \X$, i.e.,
\[\rad(C)=\min_{c\in \X}\rad(C,c)\]
while the \emph{discrete radius} of $C$ is defined as the smallest radius of $C$ with respect to a center $c\in C$, i.e.,
\[\drad(C)=\min_{c\in C}\rad(C,c).\]

A $k$-clustering of $\PP$ is a partition of $\PP$ into at most $k$ non-empty subsets. We consider three closely related clustering problems.

The $k$-diameter problem asks to minimize the maximum diameter $\diam(\C_k)=$ \linebreak $\max_{C\in\C_k}\diam(C)$ of a $k$-clustering $\C_k$.
In the $k$-center problem we want to minimize the maximum radius
$\rad(\C_k)=\max_{C\in \C_k}\rad(C)$, 
and in the discrete $k$-center problem we want to minimize the maximum discrete radius
$\drad(\C_k)=\max_{C\in \C_k}\drad(C)$.

\begin{definition}
	Given an instance $(\X,\PP,d)$, let $n=|\PP|$. We call two clusterings $\C$ and $\C'$ of $\PP$ with $|\C|\geq |\C'|$ hierarchically compatible if for all $C\in \C$ there exists $C'\in\C'$ with $C\subset C'$.  
	A \emph{hierarchical clustering} of $\PP$ is a sequence of clusterings $\Hr=(\Ht_n, \ldots, \Ht_1)$, such that 
	\begin{enumerate}
		\item $\Ht_i$ is an $i$-clustering of $\PP$
		\item for $1< i\leq n$ the two clusterings $\Ht_{i-1}$ and $\Ht_{i}$ are hierarchically compatible.
	\end{enumerate}
	For $\cost \in \{\diam,\rad,\drad\}$ let $\OPT_i$ denote the optimal $i$-clustering with respect to $\cost$. We say that $\Hr$ is an $\alpha$-approximation with respect to $\cost$ if for all $i=1,\ldots, n$ we have 
	\[\cost(\Ht_i)\leq \alpha\cdot\cost(\OPT_i).\]
\end{definition}
Since optimal clusterings are generally not hierarchically compatible, there is usually no hierarchical clustering with approximation guarantee $\alpha=1$. We have to accept that the restriction on hierarchically compatible clusterings comes with an unavoidable increase in the cost compared to an optimal solution.
\begin{definition}
	For $\cost\in\{\diam,\rad,\drad\}$ the \emph{price of hierarchy} $\rho_\cost\geq 1$ is defined as follows. 
	\begin{enumerate}
		\item For every instance $(\X,\PP,d)$, there exists a hierarchical clustering $\Hr$ of $\PP$ that is a $\rho_{\cost}$-approximation with respect to $\cost.$
		\item For any $\alpha<\rho_\cost$ there exists an instance $(\X,\PP,d)$, such that there is no hierarchical clustering of $\PP$ that is an $\alpha$-approximation with respect to $\cost$.
	\end{enumerate} 
\end{definition}  
Thus $\rho_\cost$ is the smallest possible number such that for every clustering instance there is a hierarchical clustering with approximation guarantee $\rho_\cost$. 

\section{An Upper Bound on the Price of Hierarchy}
\label{chap:upper_bound}
The framework by by Lin et al.~\cite{LNRW10} can be applied to compute incremental and hierarchical solutions to a large class of minimization problems.
We already know that the framework yields an upper bound of $4$ on the price of hierarchy for the discrete radius~\cite{G20}.
It also yields upper bounds for the price of hierarchy for radius and diameter, which are not tight, however. We first discuss the framework in the context of hierarchical clustering for (discrete) radius and diameter. In the second part we then present an improved version of their algorithm for radius and diameter.

\begin{restatable}{theorem}{ThmUpperBound}
	\label{thm:upper_bound}
	For $\cost\in\{\diam,\rad\}$ we have $\rho_\cost\leq 3+2\sqrt{2}\approx 5.828$.
\end{restatable}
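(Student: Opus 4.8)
The plan is to refine the framework of Lin et al.~\cite{LNRW10} (which already gives $\rho_{\drad}\le 4$): building the hierarchy top-down, one repeatedly merges the current clustering so that it agrees with an optimal clustering at a suitable lower level, and interpolates at the levels in between. The one new ingredient needed for $\cost\in\{\diam,\rad\}$ is to let the recursion be driven by the \emph{discrete} radius of the clusters, even though the objective is $\rad$ or $\diam$, because a merge increases the discrete radius and the objective by \emph{different} additive amounts; this asymmetry is exactly what turns the bound of $8$ that a direct recursion on $\rad$ or $\diam$ would produce into $3+2\sqrt2$.

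\emph{A nesting lemma.} First I would show: if $\C$ is a clustering of $\PP$ in which each cluster $C$ is equipped with a witness point $p_C\in C$ attaining $\drad(C)$, and $D:=\drad(\C)=\max_C\rad(C,p_C)$, then for every $k'\le|\C|$ there is a coarsening $\C'$ of $\C$ with at most $k'$ clusters such that $\rad(\C')\le D+\cost(\OPT_{k'})$ and $\drad(\C')\le D+2\cost(\OPT_{k'})$; for the diameter objective the two bounds read $\diam(\C')\le 2D+\cost(\OPT_{k'})$ and $\drad(\C')\le D+\cost(\OPT_{k'})$. One simply assigns each $C$ to the cluster of $\OPT_{k'}$ containing $p_C$ and merges clusters assigned to the same optimal cluster; since each witness point lies within $\cost(\OPT_{k'})$ of an optimal center and each old cluster lies within $D$ of its witness point, the four estimates follow from the triangle inequality, choosing the new center either as the optimal center (for the $\rad$/$\diam$ bounds) or as one of the witness points (for the $\drad$ bound). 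The crucial feature is that the potential $\drad$ re-enters with coefficient $1$ in every case, whereas $\cost(\OPT_{k'})$ enters the objective bound with coefficient $1$ for radius and $2$ for diameter.

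\emph{Milestones, recursion, and the constant.} I would then choose a decreasing sequence of milestone levels $n=M_0>M_1>\dots>M_t=1$ along which $\cost(\OPT_{M_i})$ grows roughly geometrically (with additional milestones inserted at the finitely many levels where $\cost(\OPT_{\cdot})$ jumps), set $\Ht_{M_{i+1}}$ to the clustering produced by the nesting lemma applied to $\Ht_{M_i}$ with $k'=M_{i+1}$ (and carry along its witness points), and for $M_{i+1}<k<M_i$ let $\Ht_k$ be a partial version of that same merge, so $\Ht_k$ refines $\Ht_{M_{i+1}}$ and $\cost(\Ht_k)\le\cost(\Ht_{M_{i+1}})$. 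With $D_i:=\drad(\Ht_{M_i})$, $r_i:=\cost(\OPT_{M_i})$, $\rho_i:=D_i/r_i$ and $t_i:=r_i/r_{i+1}\in(0,1]$, the lemma yields $\rho_{i+1}\le\rho_i t_i+2$ for radius and $\rho_{i+1}\le\rho_i t_i+1$ for diameter, and since $\cost(\OPT_k)\ge r_i$ on the interpolated levels the requirement $\cost(\Ht_k)\le(3+2\sqrt2)\cost(\OPT_k)$ becomes $\rho_i+1/t_i\le 3+2\sqrt2$ for radius and $2\rho_i+1/t_i\le 3+2\sqrt2$ for diameter. I would maintain the invariant $\rho_i\le\rho^\ast$, where $\rho^\ast=2+\sqrt2$ for radius and $\rho^\ast=1+1/\sqrt2$ for diameter are the roots of $\rho^2-4\rho+2=0$ resp.\ $2\rho^2-4\rho+1=0$ in the relevant range; $\rho^\ast$ is precisely the choice that minimises the resulting $\alpha$, which comes out to $3+2\sqrt2$. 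As long as $\rho_i\le\rho^\ast$ the interval of ratios $t_i$ meeting both the interpolation constraint and $\rho_i t_i+c\le\rho^\ast$ is nonempty, so a suitable milestone $M_{i+1}$ exists up to the discreteness of the map $k\mapsto\cost(\OPT_k)$; the exceptional case where that map jumps over the whole admissible window is handled by taking $M_{i+1}=M_i-1$, which introduces no interpolated level and, as $t_i$ is then small, brings $\rho_{i+1}$ back below $\rho^\ast$. Feeding an admissible $t_i$ into the recursion keeps $\rho_{i+1}\le\rho^\ast$, so the induction closes and $\cost(\Ht_k)\le(3+2\sqrt2)\cost(\OPT_k)$ holds for all $k$.

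The main obstacle I expect is the robustness of the milestone selection: reconciling the discrete jumps of $k\mapsto\cost(\OPT_k)$ with the continuous admissibility window for $t_i$, verifying that the ``large jump'' branch never violates the invariant nor leaves an interpolated level uncovered, and handling the bottom of the hierarchy (and instances with repeated points) where the optimal cost can be $0$. By contrast the nesting lemma is a routine metric computation, and once the recursion $\rho_{i+1}\le\rho_i t_i+c$ and the interpolation constraint are set up, the optimisation giving $3+2\sqrt2$ is elementary.
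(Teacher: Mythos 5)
Your overall architecture is essentially the paper's: improve the framework of Lin et al.\ by nesting at optimal clusterings along (roughly) geometrically growing cost scales, maintain a potential anchored at representatives whose growth per nesting enters with a different constant than the one appearing in the objective bound, and optimize the scale ratio, which indeed gives $3+2\sqrt 2$ for both $\rad$ and $\diam$. Your nesting lemma (assign each cluster to the optimal cluster containing its witness point, merge, re-anchor at one witness) is correct, and the resulting recursions $\rho_{i+1}\le\rho_i t_i+2$ with constraint $\rho_i+1/t_i\le\alpha$ for the radius, and $\rho_{i+1}\le\rho_i t_i+1$ with $2\rho_i+1/t_i\le\alpha$ for the diameter, are numerically equivalent to the paper's induction; the paper anchors instead at the parent optimal cluster (resp.\ its center) and works with fixed cost bands $(\alpha^{t-i-1},\alpha^{t-i}]$, $\alpha=1+\sqrt2$, but the constants and the optimization are the same.

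The gap is exactly where you anticipate trouble, and your proposed fix does not close it: the adaptive milestone selection. After any nesting step the bound you carry satisfies $D_{i+1}\ge 2\,r_{i+1}$ (radius case), so the lower edge $\rho_i r_i/(\rho^\ast-2)\ge\sqrt2\, r_i$ of your admissible window lies strictly above $r_i$; an instance may therefore have levels whose optimal cost lies in $[\,r_i,\ \rho_i r_i/(\rho^\ast-2)\,)$ and then jump above $(\alpha-\rho_i)r_i$, skipping the window entirely. In that situation your fallback $M_{i+1}=M_i-1$ selects a level whose cost is \emph{below} the window, so $t_i$ is close to $1$ rather than small, $\rho_{i+1}\le\rho_i t_i+2$ can exceed $\rho^\ast$, and the invariant is not restored; your one-line justification only covers the sub-case where $\cost(\OPT_{M_i-1})$ already lies above the window. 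Recovering from such violations would require a genuinely more careful (e.g.\ amortized, two-step) argument that the proposal does not supply. The paper sidesteps all of this by tying the construction to fixed geometric thresholds: it nests once per band $(\alpha^{j-1},\alpha^j]$ at the coarsest optimal clustering whose cost falls in that band (copying when a band is empty), bounds the potential by the geometric sum $\sum_{l\le j}\alpha^l$ in terms of the absolute scale, and compares each level only against the lower band edge, so no window or jump analysis is needed. Switching to fixed thresholds (or proving a robust version of your invariant) is what your argument still needs; the rest — the nesting lemma, the constants $\rho^\ast$ and $3+2\sqrt2$, the treatment of interpolated levels via partial merges, and the degenerate levels of cost $0$ (handled in the paper by scaling so that $d(x,y)>2$) — is sound.
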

First we introduce the notion of a \emph{hierarchical sequence}, which is a relaxation of a hierarchical clustering in the sense that it does not have to contain a $k$-clustering for every $1\leq k\leq |\PP|$.
\begin{definition}
	\label{def:hierarical_ext}
	Given an instance $(\X,\PP,d)$, with $n=|\PP|$. We call a sequence $\Cr=(\C^{(t)}, \ldots, \C^{(1)})$ of clusterings a hierarchical sequence if it satisfies
\begin{enumerate}
	\item $|\C^{(t)}|=n$ and $|\C^{(1)}|=1$
	\item for $1\leq i\leq t$ either $\C^{(i-1)}=\C^{(i)}$ or $\C^{(i-1)}$ is obtained from $\C^{(i)}$ by merging some of its clusters.
\end{enumerate}
Such a hierarchical sequence can be extended to a hierarchical clustering of $\PP$ as follows. 
We define the respective hierarchical clustering $h(\Cr)$ by assigning every $1\leq i\leq n$ the clustering among $\C^{(t)},\ldots, \C^{(1)}$ of smallest cost and size at most $i$. We say that $\Cr$ is an $\alpha$-approximation iff $h(\Cr)$ is an $\alpha$-approximation.
\end{definition}

Before we are able to define the algorithm we need one important definition from~\cite{LNRW10}.
\begin{definition}
	Given an instance $(\X,\PP,d)$. For $\cost\in\{\diam, \rad,\drad\}$ we say that the $(\gamma, \delta)$-nesting property holds for reals $\gamma, \delta\geq 0$, if for any two clusterings $\C, \D$ of $\PP$ with $|\C|>|\D|$ there exists a clustering $\C'$ with 
	\begin{enumerate}
		\item $|\C'|\leq|\D|$
		\item $\C'$ is hierarchically compatible with $\C$ and
		\item $\cost(\C')\leq \gamma \cost(\C)+\delta\cost(\D)$.
	\end{enumerate}
	We say that $\C'$ is a \emph{nesting} of $\C$ at $\D$. Let $\Aug_{\cost}(\C,\D,\gamma, \delta)$ denote the subroutine that computes such a clustering $\C'$.
\end{definition}

\begin{algorithm}[t]
	\SetKwInOut{Input}{Input}
	\SetKwInOut{Output}{Output}
	\Input{Clustering instance $(\X,\PP,d)$, with $d(x,y)>2$ for all $x,y\in \PP$, optimal clusterings $\OPT_{|\PP|},\ldots, \OPT_1$ of $\PP$ with respect to $\cost$}
	\Output{A hierarchical clustering of $\PP$}
	\DontPrintSemicolon
	\SetKwFor{ForAll}{for all}{}{}
	\BlankLine
	Set $\Delta=\cost(\OPT_1), t=\lceil\log_{2\gamma}(\Delta)\rceil+1$ and $\C^{(t)}=\OPT_{|\PP|}$\;
	\For{$i=t-1$ \KwTo $1$}
	{
		Let $1\leq n_i\leq |\PP|$ be the smallest number such that $\cost(\OPT_{n_i})\in((2\gamma)^{t-i-1}, (2\gamma)^{t-i}]$\;
		\If{\textup{such a number exists}}{
			set $\C^{(i)}=\Aug_\cost(\C^{(i+1)}, \OPT_{n_i}, \gamma, \delta)$\;
		}
		\Else{set $\C^{(i)}=\C^{(i+1)}$\;}
	}
	\Return $h((\C^{(t)},\ldots, \C^{(1)}))$\; 
	\caption{(Lin et al.~\cite{LNRW10})}
	\label{Alg:Nest}
\end{algorithm}
The algorithm of Lin et al.~\cite{LNRW10} is shown as Algorithm~\ref{Alg:Nest}. It computes a hierarchical sequence $\Cr=(\C^{(t)},\ldots, \C^{(1)})$ of clusterings as follows. Starting with $\C^{(t)}=\OPT_{|\PP|}$ the algorithm builds the $i$-th clustering $\C^{(i)}$ as nesting of $\C^{(i+1)}$ at an optimal clustering $\OPT_{n_i}$. This guarantees that the clusterings are hierarchically compatible. 
\begin{theorem}[\cite{LNRW10}]
	\label{thm:Lin}
	For $\cost\in\{\drad,\rad,\diam\}$, if the $(\gamma, \delta)$-nesting property holds for reals $\gamma\geq 1, \delta>0$ then Algorithm~\ref{Alg:Nest} computes a hierarchical clustering of $\PP$ with approximation guarantee $4\gamma\delta$.
\end{theorem}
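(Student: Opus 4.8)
The plan is to track a single cost invariant along the hierarchical sequence $\Cr=(\C^{(t)},\ldots,\C^{(1)})$ produced by Algorithm~\ref{Alg:Nest}, and then, for each target number of clusters $k$, to point to one member of $\Cr$ that is simultaneously small enough and cheap enough to certify the bound for $h(\Cr)$ at level~$k$. First I would record that $\Cr$ is indeed a hierarchical sequence: $\C^{(t)}=\OPT_{|\PP|}$ has $|\PP|=n$ clusters, consecutive clusterings differ by merges by construction of $\Aug$, and $\C^{(1)}$ is a single cluster because $\cost(\OPT_1)=\Delta$ lies in the top interval $((2\gamma)^{t-2},(2\gamma)^{t-1}]$ by the choice of~$t$, hence $n_1=1$. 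The core step is the invariant
\[\cost(\C^{(i)})\le 2\delta\,(2\gamma)^{t-i}\qquad\text{for all }i\in\{1,\ldots,t\},\]
proved by backward induction on~$i$. The base case $i=t$ holds since $\C^{(t)}$ consists of singletons, so $\cost(\C^{(t)})=0$. For the step, if $n_i$ exists then $\C^{(i)}=\Aug_\cost(\C^{(i+1)},\OPT_{n_i},\gamma,\delta)$, and the $(\gamma,\delta)$-nesting property (with the convention that $\Aug$ returns its first argument when that already has at most $|\OPT_{n_i}|$ clusters, so that the property is never invoked outside its regime) gives $\cost(\C^{(i)})\le\gamma\,\cost(\C^{(i+1)})+\delta\,\cost(\OPT_{n_i})$; using the induction hypothesis and $\cost(\OPT_{n_i})\le(2\gamma)^{t-i}$ this is $\gamma\cdot 2\delta(2\gamma)^{t-i-1}+\delta(2\gamma)^{t-i}=2\delta(2\gamma)^{t-i}$. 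The point is that $\gamma$ times the previous geometric level is exactly half the current one, so the leading constant obeys the recursion $c\mapsto c/2+\delta$ with fixed point $c=2\delta$. If no such $n_i$ exists, then $\C^{(i)}=\C^{(i+1)}$ and the bound only improves since $2\gamma\ge1$.

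Next I would convert this into the approximation guarantee. Fix $k\in\{1,\ldots,n\}$ and set $\mu=\cost(\OPT_k)$. If $\mu=0$, then $\OPT_k$ is the clustering into singletons, so $n\le k$, and already $\C^{(t)}=\OPT_{|\PP|}$ has at most $k$ clusters and cost $0$; hence $h(\Cr)$ assigns level~$k$ a clustering of cost $0=4\gamma\delta\cdot\cost(\OPT_k)$. If $\mu>0$, then --- using the input normalization $d(x,y)>2$ for $x,y\in\PP$, under which any cluster with two points has cost exceeding~$1$, and observing that $\OPT_k$ must contain such a cluster --- we get $\mu>1$; also $\mu\le\cost(\OPT_1)=\Delta\le(2\gamma)^{t-1}$. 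Thus $\mu$ lies in exactly one interval $((2\gamma)^{t-i^*-1},(2\gamma)^{t-i^*}]$ with $i^*\in\{1,\ldots,t-1\}$. Since $k$ itself witnesses an optimal clustering with cost in this interval, the index $n_{i^*}$ is defined and $n_{i^*}\le k$. Therefore $|\C^{(i^*)}|\le|\OPT_{n_{i^*}}|=n_{i^*}\le k$ (the same bound holds if $\Aug$ merely copied $\C^{(i^*+1)}$, since then $|\C^{(i^*)}|=|\C^{(i^*+1)}|\le|\OPT_{n_{i^*}}|$), while the invariant gives $\cost(\C^{(i^*)})\le 2\delta(2\gamma)^{t-i^*}<2\delta\cdot2\gamma\cdot\mu=4\gamma\delta\cdot\cost(\OPT_k)$ because $\mu>(2\gamma)^{t-i^*-1}$. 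Since $h(\Cr)$ assigns to level~$k$ a clustering of $\Cr$ of minimum cost among those with at most $k$ clusters, and $\C^{(i^*)}$ is such a clustering, the level-$k$ clustering of $h(\Cr)$ has cost at most $4\gamma\delta\cdot\cost(\OPT_k)$. That $h(\Cr)$ is a bona fide hierarchical clustering is part of Definition~\ref{def:hierarical_ext}; concretely, the $\C^{(i)}$ form a refinement chain whose number of clusters is non-increasing as~$i$ decreases, so the clusterings picked by $h$ at consecutive levels are hierarchically compatible, and breaking ties towards the smallest index makes the choice unambiguous.

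The main obstacle I anticipate is the size bookkeeping rather than the cost estimate: one has to match the index $n_{i^*}$ actually chosen by the algorithm against the externally given~$k$, which goes through the monotonicity of $j\mapsto\cost(\OPT_j)$, and one has to dispatch cleanly the borderline case in which $\C^{(i+1)}$ already has at most $|\OPT_{n_i}|$ clusters, so that the $(\gamma,\delta)$-nesting property is vacuous and $\Aug$ returns $\C^{(i+1)}$ unchanged --- in that case both the cost bound and the size bound only become easier. A secondary point is to note explicitly why the normalization $d(x,y)>2$ is harmless (any instance rescales to it) and why it is needed (it forces every nontrivial optimal clustering to have cost outside $(0,1]$, so the geometric scale $(2\gamma)^{t-i}$ of the levels begins exactly where the nontrivial costs begin).
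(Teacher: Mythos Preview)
The paper does not prove this theorem; it is quoted as a result of Lin et al.~\cite{LNRW10} without proof. Your argument is correct and is essentially the standard one: the geometric invariant $\cost(\C^{(i)})\le 2\delta\,(2\gamma)^{t-i}$, established by backward induction using the $(\gamma,\delta)$-nesting property, combined with locating $\cost(\OPT_k)$ in the appropriate interval $((2\gamma)^{t-i^*-1},(2\gamma)^{t-i^*}]$, yields exactly the $4\gamma\delta$ factor. One cosmetic slip: you write $|\OPT_{n_{i^*}}|=n_{i^*}$ where it should read $|\OPT_{n_{i^*}}|\le n_{i^*}$, but the chain of inequalities still goes through.
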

Gro\ss wendt~\cite{G20} proved the existence of such a nesting property for $\diam, \rad$, and $\drad$.
\begin{lemma}[\cite{G20}]
	\label{lemma:Gross}
		For $\cost\in\{\diam,\rad\}$ there exists a $(2,1)$-nesting and for $\cost=\drad$ there exists a $(1,1)$-nesting.  
\end{lemma}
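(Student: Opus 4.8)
The plan is to build, from clusterings $\C$ and $\D$ with $|\C|>|\D|$, a coarsening $\C'$ of $\C$ into at most $|\D|$ clusters by using $\D$ as a template: each cluster of $\C$ is assigned to a related cluster of $\D$, and all clusters of $\C$ assigned to the same cluster of $\D$ are merged into one cluster of $\C'$. This gives $|\C'|\le|\D|$ and makes $\C'$ a coarsening of $\C$ automatically, so only $\cost(\C')$ needs to be bounded.

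For $\cost\in\{\diam,\rad\}$ the $(2,1)$-bound should fall out directly. For each $C\in\C$ fix $D(C)\in\D$ with $C\cap D(C)\neq\emptyset$ and a witness point $q_C\in C\cap D(C)$, and merge clusters with equal $D(\cdot)$. For the diameter, two points $p\in C$ and $p'\in C'$ in a common merged cluster (so $D(C)=D(C')=:D$) satisfy $d(p,p')\le d(p,q_C)+d(q_C,q_{C'})+d(q_{C'},p')\le\diam(C)+\diam(D)+\diam(C')\le 2\diam(\C)+\diam(\D)$. For the radius, take the optimal center $e_D\in\X$ of $D$ as the center of the merged cluster of $D$; then $d(e_D,p)\le d(e_D,q_C)+d(q_C,p)\le\rad(D)+\diam(C)\le\rad(\D)+2\rad(\C)$, using $\diam(C)\le 2\rad(C)$. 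This is the claimed $(2,1)$-nesting; the factor $2$ in front of $\rad(\C)$ is precisely the cost of walking from a witness point of $C$ to an arbitrary point of $C$, which the discrete radius should let us save.

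For $\cost=\drad$ I would use that every center is now a point of $\PP$, so the discrete center $e_j$ of each $\D$-cluster $D_j$ lies in some cluster of $\C$; let $\hat C_j$ denote that cluster and $\hat c_j$ its discrete center. Assign each $C\in\C$ to the $\D$-cluster $D_{j(C)}$ that contains $C$'s own discrete center $c_C$, so that $d(c_C,e_{j(C)})\le\drad(D_{j(C)})\le\drad(\D)$, and merge accordingly. The key estimate is that the union $S_j$ of all clusters assigned to $D_j$ lies inside the ball of radius $\drad(\C)+\drad(\D)$ about $e_j$: for $p\in C$ with $j(C)=j$, $d(e_j,p)\le d(e_j,c_C)+d(c_C,p)\le\drad(\D)+\drad(C)\le\drad(\D)+\drad(\C)$. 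Consequently, whenever $e_j\in S_j$ --- that is, when $\hat C_j$ is itself assigned to $D_j$, equivalently $\hat c_j\in D_j$ --- we may use $e_j$ as the center of the merged cluster and get $\drad\le\drad(\C)+\drad(\D)$ for it.

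The hard part is the opposite case, where $\hat C_j$ has been assigned to a different $\D$-cluster, so $e_j\notin S_j$ and the obvious in-cluster center of $S_j$ only yields $\drad(\C)+2\drad(\D)$. My first attempt would be to pull $\hat C_j$ into the merged cluster of $D_j$: since $e_j\in\hat C_j$ gives $d(e_j,\hat c_j)\le\drad(\hat C_j)\le\drad(\C)$ we have $\hat C_j\subseteq B\bigl(e_j,2\drad(\C)\bigr)$, so $e_j$ stays a valid center and the radius of the enlarged merged cluster is at most $\max\{\drad(\C)+\drad(\D),\,2\drad(\C)\}$, which is $\le\drad(\C)+\drad(\D)$ as soon as $\drad(\C)\le\drad(\D)$. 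What remains, and where I expect the real work to be, is the regime $\drad(\D)<\drad(\C)$ together with the bookkeeping when a single cluster of $\C$ contains the discrete centers of several $\D$-clusters --- then these pull-ins compete for the same cluster, although one also has the corresponding number of ``spare'' merged clusters available. I would close this with a short case distinction on whether $\drad(\C)\le\drad(\D)$ and a careful count of how many merged clusters are produced.
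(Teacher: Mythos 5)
Your $(2,1)$-nesting argument for $\diam$ and $\rad$ is correct and complete: assigning each $C\in\C$ to a cluster $D(C)\in\D$ it intersects and merging within each $D$ gives $\diam(\C')\le 2\diam(\C)+\diam(\D)$, and with the optimal center of $D$ as center of the merged cluster, $\rad(\C')\le 2\rad(\C)+\rad(\D)$. Note that the paper itself does not prove this lemma but quotes it from Gro{\ss}wendt~\cite{G20}, so only the internal correctness of your argument can be assessed, and for these two objectives it is fine.

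For the $(1,1)$-nesting for $\drad$, however, the part you defer is the actual content of the lemma, and it is not a ``short case distinction''. In the regime $\drad(\D)<\drad(\C)$ both estimates your construction produces overshoot the target $\drad(\C)+\drad(\D)$: pulling the cluster $\hat C_j\ni e_j$ into the group of $D_j$ only yields the bound $2\drad(\C)$, and serving a hub-less group from one of its members' centers only yields $\drad(\C)+2\drad(\D)$. Worse, the conflict case you mention defeats the pull-in idea even when $\drad(\C)\le\drad(\D)$: if a single cluster $A\in\C$ contains the centers $e_j$ and $e_{j'}$ of two different $\D$-clusters and is pulled into the group of $D_j$, then the (possibly nonempty) group of $D_{j'}$ has no admissible center --- absorbing it into the group of $D_j$ costs an additional $d(e_j,e_{j'})$, which you can only bound by $2\drad(\C)$; keeping it separate with a member's center costs $2\drad(\D)$; and dissolving it into singletons consumes cluster slots, of which there need not be any to spare (consider one cluster of $\C$ containing all centers of $\D$ while the hub-less clusters are spread over all the $D_j$). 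So ``assign each $C$ to the $\D$-cluster containing $c_C$, then patch locally and count'' cannot be completed as sketched: to reach $\drad(\C)+\drad(\D)$ one must choose the groups and their in-group centers globally so that every group that is actually formed contains a point within $\drad(\D)$ of all of its members' centers, and this global selection argument is exactly what is missing from your proposal.
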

In combination with Theorem~\ref{thm:Lin} this yields $\rho_{\drad}\leq 4$. However, for the other two objectives we obtain an upper bound of only $8$. We improve Algorithm~\ref{Alg:Nest} to obtain the claimed upper bound of $3+2\sqrt{2}$. 

In the definition of the $(\gamma, \delta)$-nesting property we require a nesting of $\C$ at $\D$ for arbitrary clusterings $\C,\D$ with $|\C|>|\D|$. However, in Algorithm~\ref{Alg:Nest} we know more about the structure of $\C$. This clustering is obtained by repeatedly nesting at optimal clusterings of increasing cost. In Algorithm~\ref{Alg:NestNew} we define a nesting subroutine for this type of clusterings that eventually leads to a better approximation-guarantee.

\begin{algorithm}[t]
	\SetKwInOut{Input}{Input}
	\SetKwInOut{Output}{Output}
	\DontPrintSemicolon
	\SetKwFor{ForAll}{for all}{}{}
	\BlankLine
	
	\Input{Step size $\alpha>1$. Clustering instance $(\X,\PP,d)$, with $d(x,y)>2$ for all $x,y\in \PP$, optimal clusterings $\OPT_{|\PP|},\ldots, \OPT_1$ of $\PP$ with respect to $\cost$}
	\Output{A hierarchical clustering of $\PP$}
	
	Set $\Delta=\cost(\OPT_1), t=\lceil\log_{\alpha}(\Delta)\rceil+1$ and  $\C^{(t)}=\OPT_{|\PP|}$\;
	For all $C\in \C^{(t)}$ we set $\parent_t(C)=C$\;
	\For{$i=t-1$ \KwTo $1$}
	{
		Let $1\leq n_i\leq |\PP|$ be the smallest number such that $\cost(\OPT_{n_i})\in(\alpha^{t-i-1}, \alpha^{t-i}]$\;
		\If{\textup{such a number exists}}{
		For $C\in \C^{(i+1)}$ let $O\in \OPT_{n_i}$ be a cluster with $\parent_{i+1}(C)\cap O\neq \emptyset$ and set $\nest_i(C)=O$\;
		Set $\C^{(i)}=\{\bigcup_{C\in \nest_i^{(-1)}(O)}C\mid O\in\OPT_{n_i}\}$\;
		Set $\parent_i(\bigcup_{C\in Nest_i^{(-1)}(O)}C)=O$ for all $O\in\OPT_{n_i}$\;
		}
		\Else{set $\C^{(i)}=\C^{(i+1)}, \parent_{i}=\parent_{i+1}$\;}
	}
	\Return $h((\C^{(t)},\ldots, \C^{(1)}))$\; 	
	\caption{}
	\label{Alg:NestNew}
\end{algorithm}
The main difference between Algorithm~\ref{Alg:Nest} and Algorithm~\ref{Alg:NestNew} is the replacement of the function $\Aug_\cost(\C_{i+1}, \OPT_{n_i},\gamma,\delta)$, which computes the nesting of $\C^{(i+1)}$ at $\OPT_{n_i}$, by a more explicit approach to compute such a nesting. 
We use the fact that $\C^{(i+1)}$ is obtained by a nesting at $\OPT_{n_{i+1}}$. This is reflected in the function $\parent_{i+1}$ which assigns every cluster in $C^{(i+1)}$ a cluster from $\OPT_{n_{i+1}}$. 
In iteration $i$ we then use the $(i+1)$-st parent function to determine which clusters of $\C^{(i+1)}$ will be merged to obtain $\C^{(i)}$. We are allowed to merge clusters $C,D\in\C^{(i+1)}$ if there is a cluster $O\in \OPT_{n_i}$ which has a non-empty intersection with both, $\parent_{i+1}(C)$ and $\parent_{i+1}(D)$. The parent of the merged cluster in $\C^{(i)}$ is then set to $O$.
\begin{restatable}{lemma}{LemmaUpperBound}
	\label{lemma:upper_bound}
	For $\cost\in\{\diam,\rad\}$ and any $\alpha>1$ Algorithm~\ref{Alg:NestNew} computes a hierarchical clustering with approximation guarantee $\alpha\big(\frac{2}{\alpha-1}+1\big)$.
\end{restatable}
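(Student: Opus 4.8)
The plan is to analyze the sequence $\Cr=(\C^{(t)},\dots,\C^{(1)})$ that Algorithm~\ref{Alg:NestNew} constructs and to bound the cost of the hierarchical clustering $h(\Cr)$ at every level. That $\Cr$ is a hierarchical sequence in the sense of Definition~\ref{def:hierarical_ext} is routine: $|\C^{(t)}|=|\OPT_{|\PP|}|=n$; each $\C^{(i-1)}$ either equals $\C^{(i)}$ or is obtained from it by the merging step; and because $t=\lceil\log_\alpha\Delta\rceil+1$ the value $\Delta=\cost(\OPT_1)$ falls into the interval $(\alpha^{t-2},\alpha^{t-1}]$ scanned in the last iteration, so that iteration runs with $n_1=1$ and $|\C^{(1)}|\le 1$. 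Hence $h(\Cr)$ is a well defined hierarchical clustering of $\PP$. (Here and below we may assume $|\PP|\ge 2$; the case $|\PP|=1$ is trivial.)

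It is convenient to forget the iterations in which no $n_i$ exists: let $t=i_0>i_1>\dots>i_s\ge 1$ list $t$ together with the remaining \emph{active} indices, and put $\D^{(m)}:=\C^{(i_m)}$, $\OPT^{(0)}:=\OPT_{|\PP|}$, $\OPT^{(m)}:=\OPT_{n_{i_m}}$ for $m\ge 1$, $R_m:=\cost(\OPT^{(m)})$ and $e_m:=t-i_m$. Since nothing changes between consecutive active indices, $\D^{(m)}$ is exactly the nesting of $\D^{(m-1)}$ built in iteration $i_m$, the map $\parent_{i_m}$ sends $\D^{(m)}$ into $\OPT^{(m)}$, and $\parent_{i_{m-1}}$ sends $\D^{(m-1)}$ into $\OPT^{(m-1)}$. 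The $e_m$ are strictly increasing integers, $R_0=0$, and $R_m\in(\alpha^{e_m-1},\alpha^{e_m}]$.

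The core of the proof is the inequality $\cost(\D^{(m)})\le 2\sum_{j=1}^{m-1}R_j+R_m$, proved by induction on $m$. The crucial trick is to bound, instead of $\cost(\D^{(m)})$ itself, the quantity
\[
\phi_m:=\max_{C\in\D^{(m)}}\max_{p\in C}\min_{o\in\parent_{i_m}(C)} d(p,o),
\]
that is, how far the points of a cluster can lie from the optimal cluster attached to it. For $C\in\D^{(m)}$ with $O=\parent_{i_m}(C)$, every $C'\in\D^{(m-1)}$ merged into $C$ has $O'=\parent_{i_{m-1}}(C')$ with $O'\cap O\ne\emptyset$, say $q\in O'\cap O$; by the inductive hypothesis a point of $C'$ is within $\phi_{m-1}$ of some $r\in O'$, and $d(r,q)\le\diam(O')$, so that point is within $\phi_{m-1}+\diam(O')$ of $q\in O$. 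As $\diam(O')\le 2R_{m-1}$ when $\cost=\rad$ and $\diam(O')\le R_{m-1}$ when $\cost=\diam$, we obtain the \emph{additive} recursion $\phi_m\le\phi_{m-1}+2R_{m-1}$, respectively $\phi_m\le\phi_{m-1}+R_{m-1}$, with $\phi_0=0$ since $\D^{(0)}$ is the all-singletons clustering; hence $\phi_m\le 2\sum_{j=1}^{m-1}R_j$, respectively $\phi_m\le\sum_{j=1}^{m-1}R_j$. Finally $\rad(C)\le\phi_m+\rad(O)\le\phi_m+R_m$ and $\diam(C)\le 2\phi_m+\diam(O)\le 2\phi_m+R_m$, and in both cases the right-hand side is at most $2\sum_{j=1}^{m-1}R_j+R_m$. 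I expect this choice of potential to be the main obstacle: bounding $\cost(\D^{(m)})$ directly yields a recursion whose homogeneous coefficient is $2$ rather than $1$, which diverges for $\alpha\le 2$ and even for $\alpha=1+\sqrt2$ only gives the weaker guarantee $7+5\sqrt2$.

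To finish, since the $e_j$ are strictly increasing integers we have $R_j\le\alpha^{e_j}\le\alpha^{e_m-(m-j)}$ for $1\le j\le m$, hence
\[
\cost(\D^{(m)})\le 2\sum_{j=1}^{m-1}R_j+R_m\le \alpha^{e_m}\Big(2\sum_{l\ge1}\alpha^{-l}+1\Big)=\alpha^{e_m}\Big(\tfrac{2}{\alpha-1}+1\Big).
\]
Now fix $k$. If $k=|\PP|$ then $\C^{(t)}=\OPT_{|\PP|}$ has size $k$ and cost $0$, so $h(\Cr)$ is optimal at level $k$. Otherwise $\cost(\OPT_k)\in(1,\Delta]$, because the assumption $d(x,y)>2$ forces every non-singleton cluster to have radius and diameter larger than $1$; thus $\cost(\OPT_k)\in(\alpha^{\ell-1},\alpha^{\ell}]$ for some $\ell\in\{1,\dots,t-1\}$, the iteration $i=t-\ell$ is active (the number $k$ witnesses that $n_i$ exists, so $n_i\le k$), and $i=i_m$ for some $m$ with $e_m=\ell$. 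Then $\D^{(m)}$ occurs in $\Cr$ with $|\D^{(m)}|\le|\OPT^{(m)}|\le n_{i_m}\le k$, so the clustering that $h(\Cr)$ assigns to level $k$ has cost at most
\[
\cost(\D^{(m)})\le \alpha^{e_m}\Big(\tfrac{2}{\alpha-1}+1\Big)< \alpha\cdot\cost(\OPT_k)\cdot\Big(\tfrac{2}{\alpha-1}+1\Big),
\]
using $\alpha^{e_m}=\alpha^{\ell}<\alpha\cdot\cost(\OPT_k)$. Since $k$ was arbitrary, $h(\Cr)$ is an $\alpha\big(\tfrac2{\alpha-1}+1\big)$-approximation.
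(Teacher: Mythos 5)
Your proof is correct and takes essentially the same route as the paper's: an induction along the sequence of clusterings that bounds how far the points of each cluster can stray from its assigned parent cluster in the corresponding optimal clustering, followed by a geometric-series summation and the scale argument $\cost(\OPT_k)>\alpha^{\ell-1}$. Your single potential $\phi_m$ (min-distance to the parent set, tracked against the actual optimal costs $R_j$ at the active iterations) is only a mild repackaging of the paper's two parallel inductive claims for diameter and radius, which are phrased directly in powers of $\alpha$.
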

\begin{proof} Let $n$ denote the cardinality of $\PP$.
	Notice first that $(\C^{(t)},\ldots, \C^{(1)})$ is indeed a hierarchical sequence. The first property of a hierarchical sequence is satisfied: We define $\C^{(t)}=\OPT_{n}$ and since $\cost(\OPT_1)=\Delta \in (\alpha^{t-2},\alpha^{t-1}]$ we obtain $|\C^{(1)}|\leq n_1=1$. The second property is satisfied since $\C^{(i)}$ either equals $\C^{(i+1)}$ or is obtained by merging clusters from $\C^{(i+1)}$.
	Thus Algorithm~\ref{Alg:NestNew} indeed computes a hierarchical clustering.
	
	\textbf{Diameter ($\cost=\diam$):}
	Let $1\leq i\leq t$. We claim
	\begin{enumerate}
		\item for every cluster $C\in \C^{(i)}$ and every point $p\in \parent_i(C)$ that
		$\max_{q\in C}d(p,q)\leq \sum_{l=1}^{t-i}\alpha^l,$
		\item that $\diam(\C^{(i)})\leq \alpha^{t-i}+2\sum_{l=1}^{t-i-1}\alpha^l$.
	\end{enumerate} 
	We prove this by induction over $i$, starting with $i=t$ in decreasing order. Observe that $\C^{(t)}$ consists only of clusters of size one so these claims are true for $i=t$.
	
	Let $1\leq i\leq t-1$. If $\C^{(i)}=\C^{(i+1)}$ both claims are true by induction hypothesis. Thus we assume from now on that $\C^{(i)}\neq\C^{(i+1)}$. For the first claim, we fix a cluster $C\in C^{(i)}$ and two points $p\in \parent_i(C)$ and $q\in C$. Let $D\in \C^{(i+1)}$ be the cluster which contains $q$. Since $\C^{(i)}$ is obtained by merging clusters from $\C^{(i+1)}$, we know that $D\subset C$ and thus $\parent_{i+1}(D)\cap \parent_i(C)\neq \emptyset.$ Let $x\in \parent_{i+1}(D)\cap \parent_i(C)$. By the induction hypothesis 
	\[d(x,q)\leq \max_{y\in D}d(x,y)\leq\sum_{l=1}^{t-i-1}\alpha^l.\]
	Since $p$ and $x$ lie both in $\parent_i(C)$ we obtain $d(p,x)\leq \diam(\OPT_{n_i})\leq \alpha^{t-i}$. Using the triangle inequality we conclude
	\[d(p,q)\leq d(p,x)+d(x,q)\leq \sum_{l=1}^{t-i}\alpha^l.\]
	
	For the second claim we again fix a cluster $C\in\C^{(i)}$ and two points $p,q\in C$. Let $B,D\in \C^{(i+1)}$ such that $p\in B$ and $q\in D$. Observe that $B\cup D\subset C$ and thus $\parent_{i+1}(B)\cap \parent_i(C)\neq\emptyset\neq \parent_{i+1}(D)\cap \parent_i(C)$. Let $x_p\in \parent_{i+1}(B)\cap \parent_i(C)$ and $x_q\in \parent_{i+1}(D)\cap \parent_i(C)$. 
	Since $x_p$ and $x_q$ lie both in $\parent_i(C)$ we obtain $d(x_p,x_q)\leq \diam(\OPT_{n_i})\leq \alpha^{t-i}.$
	We apply the triangle inequality and the induction hypothesis to obtain
	\[d(p,q)\leq d(p,x_p)+d(x_p,x_q)+d(x_q,q)\leq \alpha^{t-i}+2\sum_{l=1}^{t-i-1}\alpha^l.\] 
	
	\textbf{Radius ($\cost=\rad$):}
	Let $1\leq i\leq t$. We claim that for every cluster $C\in \C^{(i)}$ and the center $c$ of cluster $\parent_i(C)$ holds $\max_{q\in C}d(c,q)\leq \alpha^{t-i}+2\sum_{l=1}^{t-i-1}\alpha^l$. Notice that this immediately implies \[\rad(\C^{(i)})\leq\alpha^{t-i}+2\sum_{l=1}^{t-i-1}\alpha^l.\]
	
	We prove this by induction over $i$. Observe that $\C^{(t)}$ consists only of clusters of size one. So this claim is true for $i=t$. Let $1\leq i\leq t-1$. If $\C^{(i)}=\C^{(i+1)}$ the claim is true by induction hypothesis. Thus we assume from now on that $\C^{(i)}\neq\C^{(i+1)}$. We fix a cluster $C\in C^{(i)}$ a point $q\in C$ and denote by $c$ the center of $\parent_i(C)$. Let $D\in \C^{(i+1)}$ be the cluster which contains $q$. Since $\C^{(i)}$ is obtained by merging clusters from $\C^{(i+1)}$, we know that $D\subset C$ and thus $\parent_{i+1}(D)\cap \parent_i(C)\neq \emptyset.$ Let $x\in \parent_{i+1}(D)\cap \parent_i(C)$. By induction hypothesis the following holds for the center $d$ of $\parent_{i+1}(D)$ 
	\[\max_{v\in D}d(d,v)\leq \alpha^{t-i-1}+2\sum_{l=1}^{t-i-2}\alpha^l.\]
	Together with the triangle inequality this implies
	\[d(x,q)\leq d(x,d)+d(d,q)\leq \rad(\OPT_{n_{i+1}})+\alpha^{t-i-1}+2\sum_{l=1}^{t-i-2}\alpha^l\leq 2\sum_{l=1}^{t-i-1}\alpha^l.\]
	This yields the claim, as 
	\[d(c,q)\leq d(c,x)+d(x,q)\leq \rad(\OPT_{n_i})+2\sum_{l=1}^{t-i-1}\alpha^l\leq \alpha^{t-i}+2\sum_{l=1}^{t-i-1}\alpha^l.\]
	
	Finally we can bound the approximation factor for both radius and diameter. Let $\cost\in\{\diam, \rad\}$.  Since $d(x,y)>2$ for all $x,y\in \PP$ we get that $\cost(\OPT_{n-1})>1$. Thus for every $1\leq m<n$ there is $1\leq i\leq t-1$ such that $\cost(\OPT_m)\in (\alpha^{t-i-1},\alpha^{t-i}]$. 
	Thus the clustering $h((\C^{(t)},\ldots,\C^{(1)}))$ is an $\alpha\Big(\frac{2}{\alpha-1}+1\Big)$-approximation iff for all $1\leq i\leq t$ 
	\[\cost(\C^{(i)})\leq \alpha\Big (\frac{2}{\alpha-1}+1\Big )\cost(\OPT)\] for all optimal clusterings $\OPT$ with $\cost(\OPT)\in(\alpha^{t-i-1},\alpha^{t-i}]$. We obtain
	\[\cost(\C^{(i)})\leq \alpha^{t-i}+2\sum_{l=1}^{t-i-1}\alpha^l< \alpha^{t-i}+2\cdot\frac{\alpha^{t-i}}{\alpha-1}=\alpha^{t-i}\Big (\frac{2}{\alpha-1}+1\Big )\leq \alpha\Big (\frac{2}{\alpha-1}+1\Big)\cost(\OPT).\]
\end{proof}
\ThmUpperBound*
\begin{proof}
	Let $(\X,\PP,d)$ be a clustering instance. We can assume without loss of generality that $d(x,y)>2$ for all $x,y\in \PP$, otherwise we scale the metric $d$ accordingly. 
	For $\cost\in\{\diam,\rad\}$ we then use Algorithm~\ref{Alg:NestNew} with $\alpha=1+\sqrt{2}$ to compute a hierarchical clustering.  
	By Lemma~\ref{lemma:upper_bound} we obtain a hierarchical clustering that is an $3+2\sqrt{2}$ approximation and thus $\rho_\cost\leq 3+2\sqrt{2}$.
\end{proof}

\section{A Lower Bound on the Price of Hierarchy}
\label{chap:lower_bound}
The most challenging contributions of this article are matching lower bounds on the price of hierarchy for diameter, radius, and discrete radius. 
\begin{restatable}{theorem}{ThmLowerBound}
	\label{thm:lower_bound}
	For $\cost\in\{\diam,\rad\}$ we have $\rho_\cost\geq 3+2\sqrt{2}$ and for $\cost=\drad$ we have $\rho_\cost\geq 4$.
\end{restatable}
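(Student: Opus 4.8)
\subparagraph*{Proof proposal.}

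The plan is to exhibit, for every $\epsilon>0$, an explicit instance $(\X,\PP,d)$ on which no hierarchical clustering is better than a $(3+2\sqrt2-\epsilon)$-approximation for $\diam$ and $\rad$, and no hierarchical clustering is better than a $(4-\epsilon)$-approximation for $\drad$. The design is guided by the tightness of Lemma~\ref{lemma:upper_bound}: there the worst case is the geometric sum $\alpha^{t-i}+2\sum_{l=1}^{t-i-1}\alpha^l$ and the optimal step size is $\alpha=1+\sqrt2$ (and, once one of the two displacement factors disappears, $\alpha=2$ — mirroring the $(1,1)$- versus $(2,1)$-nesting dichotomy of Lemma~\ref{lemma:Gross}). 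Accordingly I would build a \emph{recursive, self-similar} family over $L$ scales of distances $1,\alpha,\alpha^2,\dots,\alpha^{L-1}$, with $L$ chosen large enough that all lower-order corrections are absorbed into $\epsilon$. Concretely, $\PP$ is a union of $L$ nested families of ``blocks'': a level-$\ell$ block is assembled from a constant number of level-$(\ell+1)$ blocks and has diameter $\Theta(\alpha^{\ell})$, with the base level consisting of atoms at mutual distance $\approx 1$. For a suitable decreasing sequence of sizes $k_1>k_2>\dots>k_L=1$, the instance has a distinguished ``target'' $k_\ell$-clustering $T_\ell=\OPT_{k_\ell}$ — the partition into level-$\ell$ blocks — of cost $\cost(T_\ell)\approx\alpha^{\ell-1}$, i.e.\ sitting just above the bucket boundary used in the geometric bucketing of Algorithm~\ref{Alg:NestNew}.

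The decisive feature of the construction is that consecutive targets are \emph{not} hierarchically compatible, and in a robust, ``rotational'' way: the level-$\ell$ blocks are placed so that $T_\ell$ arises from a coarsening of $T_{\ell+1}$ only by a cyclic shift of block boundaries, so that \emph{no} $k_\ell$-clustering refining $T_{\ell+1}$ can have cost anywhere near $\cost(T_\ell)$. This forces a dilemma on any hierarchical clustering $\Hr=(\Ht_n,\dots,\Ht_1)$: matching $T_{\ell+1}$ well at size $k_{\ell+1}$ makes it impossible to match $T_\ell$ well at size $k_\ell$, and conversely; and — this is the point — the penalties from the $L$ scales cannot be paid off against one another but \emph{accumulate geometrically} with ratio $\approx\alpha$.

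The heart of the proof, and the main obstacle, is to turn this intuition into a statement that holds for \emph{every} hierarchical clustering, i.e.\ for every sequence of merges. I would set up an amortized / potential argument and induct over scales from fine to coarse: writing $\A_\ell:=\Ht_{k_\ell}$, I would show that, unless some $\A_\ell$ already fails to be a $(\rho-\epsilon)$-approximation of $T_\ell$ (in which case we are done), the clusterings $\A_L,\A_{L-1},\dots$ are forced into an increasingly rigid relationship with the block structure: each cluster of $\A_\ell$ lies inside a controlled neighborhood of a level-$\ell$ block and is ``off-center'' by a displacement that provably grows by a factor $\approx\alpha$ from one scale to the next, because passing from $\A_{\ell+1}$ to $\A_\ell$ through the forbidden rotation compels the hierarchy to chain together blocks that are each already displaced. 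Tracking this invariant yields $\cost(\A_\ell)\gtrsim\alpha^{\ell}+2\sum_{j=1}^{\ell-1}\alpha^{j}$ at some scale $\ell$ growing with $L$, while $\cost(T_\ell)\approx\alpha^{\ell-1}$; the ratio then tends to $\alpha\bigl(1+\tfrac{2}{\alpha-1}\bigr)=3+2\sqrt2$ as $L\to\infty$ for $\alpha=1+\sqrt2$. Since the instance can be taken ``balanced'' (every block has a center), the same family realizes the bound for $\rad$ as well. The discrete-radius bound follows from the same template with $\alpha=2$: because the center of a cluster must be a data point, the adversary can only force one displacement term per scale instead of two, so the ratio one can force is $\alpha\bigl(1+\tfrac{1}{\alpha-1}\bigr)$, minimized — hence best for the lower bound — at $\alpha=2$, giving $4$.

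The difficulty in the induction is genuine and is exactly the ``careful analysis of all possible hierarchical clusterings'' the introduction warns about: the adversary does not control which pairs of blocks $\Hr$ merges at each of its $n-1$ steps, so one must argue that \emph{every} admissible merge either creates an immediately too-costly cluster or faithfully propagates the displacement invariant, and this branching must be controlled uniformly across all $L$ scales. A secondary but non-negligible task is to verify, scale by scale, that the intended target $T_\ell$ really is (essentially) optimal among all $k_\ell$-clusterings and that the ``forbidden rotation'' cannot be circumvented by a cleverly chosen $k_\ell$-clustering — which is where the precise distances in the recursive gadget must be pinned down.
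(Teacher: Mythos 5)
Your proposal reproduces the intuition behind the bound (the worst case of the nesting recursion, step ratio $\alpha=1+\sqrt2$ for $\diam,\rad$ and $\alpha=2$ for $\drad$), but as it stands it is a plan rather than a proof, and the two places you leave open are exactly where all the difficulty sits. You neither specify the instance (the ``rotational'' gadget and its distances are postponed), nor carry out the claim that \emph{every} hierarchical clustering must accumulate a displacement at (essentially) every scale; you acknowledge both gaps yourself. The paper's proof shows how much machinery this actually takes: the instance is a recursive product/hypergraph construction in which, for \emph{every} integer $\ell\in\{1,\dots,k\}$, the partition $\A_\ell$ into hyperedges is an optimal clustering of cost exactly $\ell$ (resp.\ $\ell/2$ for $\rad$); the analysis then defines \emph{bad clusters} and their \emph{kernels}, proves by a pigeonhole/counting argument (Lemma~\ref{lem2:hierarchical_number_bad}) that bad clusters must survive at every time step, attaches to each bad cluster an \emph{anchor set} and proves via the shortest-path structure (Lemmas~\ref{lem2:distance_comp}, \ref{lem2:radius_bad}, \ref{lem2:diameter_bad}) that its diameter/radius is at least (roughly) the sum of its anchors, and finally — because the hierarchy, not the adversary, decides at which levels it gets anchored — reduces the theorem to a purely numerical statement about \emph{arbitrary} anchor sequences (Lemma~\ref{lem2:number_sequence}), proved by a compactness argument plus a linear recurrence whose characteristic roots are complex. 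Your ``amortized displacement invariant'' is a stand-in for this entire chain, and asserting that the penalties ``cannot be paid off against one another'' is precisely the statement that needs proof.

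There is also a concrete design flaw in the sketch as written. You fix the scales of the construction geometrically at $1,\alpha,\dots,\alpha^{L-1}$ and compare the hierarchy only against targets $T_\ell$ at those scales. But the adversary cannot dictate where the hierarchy pays: the hierarchy may arrange to incur its expensive merges at sizes where your nearest available optimum already has cost a factor $\approx\alpha$ larger, and then the ratio you can certify degrades by that factor (roughly $(3+2\sqrt2)/\alpha$ instead of $3+2\sqrt2$). This is why the paper's instance provides optimal clusterings of every integer cost $1,\dots,k$, so that a bad cluster anchored just above its previous anchor $\ell_{t-1}$ can be compared against an optimum of cost $\ell_{t-1}+1$, and why the final sequence lemma must quantify over all nondecreasing anchor sequences rather than over a fixed geometric progression. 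Without an argument forcing the hierarchy to align with your $L$ scales (or a denser family of near-optimal levels plus the corresponding trade-off analysis), the construction you describe does not yield the claimed constants, and the discrete-radius case inherits the same gap.
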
    
There is already existing work in this area by Das and Kenyon-Mathieu~\cite{DM09} for the diameter and Gro\ss wendt~\cite{G20} for the radius. Both show a lower bound of $2$ for the respective objective. To improve upon these results we have to construct much more complex instances which differ significantly from those in \cite{DM09,G20}.

For every $\epsilon>0$ we will construct a clustering instance $(\X,\PP,d)$ such that for any hierarchical clustering $\Hr=(\Ht_{|\PP|},\ldots, \Ht_1)$ of $\PP$ there is $1\leq i\leq |\PP|$ such that $\cost(\Ht_i)\geq \alpha\cdot\cost(\OPT_i)$, where $\OPT_i$ is an optimal $i$-clustering of $\PP$ with respect to $\cost$ and $\alpha=(3+2\sqrt{2}-\epsilon)$ for $\cost\in\{\diam,\rad\}$ and $\alpha=4-\epsilon$ for $\cost=\drad$.

The proof is divided in three parts. First we introduce the clustering instance $(\X,\PP,d)$ and determine its optimal clusterings. In the second part we develop the notion of a \emph{bad} cluster. We prove that any hierarchical clustering contains such bad clusters and develop a lower bound on their cost. In the third part we compare the lower bound to the cost of optimal clusterings and prove Theorem~\ref{thm:lower_bound}. 

\subsection{Definition of the Clustering Instance}
For $n\in\mathbb N$ we denote by $[n]$ the set of numbers from $1$ to $n$. 

Let $k\in\mathbb N$ and $\Gamma=k+1$. For $0\leq \ell\leq k$ we define point sets $\Q_\ell$ and $\PP_\ell$ as follows
\begin{enumerate}
	\item For $\ell=0$ let $\PP_0=\Q_0=[1]$ and denote by $N_0$ the cardinality of $\PP_0$.
	\item For $\ell>0$ let $\Q_{\ell}=[\Gamma\cdot N_{\ell-1}]^{N_{\ell-1}}$ and $\PP_\ell=\prod_{i=0}^{\ell} \Q_i$. Furthermore set  $N_\ell=|\PP_\ell|$.
\end{enumerate}
Moreover let $\phi_\ell\colon \PP_\ell\rightarrow \big[N_\ell]$ be a bijection for $0\leq \ell\leq k$.

We refer to a point $X\in \PP_k$ as a matrix with $k+1$ rows and $N_{\ell-1}$ entries in the $\ell$-th row. Thus we write
\[X=(x_{01}\mid\ldots\mid x_{\ell1},\ldots, x_{\ell N_{\ell-1}}\mid\ldots\mid x_{k1},\ldots, x_{kN_{k-1}}).\] Let $X_{\ell}=(x_{\ell 1},\ldots, x_{\ell N_{\ell-1}})\in \Q_{\ell}$ for $0\leq \ell\leq k$. For a shorter representation we can replace the $\ell$-th row directly by $X_\ell$ and for $0\leq i\leq j\leq k $ we can replace the $i$-th up to $j$-th row by $X_{[i:j]}=(X_i\mid\ldots\mid X_j)$.
 
Let $X\in \PP_k$ and $ 1\leq \ell\leq k$. Notice that $X_{[0:\ell-1]}\in\PP_{\ell-1}$ and let $m=\phi_{\ell-1}(X_{[0:\ell-1]})$, we define 
\begin{align*}
A^X_\ell=\{(X_{[0: \ell-1]}\mid x_{\ell1},\ldots,x_{\ell m-1},\star, x_{\ell m+1}, \ldots, x_{\ell N_{\ell-1}}\mid X_{[\ell+1: k]})\mid \star\in[\Gamma \cdot N_{\ell-1}]\}.
\end{align*}
Thus all coordinates of points in $A^X_{\ell}$ are fixed and agree with those of $X$ except one which is variable. Here $X_{[0:\ell-1]}$ serves as prefix which indicates through $\phi_{\ell-1}$ which coordinate of $X_\ell$ can be changed. 

We define $\A_{\ell}=\{A_\ell^X\mid X\in \PP_{k}\}$ as the set containing all subsets of this form. It is clear that $\A_\ell$ is a partition of $\PP_k$ and that it contains only sets of size $\Gamma \cdot N_{\ell-1}$. Furthermore we set $\A_0=\{\{X\}\mid X\in \PP_k\}$. 

Let $G=(V, E, w)$ denote the weighted hyper-graph with $V=\PP_k$ and $E=\bigcup_{i=1}^k\A_i$. The weight of a hyper-edge $e \in E$ is set to $\ell$ iff $e\in\A_\ell$. For $0\leq \ell\leq k$, the sub-graph $G_{\ell}=(V_\ell, E_\ell, w_{\ell})$ is given by $V_{\ell}=\PP_k, E_{\ell}=\bigcup_{i=0}^\ell \A_i$ and $w_{\ell}=w_{|E_\ell}$.

We extend $G$ to a hyper-graph $H=(V',E',w')$ as follows. Let $V'=V\cup \bigcup_{i=0}^{k}\{v_A\mid A\in\A_i\}$ and $E'=E\cup\bigcup_{i=0}^{k}\{\{v,v_A\}\mid A\in\A_i, v\in A\}$. Thus $H$ contains one vertex for every $A\in \bigcup_{i=0}^k \A_i$ and this vertex is connected by edges to every vertex $v\in A$. For $e\in E$ we set $w'(e)=w(e)$ and for $e=\{v,v_A\}$ for some $A\in \A_\ell$ and $v\in A$ we set $w'(e)=\ell/2$.  

The clustering instance $(\X,\PP,d)$ is given by $\X=V', \PP=V$, and $d$ as the shortest path metric on $H$. Observe that the extension of~$G$ to~$H$ is only necessary for the lower bound for the radius but not for the diameter and the discrete radius. This is because the additional points~$V'\setminus V$ do not belong to $\PP$ and are hence irrelevant for the clustering instance for the diameter and discrete radius. In the lower bound for the radius they will be used as centers, however.

\begin{lemma}
	\label{lemma_metric}
	Let $p,q\in V$, then $d(p,q)$ is the length of a shortest path between $p$ and $q$ in $G$.
\end{lemma}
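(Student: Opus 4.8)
The claim is that for two "original" points $p,q\in V=\PP_k$, the shortest path in the extended graph $H$ never gains anything from routing through one of the auxiliary vertices $v_A$; equivalently the $H$-distance between $p$ and $q$ equals their $G$-distance. The plan is to show that any $p$--$q$ walk in $H$ that uses auxiliary vertices can be transformed into a walk in $G$ of no greater total weight. Since the auxiliary vertices have degree only towards the set $A$ they represent, every visit to $v_A$ on a path must enter from some $u\in A$ and leave to some $u'\in A$ (or $u=u'$, in which case the detour can simply be deleted, strictly shortening the path). So it suffices to compare, for $u,u'\in A$ with $A\in\A_\ell$, the cost $w'(\{u,v_A\})+w'(\{v_A,u'\}) = \ell/2+\ell/2 = \ell$ of going through $v_A$ against the cost of a direct route inside $G$.

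The key observation is that $A\in\A_\ell$ is itself a hyper-edge of $G$ of weight $\ell$ (for $\ell\ge 1$; the case $\ell=0$ is trivial since then $A$ is a singleton and $u=u'$). A hyper-edge $e$ of weight $\ell$ in the shortest-path metric on a weighted hypergraph connects every pair of its vertices at distance $\le\ell$ — I would make explicit here the convention that a hyper-edge of weight $\ell$ contributes a length-$\ell$ connection between any two of its endpoints (this is the standard shortest-path metric on a weighted hypergraph). Hence for $u,u'\in A$ we have a $G$-walk from $u$ to $u'$ of length exactly $\ell$, namely the single hyper-edge $A$ itself. Therefore replacing the sub-walk $u, v_A, u'$ by the sub-walk $u,u'$ (across the hyper-edge $A$) does not increase the length. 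Applying this replacement to every maximal excursion through an auxiliary vertex yields a $p$--$q$ walk lying entirely in $G$ whose length is at most that of the original walk in $H$.

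Carrying this out gives $d_H(p,q)\ge d_G(p,q)$. The reverse inequality $d_H(p,q)\le d_G(p,q)$ is immediate because $G$ is a subgraph of $H$ with the same weights on $E$, so every $G$-walk is an $H$-walk of the same length. Combining the two inequalities proves the lemma. I expect the only real subtlety to be bookkeeping: first fixing precisely the semantics of "shortest-path metric on a weighted hypergraph" so that a weight-$\ell$ hyper-edge indeed realizes distance $\le\ell$ between all its endpoints, and second arguing cleanly that an arbitrary $p$--$q$ walk decomposes into segments inside $G$ separated by single-vertex excursions through the $v_A$'s — handling degenerate cases (the walk revisits a $v_A$, or enters and leaves $v_A$ through the same $u$) by simply deleting the redundant portion, which only shortens the walk. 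None of this requires computation; it is a routing/exchange argument, and the main obstacle is stating it carefully rather than any genuine difficulty.
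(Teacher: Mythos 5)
Your proposal is correct and follows essentially the same replacement argument as the paper: any excursion $u, v_A, u'$ through an auxiliary vertex costs $\ell/2+\ell/2=\ell$ and can be replaced by the hyper-edge $A$ of weight $\ell$ (or deleted in the degenerate cases), and the reverse inequality holds since $G$ is a subgraph of $H$ with the same weights. Your explicit treatment of the convention for hyper-edge distances and of the degenerate cases ($u=u'$, $A\in\A_0$) is slightly more careful than the paper's terse version, but the idea is identical.
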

\begin{proof}
	By definition $d(p,q)$ is the length of a shortest path between $p$ and $q$ in $H$. Suppose the shortest path contains a vertex $v_A$ for some $A\in\bigcup_{i=0}^k \A_i$ with $v\in A$ as predecessor and $w\in A$ as ancestor. Since $v$ and $w$ are connected in $H$ by the hyper-edge $A$ we can delete $v_A$ from the path and the length of the path does not change. The resulting path is also a path in $G$, so $d(p,q)$ is also the length of a shortest path between $p$ and $q$ in $G$. 
\end{proof}

Next we state some structural properties of the graph $G$ and the clustering instance $(\X,\PP,d)$. 
To establish a lower bound on the approximation factor of a hierarchical clustering we first focus on the optimal clusterings of the instance $(\X,\PP,d)$. One can already guess that  $\A_\ell$ is an optimal clustering with $\frac{N_k}{\Gamma N_{\ell-1}}$ clusters with respect to $\cost\in\{\diam,\rad,\drad\}$ and we will prove this in this section. First we need the following statement about the connected components of $G_\ell$.
\begin{lemma}
	\label{lem2:construction_components}
	The vertex set of every connected component in $G_\ell$ has cardinality $N_{\ell}$ and is of the form 
	\[V^X_{\ell}=\{(X'\mid X)\mid X'\in \PP_{\ell}\}.\]
	for a given $X=(X_{\ell+1}\mid \ldots\mid X_{k})\in \prod_{i=\ell+1}^{k}\Q_{i}$.
\end{lemma}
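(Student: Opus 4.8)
The plan is to prove both halves of the statement simultaneously — the shape $V^X_\ell$ of the components and their cardinality $N_\ell$ — by induction on $\ell$, splitting the work into two parts: (a) no edge of $G_\ell$ joins two points of $\PP_k$ that differ in some row among $\ell+1,\ldots,k$, so every component is contained in some $V^X_\ell$; and (b) each set $V^X_\ell$ is connected in $G_\ell$. Since the sets $V^X_\ell$, as $X$ ranges over $\prod_{i=\ell+1}^{k}\Q_i$, clearly partition $\PP_k$ and satisfy $|V^X_\ell|=|\PP_\ell|=N_\ell$, parts (a) and (b) together identify the $V^X_\ell$ with the connected components and finish the proof.

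Part (a) is direct and needs no induction: by Lemma~\ref{lemma_metric} we may compute connectivity in $G_\ell$, whose edge set is $\A_0\cup\ldots\cup\A_\ell$; the members of $\A_0$ are singletons, and a hyperedge in $\A_i$ with $1\le i\le \ell$ only varies a single coordinate of row $i$, leaving rows $i+1,\ldots,k$ unchanged. Hence the component of any point $p=(p_0\mid\ldots\mid p_k)$ is contained in $V^X_\ell$ with $X=(p_{\ell+1}\mid\ldots\mid p_k)$.

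For part (b) I would induct on $\ell$, the base case $\ell=0$ being trivial since $\PP_0=[1]$ makes $V^X_0$ a single point. For the step, fix $X\in\prod_{i=\ell+1}^{k}\Q_i$ and write a generic point of $V^X_\ell$ as $(Y''\mid Y_\ell\mid X)$ with $Y''\in\PP_{\ell-1}$ and $Y_\ell\in\Q_\ell$. Because $G_{\ell-1}$ is a subgraph of $G_\ell$, the induction hypothesis gives that, for each fixed $Y_\ell$, the ``block'' $\{(Y''\mid Y_\ell\mid X):Y''\in\PP_{\ell-1}\}$ is a connected component of $G_{\ell-1}$ and hence connected in $G_\ell$. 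So $V^X_\ell$ is a disjoint union of $|\Q_\ell|$ such internally connected blocks, one per value of row $\ell$, and it remains to link them using $\A_\ell$-edges. To pass from the block with row-$\ell$ value $a=(a_1,\ldots,a_{N_{\ell-1}})$ to the one with value $b=(b_1,\ldots,b_{N_{\ell-1}})$, I change the coordinates one at a time: to change coordinate $j$, first move inside the current block to the point whose prefix $Y''$ satisfies $\phi_{\ell-1}(Y'')=j$ — possible since $\phi_{\ell-1}$ is a bijection onto $[N_{\ell-1}]$ and the block contains every prefix in $\PP_{\ell-1}$ — and then traverse the hyperedge $A_\ell^{(Y''\mid a\mid X)}\in\A_\ell$, which by construction varies exactly the $j$-th coordinate of row $\ell$ over all of $[\Gamma\cdot N_{\ell-1}]$, so in particular we may set it to $b_j$. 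Iterating over $j=1,\ldots,N_{\ell-1}$ yields a path from the $a$-block to the $b$-block, proving $V^X_\ell$ connected.

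The main obstacle is precisely this steering step in part (b): an $\A_\ell$-hyperedge may only modify the row-$\ell$ coordinate selected by $\phi_{\ell-1}$ from the current prefix, so to reach an arbitrary target in $\Q_\ell$ we must be able to reposition the prefix to any index in $[N_{\ell-1}]$ before each coordinate change — which is exactly what the inductive connectivity of the lower-level blocks together with the bijectivity of $\phi_{\ell-1}$ supplies. Everything else — part (a), the cardinality count $|V^X_\ell|=N_\ell$, and the fact that the $V^X_\ell$ partition $\PP_k$ — is routine bookkeeping once the induction is organized this way.
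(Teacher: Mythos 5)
Your proof is correct and follows essentially the same route as the paper: observe that the sets $V^X_\ell$ partition $\PP_k$ and no edge of $E_\ell$ crosses between them, then show connectivity of each $V^X_\ell$ by induction on $\ell$, changing the row-$\ell$ coordinates one at a time by first steering the prefix to $\phi_{\ell-1}^{-1}(j)$ inside a lower-level component and then traversing the corresponding $\A_\ell$-hyperedge (the paper phrases this coordinate-by-coordinate walk as an inner induction on the number of differing coordinates). The only cosmetic difference is your appeal to Lemma~\ref{lemma_metric}, which is unnecessary since connectivity in $G_\ell$ is a purely graph-theoretic notion independent of the metric.
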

\begin{proof}
	Notice that $|V^X_{\ell}|=N_\ell$ and that $\{V_{\ell}^X\mid X\in \prod_{i=\ell+1}^{k}\Q_{i}\}$ is a partition of $V$. Furthermore since $E_{\ell}=\bigcup_{i=0}^{\ell}\A_i$ any edge $e\in E_{\ell}$ is either completely contained in or disjoint to $V^X_{\ell}$.
	
	It is left to show that $V^X_{\ell}$ is connected.
	We prove this via induction over $\ell$. For $\ell=0$ this is clear because $|V^X_{0}|=1$. 
	For $\ell>0$ let $Y=(Y_{\ell}\mid X), Z=(Z_{\ell}\mid X)\in \prod_{i=\ell}^{k}\Q_{i}$. By the induction hypothesis we know that the sets $V_{\ell-1}^Y, V_{\ell-1}^Z$ are connected. To prove that $V_\ell^X$ is connected it is sufficient to show that there is a path from a point in $V_{\ell-1}^Y$ to a point in  $V_{\ell-1}^Z$. We show this claim by induction over the number $m$ of coordinates in which $Y$ and $Z$ differ. For $m=0$ there is nothing to show. If $m>0$ pick $1\leq s\leq N_{\ell-1}$ such that $y_{\ell s}\neq z_{\ell s}$ and let $P=\phi_{\ell-1}^{-1}(s)\in \prod_{i=0}^{\ell-1} \Q_i$. Consider the point $(P\mid Y_\ell\mid X)$ which is contained in $V_{\ell-1}^Y$ . This point is also contained in the set    
	\[\{(P\mid y_{\ell 1},\ldots, y_{\ell s-1},\star, y_{\ell s+1}, \ldots , y_{\ell N_{\ell-1}}\mid  X)\mid \star\in[\Gamma \cdot N_{\ell-1}]\}\in E_{\ell}.\]
	Thus there is an edge in $G_{\ell}$ connecting a point in $V_{\ell-1}^Y$ to a point in $V_{\ell-1}^{Y'}$ with $Y'=(y_{\ell 1},\ldots, y_{\ell s-1}, z_{\ell s}, y_{\ell s+1},\ldots, y_{N_{\ell-1}}\mid X)$. Now $Y'$ and $Z$ differ in $m-1$ coordinates, thus there is a path between two points in $V_{\ell-1}^{Y'}$ and $V_{\ell-1}^Z$ by induction hypothesis. If we combine this with the induction hypothesis that $V_{\ell-1}^{Y'}$ is connected this yields the claim (see Figure~\ref{fig:conn_comp} for an illustration).  	
	\begin{figure}
		\centering
		\includegraphics[]{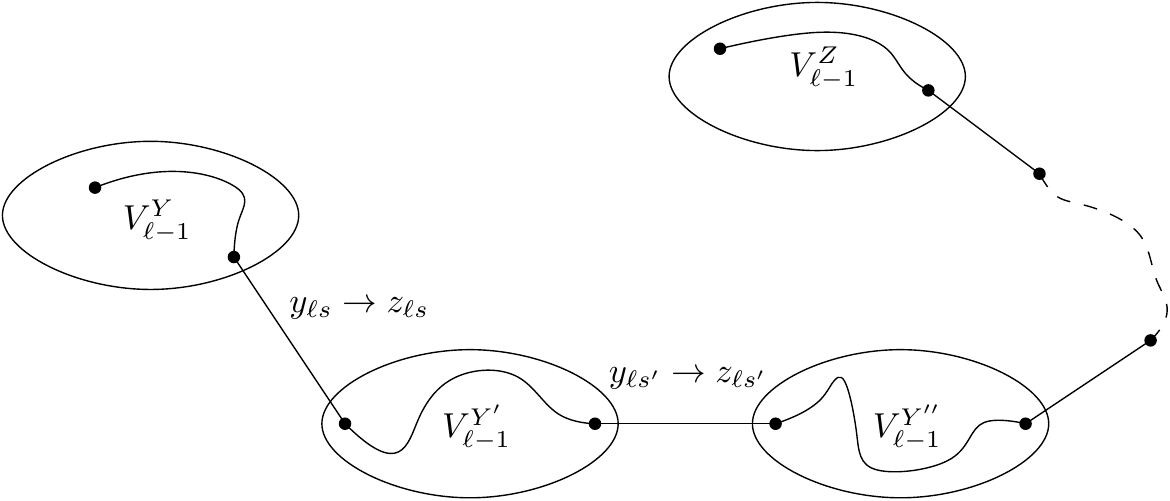}
		\caption{Here we see the construction of the path. It corresponds to changing the coordinates of $Y$ successively until they match $Z$. We use an edge in $\A_\ell$ to change $y_{ls}$ to $z_{ls}$, next we change $y_{ls'}$ to $z_{ls'}$ and proceed like this until we obtain $Z$. The respective edges are then connected to a path from $V_{\ell-1}^X$ to $V_{\ell-1}^Z$.}
		\label{fig:conn_comp}
	\end{figure}
\end{proof}

\begin{lemma}
	\label{lem:opt_cl}
	Any clustering of $(\X,\PP,d)$ with less than $\frac{N_k}{N_{\ell-1}}$ clusters costs at least $\ell$ if $\cost\in\{\diam, \drad\}$ and $\ell/2$ if $\cost=\rad$.
\end{lemma}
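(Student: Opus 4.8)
The plan is to combine a pigeonhole argument with the structure of the connected components of $G_{\ell-1}$ from Lemma~\ref{lem2:construction_components}. If a clustering $\C$ of $\PP$ has fewer than $\frac{N_k}{N_{\ell-1}}$ clusters, then since $|\PP|=N_k$ its average cluster size exceeds $N_{\ell-1}$, so some cluster $C\in\C$ contains at least $N_{\ell-1}+1$ points. By Lemma~\ref{lem2:construction_components}, applied with $\ell-1$ in place of $\ell$, every connected component of $G_{\ell-1}$ has exactly $N_{\ell-1}$ vertices; hence $C$ is not contained in a single component, and we may pick $p,q\in C$ lying in two distinct connected components of $G_{\ell-1}$.

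The crucial step is to show $d(p,q)\ge\ell$. By Lemma~\ref{lemma_metric}, $d(p,q)$ is the length of a shortest $p$-$q$ path in $G$. The edge set $E_{\ell-1}=\bigcup_{i=0}^{\ell-1}\A_i$ contains exactly the edges of weight at most $\ell-1$, so every edge of $G$ outside $E_{\ell-1}$ lies in some $\A_i$ with $i\ge\ell$ and thus has weight at least $\ell$. Since $p$ and $q$ lie in different connected components of $G_{\ell-1}$, no $p$-$q$ path in $G$ can consist solely of edges from $E_{\ell-1}$; hence every such path uses at least one edge of weight at least $\ell$ and therefore has length at least $\ell$. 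This proves $d(p,q)\ge\ell$.

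It remains to read off the three bounds. For $\cost=\diam$ we get $\diam(\C)\ge\diam(C)\ge d(p,q)\ge\ell$ directly. For $\cost=\drad$, any center $c\in C$ lies in exactly one connected component of $G_{\ell-1}$, so at least one of $p,q$ is in a different component than $c$, and the argument above gives $\max_{r\in C}d(c,r)\ge\ell$; as this holds for every $c\in C$, we get $\drad(\C)\ge\drad(C)\ge\ell$. For $\cost=\rad$ the center may be any point $c\in\X=V'$, so we cannot argue component-wise; instead the triangle inequality gives $\ell\le d(p,q)\le d(p,c)+d(c,q)\le 2\max(d(c,p),d(c,q))$, whence $\rad(C,c)\ge\ell/2$ for all $c\in\X$ and $\rad(\C)\ge\rad(C)\ge\ell/2$. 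The only delicate point is exactly this radius case: allowing centers in $V'\setminus\PP$ forces the factor-$2$ loss, which is what brings the bound down from $\ell$ to $\ell/2$ and matches the cost of the clustering $\A_\ell$ for the radius objective.
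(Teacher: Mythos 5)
Your proof is correct and follows essentially the same route as the paper: both arguments rest on Lemma~\ref{lem2:construction_components} (components of $G_{\ell-1}$ have size $N_{\ell-1}$) together with the observation that any path between different components of $G_{\ell-1}$ must use an edge of weight at least $\ell$, and both handle $\rad$ via $\rad \ge \diam/2$. The only cosmetic difference is the final counting step, where you pigeonhole on cluster sizes to find a cluster with more than $N_{\ell-1}$ points, while the paper notes that a clustering in which every cluster sits inside a single component needs at least as many clusters as there are components, namely $N_k/N_{\ell-1}$.
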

\begin{proof}
	The shortest path in $G$ between any two points which lie in different connected components of $G_{\ell-1}$ must contain an edge of weight $\geq \ell$. Thus any set of points $M\subset V$ which is disconnected in $G_{\ell-1}$ has diameter $\geq \ell$. Remember that the discrete radius of $M$ is given by $\drad(M)=\min_{c\in M}\max_{p\in M}d(p,c)$. For every possible choice of $c\in M$ there exists a point $p\in M$ which is not in the same connected component of $G_{\ell-1}$ as $c$, thus $d(c,p)\geq \ell$ and therefore $\drad(M)\geq \ell$ and $\rad(M)\ge \diam(M)/2 \ge \ell/2$.
	
	We conclude that if $\cost\in \{\diam,\drad\}$ any cluster of cost smaller than $\ell$ is contained in one of the sets $V_{\ell-1}^X$ for some $X\in \prod_{i=\ell}^k\Q_i$ by Lemma~\ref{lem2:construction_components} and any clustering with less than $\big|\prod_{i=\ell}^k\Q_i\big|$ clusters costs at least $\ell$. By the same argument if $\cost=\rad$ any cluster of cost smaller than $\ell/2$ is contained in one of the sets $V_{\ell-1}^X$ for some $X\in \prod_{i=\ell}^k\Q_i$ by Lemma~\ref{lem2:construction_components} and any clustering with less than $\big|\prod_{i=\ell}^k\Q_i\big|$ clusters costs at least $\ell/2$.
	Since 
	\begin{align*}
	\Big|\prod_{i=\ell}^k\Q_i\Big|=\frac{\big|\prod_{i=0}^k\Q_i\big|}{\big|\prod_{i=0}^{\ell-1}\Q_i\big|}=\frac{N_k}{N_{\ell-1}}
	\end{align*}
	this proves the lemma.
\end{proof}

\begin{corollary}
	\label{cor:opt_cl}
	For $1\leq \ell\leq k$ and $\cost\in\{\diam,\rad,\drad\}$ the clustering $\A_{\ell}$ is an optimal $\frac{N_k}{\Gamma N_{\ell-1}}$-clustering for the instance $(\X,\PP,d)$. Furthermore $\diam(\A_\ell)=\drad(\A_\ell)=\ell$ and $\rad(\A_\ell)=\ell/2.$
\end{corollary}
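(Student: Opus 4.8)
The plan is to combine the lower bound from Lemma~\ref{lem:opt_cl} with a direct computation of the cost of the clustering $\A_\ell$, which by hypothesis is a partition of $\PP_k$ into clusters of size $\Gamma N_{\ell-1}$ and hence has exactly $\frac{N_k}{\Gamma N_{\ell-1}}$ clusters. The first and main step is therefore to show $\diam(\A_\ell)=\drad(\A_\ell)=\ell$ and $\rad(\A_\ell)=\ell/2$. For the upper bounds I would use that each cluster $A_\ell^X$ is, by construction, a single hyper-edge of $G$ of weight $\ell$; equivalently, in the extension $H$ the star-center $v_{A_\ell^X}\in V'=\X$ is joined to every point of $A_\ell^X$ by an edge of weight $\ell/2$. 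Thus any two $p,q\in A_\ell^X$ satisfy $d(p,q)\le d(p,v_{A_\ell^X})+d(v_{A_\ell^X},q)\le \ell$, which gives $\diam(A_\ell^X)\le\ell$ and, using any $c\in A_\ell^X$ as center, $\drad(A_\ell^X)\le\ell$; and using $v_{A_\ell^X}$ as center gives $\rad(A_\ell^X)\le\ell/2$.

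For the matching lower bounds I would argue exactly as in the proof of Lemma~\ref{lem:opt_cl}. Since $\Gamma N_{\ell-1}\ge 2$, the cluster $A_\ell^X$ contains two points whose entry in position $m=\phi_{\ell-1}(X_{[0:\ell-1]})$ of row $\ell$ differs; by Lemma~\ref{lem2:construction_components} the connected component of $G_{\ell-1}$ containing a point is determined by its rows $\ell,\dots,k$, so these two points lie in distinct components of $G_{\ell-1}$, and any path between them in $G$ uses an edge of weight $\ge\ell$, whence $d(p,q)\ge\ell$. The same observation shows that for every choice of center $c\in A_\ell^X$ some point of $A_\ell^X$ is at distance $\ge\ell$ from $c$. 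Hence $\diam(A_\ell^X)=\drad(A_\ell^X)=\ell$, and $\rad(A_\ell^X)\ge\diam(A_\ell^X)/2=\ell/2$, so $\rad(A_\ell^X)=\ell/2$. Taking the maximum over $X$ yields the claimed costs of $\A_\ell$.

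Finally, optimality follows from Lemma~\ref{lem:opt_cl}: because $\Gamma=k+1\ge 2$ we have $\frac{N_k}{\Gamma N_{\ell-1}}<\frac{N_k}{N_{\ell-1}}$, so any clustering with at most $\frac{N_k}{\Gamma N_{\ell-1}}$ clusters has strictly fewer than $\frac{N_k}{N_{\ell-1}}$ clusters and therefore costs at least $\ell$ for $\cost\in\{\diam,\drad\}$ and at least $\ell/2$ for $\cost=\rad$; since $\A_\ell$ attains precisely these values it is an optimal $\frac{N_k}{\Gamma N_{\ell-1}}$-clustering for all three objectives. I do not expect a genuine obstacle here — the argument is carried entirely by Lemma~\ref{lem:opt_cl} and the combinatorial description of $\A_\ell$; the only point requiring care is the cost computation for the clusters of $\A_\ell$, in particular noting that $A_\ell^X$ is itself a single hyper-edge of weight $\ell$, and that the bound $\rad(A_\ell^X)\le\ell/2$ genuinely needs the auxiliary vertices $v_A\in V'\setminus V$ of the extension $H$, which is exactly the reason that extension was introduced.
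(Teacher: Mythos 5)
Your proposal is correct and follows essentially the same route as the paper: upper bounds come from each cluster of $\A_\ell$ being a single hyper-edge of weight $\ell$ (with the auxiliary vertex $v_A$ giving radius $\ell/2$), and optimality follows from Lemma~\ref{lem:opt_cl} together with $\frac{N_k}{\Gamma N_{\ell-1}}<\frac{N_k}{N_{\ell-1}}$. The only cosmetic difference is that you re-derive the lower bound $\cost(\A_\ell)\geq\ell$ (resp.\ $\ell/2$) cluster-by-cluster via Lemma~\ref{lem2:construction_components}, whereas the paper simply applies Lemma~\ref{lem:opt_cl} to $\A_\ell$ itself using its cluster count; both are valid and of equal length.
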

\begin{proof}
	If $\cost\in\{\diam,\drad\}$ we obtain by definition of $(\X,\PP,d)$ that $\cost(\A_\ell)\leq \ell$. If $\cost=\rad$  we obtain that $\cost(\A)\leq\ell/2$ by picking $v_A\in\X\backslash\PP$ as center for $A\in\A_\ell$. On the other hand  $|\A_{\ell}|=\frac{N_k}{\Gamma N_{\ell-1}}< \frac{N_k}{N_{\ell-1}}$ and thus $\cost(\A_{\ell})\geq \ell$ if $\cost\in\{\diam,\drad\}$ and $\cost(\A_{\ell})\geq \ell/2$ for $\cost=\rad$ by Lemma~\ref{lem:opt_cl}.
\end{proof}

\subsection{Characterization of Hierarchical Clusterings} 
Let from now on $\Hr=(\Ht_{N_k},\ldots, \Ht_{1})$ denote a hierarchical clustering of $(\X,\PP,d)$. 
We introduce the notion of \emph{bad clusters} in $\Ht_{\frac{N_k}{\Gamma N_{\ell-1}}}$ which are clusters whose cost increases repeatedly, as we will see later. In this section we prove the existence of such clusters in $\Hr$ and we give a lower bound on their cost.

\begin{definition}
	We call all clusters $C \in \Ht_{N_k}$ bad at time $0$ and denote by $\Ker_0(C)=C$ the kernel of $C$ at time $0$ and set $\Bad(0)=\Ht_{N_k}$.
	
	For $1\leq \ell\leq k$ we say that a cluster $C\in \Ht_{\frac{N_k}{\Gamma N_{\ell-1}}}$ is anchored at $\ell\leq \ell'\leq k$ if the set $\bigcup_{D\in \Bad(\ell-1)\colon D\subset C}\Ker_{\ell-1}(D)$ is
	\begin{enumerate}
		\item connected in $G_{\ell'}$,
		\item disconnected in $G_{\ell'-1}$.
	\end{enumerate}
	We call $C$ bad at time $\ell$ if $C$ is anchored at some $\ell'\geq\ell$.
	We denote by $\Bad(\ell)\subset \Ht_{\frac{N_k}{\Gamma N_{\ell-1}}}$ the set of all bad clusters at time $\ell$. If $C$ is bad we define the kernel of $C$ as the union of all kernels of bad clusters at time $\ell-1$ contained in $C$, i.e., 
	\[\Ker_\ell(C)=\bigcup_{D\in \Bad(\ell-1)\colon D\subset C}\Ker_{\ell-1}(D).\]
	All clusters in $\Ht_{\frac{N_k}{\Gamma N_{\ell-1}}}\backslash\Bad(\ell)$ are called good.
\end{definition}
\begin{figure}
	\centering
	\includegraphics[scale=0.65]{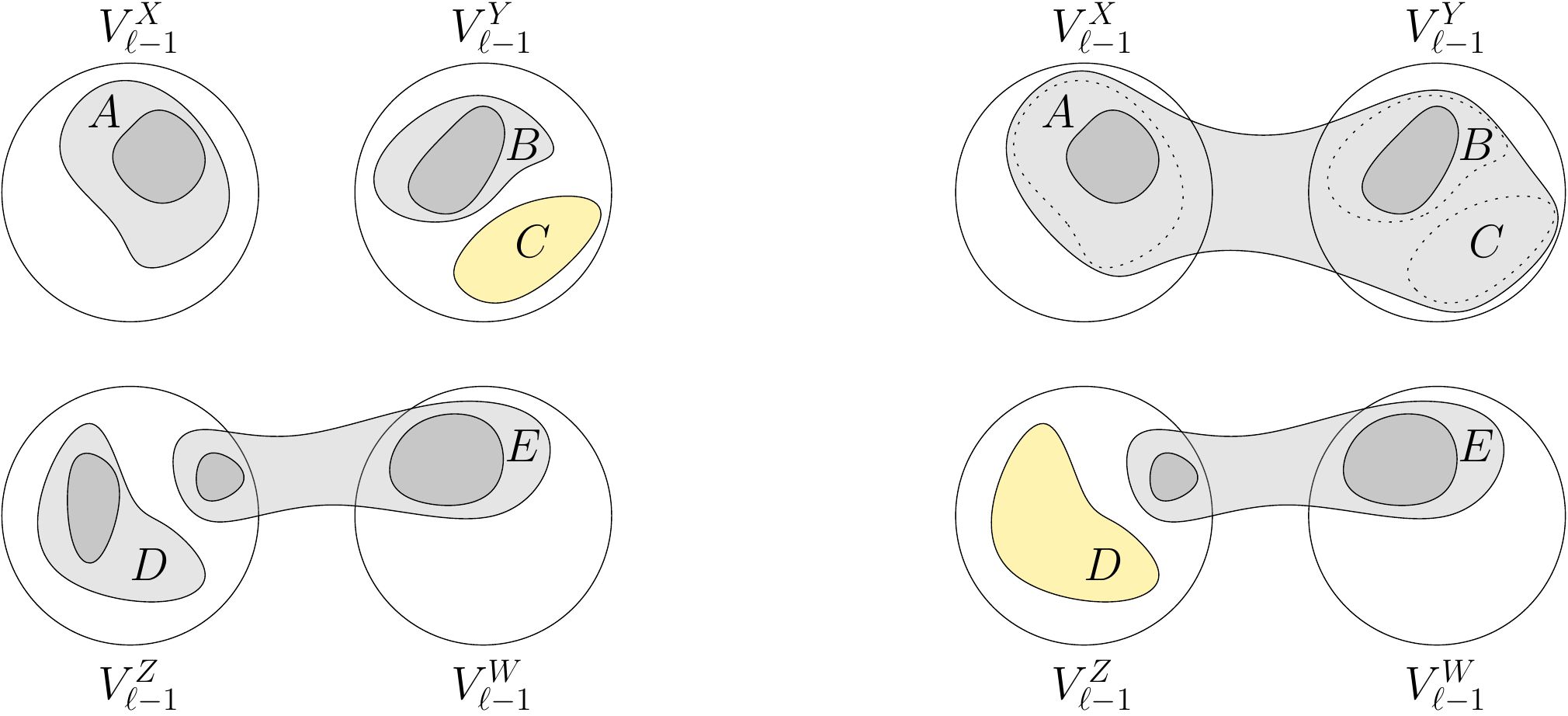}
	\caption{An illustration of the evolution of good and bad clusters: In the example, we see five clusters at time $\ell-1$. The clusters $A,B,D,E$ are assumed to be bad, with their kernels depicted in dark gray, while $C$ is assumed to be a good cluster. At time $\ell$, clusters $A,B$ and $C$ are merged. The resulting cluster is bad because the kernels of $A$ and $B$ lie in different connected components of $G_{\ell-1}$. Clusters $D$ and $E$ are still present at time $\ell$, but now $D$ is a good cluster since its kernel is completely contained in $V_{\ell-1}^Z$, while $E$ is still bad, since its kernel is disconnected in $G_{\ell-1}$.}
	\label{fig:kernel}
\end{figure}

\begin{lemma}
	\label{lem2:hierarchical_size_good}
	Let $C$ be a good cluster at time $1\leq \ell\leq k$ and \[W=\bigcup_{D\in\Bad(\ell-1)\colon D\subset C}\Ker_{\ell-1}(D),\] then $W$ is connected in $G_{\ell-1}$ and thus $|W|\leq N_{\ell-1}$. 
\end{lemma}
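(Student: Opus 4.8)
The plan is to derive the statement almost directly from the definitions, using the structural description of the connected components of $G_{\ell-1}$ given by Lemma~\ref{lem2:construction_components}. First I would record the two elementary facts that make everything work. Throughout, the phrase "$W$ is connected in $G_m$" is read as "$W$ is contained in a single connected component of $G_m$" (this is exactly how the term is used in Lemmas~\ref{lem:opt_cl} and~\ref{lem2:construction_components}). Fact (a): since $E_m=\bigcup_{i=0}^m\A_i\subseteq E_{m+1}$, any path in $G_m$ is also a path in $G_{m+1}$, so the connected component of $G_m$ containing a vertex lies inside its connected component in $G_{m+1}$ (equivalently, each $V^X_m$ lies inside some $V^{X'}_{m+1}$). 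Hence the property "$W$ is connected in $G_m$" is monotone in $m$: once it holds for some index it holds for all larger indices. Fact (b): $G_k$ has a single connected component, namely all of $V$, because by Lemma~\ref{lem2:construction_components} a connected component of $G_k$ has cardinality $N_k=|V|$; consequently every $W\subseteq V$ is connected in $G_k$.

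Next I would run a contradiction argument. Note $W\subseteq C$ since each kernel is contained in its cluster. If $W=\emptyset$ there is nothing to prove ($|W|=0\le N_{\ell-1}$), so assume $W\neq\emptyset$ and suppose, for contradiction, that $W$ is \emph{not} connected in $G_{\ell-1}$. By Fact~(b) the set of indices $m\in\{\ell-1,\ldots,k\}$ for which $W$ is connected in $G_m$ is non-empty (it contains $k$), and by Fact~(a) it is upward closed in $\{\ell-1,\ldots,k\}$, so it equals $\{\ell^*,\ldots,k\}$ for a well-defined smallest index $\ell^*$. Since $W$ is not connected in $G_{\ell-1}$, we have $\ell^*\ge\ell$, and of course $\ell^*\le k$. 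By minimality of $\ell^*$ (and since $\ell^*-1\ge\ell-1\ge 0$, so $G_{\ell^*-1}$ is defined), the set $W$ is connected in $G_{\ell^*}$ but disconnected in $G_{\ell^*-1}$. But $W=\bigcup_{D\in\Bad(\ell-1)\colon D\subset C}\Ker_{\ell-1}(D)$ is precisely the set appearing in the definition of "anchored", so these two properties say that $C$ is anchored at $\ell^*\in\{\ell,\ldots,k\}$. Hence $C$ is bad at time $\ell$, contradicting the assumption that $C$ is good.

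Therefore $W$ is connected in $G_{\ell-1}$, i.e.\ $W$ is contained in a connected component of $G_{\ell-1}$; by Lemma~\ref{lem2:construction_components} that component has cardinality $N_{\ell-1}$, so $|W|\le N_{\ell-1}$, which is the claim.

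I do not expect a genuine obstacle here: the lemma is essentially an unwinding of the definition of "anchored", combined with the observation that component-connectivity in the $G_m$'s is monotone in $m$ and total at $m=k$. The only points that need a little care are making the reading of "connected in $G_m$" explicit and consistent with its earlier uses, the index bookkeeping (that the minimal index $\ell^*$ indeed satisfies $\ell\le\ell^*\le k$ and that $G_{\ell^*-1}$ is defined), and disposing of the degenerate case $W=\emptyset$.
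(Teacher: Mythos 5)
Your proof is correct and follows essentially the same route as the paper: assume $W$ is disconnected in $G_{\ell-1}$, use that $G_k$ is connected to find a minimal level $\ell^*\ge\ell$ where $W$ is connected in $G_{\ell^*}$ but not in $G_{\ell^*-1}$, conclude $C$ is anchored at $\ell^*$ (contradicting goodness), and then bound $|W|$ via Lemma~\ref{lem2:construction_components}. You merely spell out the monotonicity of connectivity in $m$ and the empty-set case, which the paper leaves implicit.
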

\begin{proof}
	Suppose $W$ is disconnected in $G_{\ell-1}$. Since $G_k=G$ is connected, there must be a time $\ell'\geq\ell$ such that $W$ is connected in $G_{\ell'}$ and disconnected in $G_{\ell'-1}$. But then $C$ is a bad cluster at time $\ell$ which is anchored at $\ell'$ in contradiction to our assumption. Thus $W$ is connected in $G_{\ell-1}$. By Lemma~\ref{lem2:construction_components} we know that every connected component in $G_{\ell-1}$ is of size $N_{\ell-1}$.
\end{proof}

The example in Figure~\ref{fig:kernel} shows that a bad cluster at time $\ell$ can contain clusters which are good at time $\ell-1$. However we are only interested in points that are contained exclusively in bad clusters at any time $t<\ell$. The set $\Ker_\ell(C)$ contains exactly such points.

We will use two crucial properties to prove the final lower bound on the approximation factor of any hierarchical clustering $\Hr$ of $(\X,\PP,d)$. We first observe that bad clusters exist in $\Hr$ for every time-step $1\leq \ell\leq k$ and second that these clusters have a large cost compared to the optimal clustering. 
\begin{lemma}
	\label{lem2:hierarchical_number_bad}
	For all $0\leq \ell\leq k$ we have 
	\[\sum_{C\in\Bad(\ell)}|\Ker_\ell(C)|\geq \frac{\Gamma-\ell}{\Gamma}N_k.\]
\end{lemma}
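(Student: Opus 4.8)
The plan is to prove this by induction on $\ell$, tracking how the total "kernel mass" $\sum_{C\in\Bad(\ell)}|\Ker_\ell(C)|$ can decrease when we pass from time $\ell-1$ to time $\ell$. The base case $\ell=0$ is immediate: $\Bad(0)=\Ht_{N_k}$ consists of all singletons, each being its own kernel, so the sum equals $N_k=\frac{\Gamma-0}{\Gamma}N_k$. For the inductive step, observe that every kernel $\Ker_\ell(C)$ of a bad cluster $C\in\Bad(\ell)$ is by definition $\bigcup_{D\in\Bad(\ell-1),\,D\subset C}\Ker_{\ell-1}(D)$, a disjoint union of old kernels; and distinct bad clusters at time $\ell$ are disjoint, so the only way kernel mass is lost between time $\ell-1$ and time $\ell$ is through old bad clusters $D\in\Bad(\ell-1)$ whose containing cluster $C\in\Ht_{\frac{N_k}{\Gamma N_{\ell-1}}}$ is \emph{good} at time $\ell$. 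Hence
\[
\sum_{C\in\Bad(\ell)}|\Ker_\ell(C)|\;=\;\sum_{D\in\Bad(\ell-1)}|\Ker_{\ell-1}(D)|\;-\;\sum_{\substack{C\text{ good at }\ell}}\;\sum_{\substack{D\in\Bad(\ell-1)\\ D\subset C}}|\Ker_{\ell-1}(D)|,
\]
so it suffices to bound the total kernel mass absorbed into good clusters at time $\ell$ by $\frac{1}{\Gamma}N_k$.

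The key structural fact is Lemma~\ref{lem2:hierarchical_size_good}: if $C$ is good at time $\ell$, then $W_C:=\bigcup_{D\in\Bad(\ell-1),\,D\subset C}\Ker_{\ell-1}(D)$ is connected in $G_{\ell-1}$, hence $|W_C|\le N_{\ell-1}$. Since the good clusters at time $\ell$ are pairwise disjoint and there are at most $|\Ht_{\frac{N_k}{\Gamma N_{\ell-1}}}|\le \frac{N_k}{\Gamma N_{\ell-1}}$ clusters in total at that level, the total mass absorbed into good clusters is at most $\frac{N_k}{\Gamma N_{\ell-1}}\cdot N_{\ell-1}=\frac{N_k}{\Gamma}$. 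Plugging this into the displayed identity and using the induction hypothesis $\sum_{D\in\Bad(\ell-1)}|\Ker_{\ell-1}(D)|\ge\frac{\Gamma-(\ell-1)}{\Gamma}N_k$ gives
\[
\sum_{C\in\Bad(\ell)}|\Ker_\ell(C)|\;\ge\;\frac{\Gamma-(\ell-1)}{\Gamma}N_k-\frac{1}{\Gamma}N_k\;=\;\frac{\Gamma-\ell}{\Gamma}N_k,
\]
which closes the induction.

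The main thing to get right — and the only real subtlety — is the bookkeeping that no kernel mass is lost except through good clusters: I need to check that every $D\in\Bad(\ell-1)$ is contained in exactly one cluster $C\in\Ht_{\frac{N_k}{\Gamma N_{\ell-1}}}$ (this uses hierarchical compatibility of $\Ht_{\frac{N_k}{\Gamma N_{\ell-2}}}$ and $\Ht_{\frac{N_k}{\Gamma N_{\ell-1}}}$, noting $\frac{N_k}{\Gamma N_{\ell-2}}\ge\frac{N_k}{\Gamma N_{\ell-1}}$), that the kernels $\Ker_{\ell-1}(D)$ for $D\in\Bad(\ell-1)$ are pairwise disjoint (inductively, since kernels sit inside their clusters which are disjoint), and that if such a $C$ is bad at time $\ell$ then it retains \emph{all} of the mass $\sum_{D\subset C}|\Ker_{\ell-1}(D)|$ in $\Ker_\ell(C)$ — which is exactly the definition of $\Ker_\ell(C)$. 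The counting bound $|\Ht_{\frac{N_k}{\Gamma N_{\ell-1}}}|\le\frac{N_k}{\Gamma N_{\ell-1}}$ together with Lemma~\ref{lem2:hierarchical_size_good} then does the rest; no metric or shortest-path reasoning is needed here beyond what is already packaged in Lemma~\ref{lem2:hierarchical_size_good}.
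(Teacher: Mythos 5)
Your proof is correct and follows essentially the same route as the paper: induction on $\ell$, with the loss of kernel mass charged to good clusters, each of which absorbs at most $N_{\ell-1}$ kernel points by Lemma~\ref{lem2:hierarchical_size_good}, and the level $\Ht_{\frac{N_k}{\Gamma N_{\ell-1}}}$ having at most $\frac{N_k}{\Gamma N_{\ell-1}}$ clusters. The paper phrases the same counting as a proof by contradiction (too much lost mass would force more than $\frac{N_k}{\Gamma N_{\ell-1}}$ good clusters), whereas you argue directly via the mass-balance identity; this is only a cosmetic difference.
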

\begin{proof}
	We prove this via induction over $\ell$.  
	For $\ell=0$ this is clear since \[\bigcup_{C\in\Bad(0)}\Ker_0(C)=\PP_k.\]
	
	Now suppose that  $\ell>0$ and that
	\[\sum_{C\in\Bad(\ell)}|\Ker_\ell(C)|< \frac{\Gamma-\ell}{\Gamma}N_{k}.\] By induction hypothesis we know that 
	\[\sum_{C\in\Bad(\ell-1)}|\Ker_{\ell-1}(C)|\geq \frac{\Gamma-\ell+1}{\Gamma}N_k.\]
	Thus the number of points which are in the kernel of a bad cluster at time $\ell-1$ but not at time $\ell$ is larger than
	\[\frac{\Gamma-\ell+1}{\Gamma}N_k-\frac{\Gamma-\ell}{\Gamma}N_k=\frac{N_k}{\Gamma}.\]
	In other words these are points that are in the kernel of a bad cluster at time $\ell-1$ but contained in a good cluster at time $\ell$. Now we use that any good cluster at time $\ell$ can contain only $N_{\ell-1}$ such points by Lemma~\ref{lem2:hierarchical_size_good}.
	Thus the number of good clusters is greater than  
	\[\frac{N_k}{\Gamma}\cdot \frac{1}{N_{\ell-1}}= \frac{N_k}{\Gamma N_{\ell-1}}.\]
	We obtain that $\Ht_{\frac{N_k}{\Gamma N_{\ell-1}}}$ contains more than $\frac{N_k}{\Gamma N_{\ell-1}}$ clusters, which is not possible. 
\end{proof}

An immediate consequence of Lemma~\ref{lem2:hierarchical_number_bad} is the existence of bad clusters at time $\ell$ for any $0\leq \ell\leq k$. To prove that their (discrete) radius and diameter is indeed large we need a lower bound on the distance between two points $X,Y\in \PP$ that lie in different connected components of $G_{j-1}$ for some $1\leq j \leq k$. 

Suppose that the points $X$ and $Y$ only differ in one coordinate, i.e., there is a $1\leq s\leq N_{j-1}$ such that $x_{js}\neq y_{js}$, while $X$ and $Y$ agree in all other coordinates. 
There is only one edge in $G_j$ connecting $V_{j-1}^{X_{[j:k]}}$ with $V_{j-1}^{Y_{[j:k]}}$. Let $P=\phi_{j-1}^{-1}(s)$, then this edge connects the points $(P\mid X_{[j:k]})$ and $(P\mid Y_{[j:k]})$. If we connect $X$ to $(P\mid X_{[j:k]})$ and $(P\mid Y_{[j:k]})$ to $Y$ via a shortest path, this results in a path from $X$ to $Y$, see Figure~\ref{fig:distance}. We show that this path is indeed a shortest path between $X$ and $Y$ and generalize this to arbitrary $X$ and $Y$ which are disconnected in $G_{j-1}$.

\begin{figure}
	\centering
	\includegraphics[scale=1]{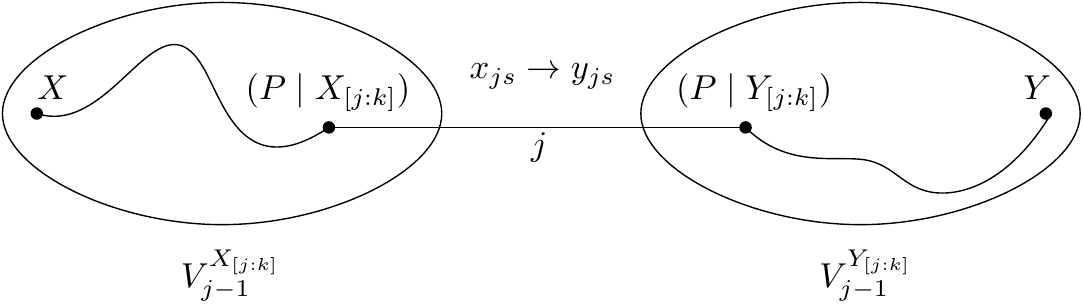}
	\caption{A shortest path between $X$ and $Y$. It consists of two shortest paths inside the connected components of $G_{j-1}$ and the unique edge of weight $j$ between these components. }
	\label{fig:distance}
\end{figure}

\begin{restatable}{lemma}{LemmaDistance}
	\label{lem2:distance_comp}
	Let $X,Y\in \PP$ be two points and suppose there is $1\leq j\leq k$ and $1\leq s\leq N_{j-1}$ such that $x_{js}\neq y_{js}$. Let $P=\phi_{j-1}^{-1}(s)\in \prod_{i=0}^{j-1}\Q_i$.
	Then \[d(X,Y)\geq d\big (X, (P\mid X_{[j:k]})\big)+j+ d\big(Y, (P\mid Y_{[j:k]})\big).\] 
\end{restatable}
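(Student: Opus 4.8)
The plan is to prove the inequality by induction on the number $m$ of coordinates in which $X$ and $Y$ differ (among the relevant coordinates), with the base case $m=1$ being the heart of the argument. Actually, let me reconsider — the lemma fixes a single coordinate $s$ at level $j$ where they differ, but $X$ and $Y$ may differ in many other coordinates too. So the real structure is: I want to show that any shortest path from $X$ to $Y$ must, at some point, traverse a weight-$j$ edge crossing from the $G_{j-1}$-component of $X$ to that of $Y$ (or more precisely, must "pay" for changing coordinate $s$), and that the cheapest way to do this is to first route $X$ to $(P\mid X_{[j:k]})$ inside its component, cross the unique weight-$j$ edge associated with $P=\phi_{j-1}^{-1}(s)$, and then route to $Y$.

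Let me sketch the key steps. First I would establish the base case: suppose $X$ and $Y$ differ only in coordinate $s$ at level $j$. Then $X\in V_{j-1}^{X_{[j:k]}}$ and $Y\in V_{j-1}^{Y_{[j:k]}}$ are in different connected components of $G_{j-1}$ (by Lemma \ref{lem2:construction_components}), so any $X$-$Y$ path uses at least one edge of weight $\ge j$; moreover, since they agree on all coordinates at levels $>j$, and in coordinate $s$ at level $j$ differ while agreeing elsewhere at level $j$, the only edges of $G_j$ leaving $V_{j-1}^{X_{[j:k]}}$ toward $V_{j-1}^{Y_{[j:k]}}$ are the ones in $\A_j$ that flip exactly the coordinate indexed by some $P'$; but to change coordinate $s$ one must use the edge corresponding to $P=\phi_{j-1}^{-1}(s)$, which connects $(P\mid X_{[j:k]})$ and $(P\mid Y_{[j:k]})$. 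I would argue that any path must contain both endpoints of this edge (or pay at least as much by a longer detour through higher-weight edges), and then split the path's length as (length from $X$ to $(P\mid X_{[j:k]})$) $+$ $j$ $+$ (length from $(P\mid Y_{[j:k]})$ to $Y$), each piece being $\ge$ the corresponding shortest-path distance. For the inductive step with $m>1$ differing coordinates, I would peel off one differing coordinate other than $s$: pick $j'$ and $s'\neq$ (the pair $(j,s)$) where they differ, let $Y'$ agree with $X$ in that coordinate and with $Y$ elsewhere, apply the triangle inequality $d(X,Y)\ge$ something, or more cleanly, induct on the component structure the same way Lemma \ref{lem2:construction_components} does, changing coordinates one at a time and accumulating weights, always keeping the weight-$j$ crossing at coordinate $s$ as the mandatory charge.

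The main obstacle I expect is the base-case claim that a shortest path is forced through the specific edge $\{(P\mid X_{[j:k]}),(P\mid Y_{[j:k]})\}$ rather than some cheaper combination of higher-weight edges — i.e., ruling out that crossing via weight-$(j+1)$ or higher edges and rerouting could be shorter. Handling this cleanly probably requires the observation that within $G_{j-1}$ the distance between $X$ and any vertex of the form $(P\mid X_{[j:k]})$ is controlled, combined with a careful case analysis of where a shortest path first "commits" to the value $y_{js}$ in coordinate $s$: the first time the path's current vertex has $y_{js}$ in position $s$, it must have just traversed an edge in some $\A_{j''}$ with $j''\ge j$ that flips position $s$, and such an edge has both endpoints agreeing with $P$ in their prefix up to level $j-1$, which pins down the structure. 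I would likely phrase the whole argument by strong induction on $\sum_{\text{differing coords}} (\text{level of that coord})$ or simply on $|E(H)|$-weighted path length, mirroring the inductive proof of Lemma \ref{lem2:construction_components}, and verify that the recursion bottoms out exactly at the claimed bound.
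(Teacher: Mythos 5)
You have the right first observation, but the proof as sketched has a genuine gap. The forced-crossing part is actually easier than you fear: the only hyperedges that change a row-$j$ coordinate at all are those in $\A_j$ (edges of weight $j''>j$ change row-$j''$ coordinates, never position $(j,s)$), and any $\A_j$-edge varying position $s$ must have prefix exactly $P=\phi_{j-1}^{-1}(s)$ in rows $0,\dots,j-1$. So there is no case analysis over ``crossing via weight-$(j+1)$ or higher edges''; the first step of the path at which coordinate $(j,s)$ changes is automatically a weight-$j$ hyperedge whose two endpoints $Z,Z'$ both have prefix $P$. The real difficulty, which your proposal does not resolve, is that $Z$ need not equal $(P\mid X_{[j:k]})$: before crossing, the path may have altered other coordinates in rows $\geq j$, so $Z=(P\mid Z_{[j:k]})$ with $Z_{[j:k]}$ possibly different from $X_{[j:k]}$ (even in your base case $m=1$, since a shortest path can leave the $G_{j-1}$-component of $X$ and return). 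Your split of the path length into ``$d(X,(P\mid X_{[j:k]}))+j+d((P\mid Y_{[j:k]}),Y)$'' therefore does not follow from the path merely containing $Z$ and $Z'$; you still must show $d(X,Z)\geq d\bigl(X,(P\mid X_{[j:k]})\bigr)$ and symmetrically for $Y,Z'$. The paper's proof supplies exactly this via a projection argument: given the subpath $V^1,\dots,V^t$ from $X$ to $Z$, replace rows $j,\dots,k$ of every $V^i$ by $X_{[j:k]}$, and check that each projected pair $W^i,W^{i+1}$ is either identical (if the original step changed a row $r\geq j$) or joined by an edge of the same weight $r<j$ (the relevant prefix $V^i_{[0:r-1]}$ is untouched by the projection), so the projected walk from $X$ to $(P\mid X_{[j:k]})$ is no longer than the original subpath.

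Your fallback devices do not close this hole. Induction on the number $m$ of differing coordinates with ``peeling off'' a coordinate via the triangle inequality only yields $d(X,Y)\geq d(X,Y')-d(Y',Y)$, a subtraction that weakens rather than establishes the bound; and the vague appeal to ``the distance between $X$ and any vertex of the form $(P\mid X_{[j:k]})$ is controlled'' is precisely the statement that needs proof. If you add the projection lemma (or an equivalent exchange argument showing that a shortest $X$-to-$Z$ path can be assumed to stay in coordinates that agree with $X$ on rows $\geq j$), your outline becomes correct and in fact no induction on $m$ is needed at all.
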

\begin{proof}
Observe that if two points in $G$ are connected by an edge they differ in exactly one coordinate. 
Since $x_{js}\neq y_{js}$ any shortest path connecting $X$ and $Y$ must contain two consecutive points $Z,Z'$ with $Z=(P\mid Z_j\mid\ldots\mid Z_k)$ and $Z'=(P\mid Z'_j\mid\ldots\mid Z'_k)$ such that $z_{js}=x_{js}, z'_{js}=y_{js}$ and $Z$ agrees with $Z'$ in all remaining coordinates. We obtain
\[d(X,Y)= d\big (X, Z\big)+d(Z,Z')+ d\big(Z', Y\big)=d\big (X, Z\big)+j+ d\big(Z', Y\big).\]
It is now left to show that $d(X,Z)\geq d\big(X, (P\mid X_{[j:k]})\big)$ and $d(Y,Z')\geq d\big(Y, (P\mid Y_{[j:k]})\big)$. 
To prove this we consider a shortest path $V^1,\ldots, V^t$ connecting $V^1=X$ with $V^t=Z$. Let $W^i=(V^i_{[0:j-1]}\mid X_{[j:k]})$ for $i=1,\ldots, t$. We claim that $W^i$ is connected to $W^{i+1}$ by an edge in $G$ and that $d(V^i,V^{i+1})\geq d(W^i,W^{i+1})$ for all $1\leq i\leq t-1$. So let $1\leq i\leq t-1$, we know that $V^i$ and $V^{i+1}$ differ in exactly one coordinate. If they differ at a coordinate in row $r\geq j$ we have $W^i=W^{i+1}$ and thus the claim holds. Otherwise let $u=\phi_{r-1}(V^i_{[0:r-1]})$ then $V^i$ and $V^{i+1}$ satisfy $v^{i}_{ru}\neq v^{i+1}_{ru}$ and $d(V^i,V^{i+1})=r$. Since $r\leq j-1$ we obtain that $W^{i}$ is connected to $W^{i+1}$ by the edge 
\[\{(V^{i}_{[0:r-1]}\mid v^i_{r1},\ldots,v^i_{ru-1},\star, v^i_{ru+1},\ldots, v^i_{rN_{r-1}}\mid W^i_{[r+1:k]})\mid \star\in[\Gamma N_{r-1}] \},\]
which has weight $r$. This yields the claim. 

Observe that $W^1=X$ and $W^t=(P\mid X_{[j:k]})$ and that
\begin{align*}
d\big(X, (P\mid X_{[j:k]})\big) \leq\sum_{i=1}^{t-1}d(W^i, W^{i+1})\leq \sum_{i=1}^{t-1}d(V^i, V^{i+1})=d(X,Z).
\end{align*} 
Analogously one can show $d(Y,Z')\geq d\big(Y, (P\mid Y_{[j:k]})\big)$ and obtains 
\[d(X,Y)=d\big (X, Z\big)+j+ d\big(Z', Y\big)\geq d\big(X, (P\mid X_{[j:k]})\big)+j+d\big(Y, (P\mid Y_{[j:k]})\big).\]
\end{proof}

We now define the so called \emph{anchor set} $\Anc_\ell(C)$ of a bad cluster $C$ at time $\ell$. If $C$ is anchored at $\ell'$ then $\Anc_\ell(C)$ is the union of $\ell'$ and the anchor set of some bad cluster $D\subset C$ at time $\ell-1$. If we choose $D$ appropriately the sum of anchors in $\Anc_\ell(C)$ is a lower bound on the discrete radius of $C$, as we show later. It is clear that $\ell'$ itself is a lower bound on the discrete radius since $\Ker_\ell(C)$ is disconnected in $G_{\ell'-1}$ by definition.
If we additionally assume that the discrete radius of $D$ is large, e.g., lower bounded by the sum of anchors in $\Anc_{\ell-1}(D)$, then it is reasonable to assume that the discrete radius of $C$ is lower bounded by some function in $\ell'$ and the sum of anchors in $\Anc_{\ell-1}(D)$. Before proving this we give a formal definition of $\Anc_\ell(C)$ and how to choose $D$.

\begin{definition}
	\label{def2:anc_set}
	Let $1\leq \ell\leq k$ and $C$ be a bad cluster at time $\ell$ which is anchored at $\ell'\geq \ell$. If $\ell=1$ we define the anchor set of $C$ as $\Anc_1(C)=\{\ell'\}$ and set $\prev(C)=\{X\}$ for some $X\in C$.
	
	For $\ell>1$ we distinguish two cases.
	\begin{description}
		\item[Case 1:] $C$ contains a bad cluster $D$ which is bad at time $\ell-1$ and anchored at $\ell'$. We then set  $\Anc_{\ell}(C)=\Anc_{\ell-1}(D)$ and $\prev(C)=D$.	
		\item[Case 2:] $C$ does not contain such a cluster. Then let $D\subset C$ be a bad cluster at time $\ell-1$ minimizing \[\sum_{a\in\Anc_{\ell-1}(D)}a\]
		among all clusters $D'\in \Bad(\ell-1)$ with $D'\subset C$. 
		We set $\Anc_\ell(C)=\Anc_{\ell-1}(D)\cup\{\ell'\}$ and $ \prev(C)=D$.
	\end{description}
\end{definition}

Observe that in Case~2 of the previous definition, the bad cluster~$D$ must be anchored at some~$\ell_D < \ell'$.
\begin{lemma}
	\label{lem2:kernel_subset}
	Let $1\leq \ell\leq k$ and $C$ be a bad cluster at time $\ell$. If $C$ contains a cluster $D$ which is bad at time $\ell-1$ then $\Ker_{\ell-1}(D)\subset \Ker_\ell(C)$.
\end{lemma}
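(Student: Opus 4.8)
The plan is simply to unwind the definition of the kernel. By hypothesis $C$ is bad at time $\ell$, so the definition of $\Ker_\ell(C)$ is applicable and gives
\[
\Ker_\ell(C)=\bigcup_{D'\in\Bad(\ell-1)\colon D'\subset C}\Ker_{\ell-1}(D').
\]
Again by hypothesis, $D$ is bad at time $\ell-1$, i.e.\ $D\in\Bad(\ell-1)$, and $D\subset C$. Hence $D$ is one of the clusters $D'$ ranging in the union above, and therefore $\Ker_{\ell-1}(D)\subset\Ker_\ell(C)$. (For the boundary case $\ell=1$ one uses $\Bad(0)=\Ht_{N_k}$ and $\Ker_0(D)=D$, but the argument is identical.)

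There is essentially no obstacle: the only points to verify are that $C$ being bad is exactly what makes the displayed formula for $\Ker_\ell(C)$ legitimate, and that ``$D$ bad at time $\ell-1$'' unpacks to $D\in\Bad(\ell-1)$. The lemma is a bookkeeping statement whose real purpose is downstream — it guarantees that every point counted in the kernel of a deeper bad cluster is still counted in the kernel of its bad ancestor, which is precisely what will be needed when, following $\prev(\cdot)$ down the chain of bad clusters, we transfer the lower bound on the (discrete) radius from $D=\prev(C)$ at time $\ell-1$ to $C$ at time $\ell$ via Lemma~\ref{lem2:distance_comp}.
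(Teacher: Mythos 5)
Your proof is correct and is essentially identical to the paper's: both simply unwind the definition $\Ker_\ell(C)=\bigcup_{D'\in\Bad(\ell-1)\colon D'\subset C}\Ker_{\ell-1}(D')$ and observe that $D$ appears in this union. Nothing is missing.
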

\begin{proof}
	Since $D\in\Bad(\ell-1)$ and $D\subset C$, we get 
	\[\Ker_{\ell-1}(D)\subset \bigcup_{D'\subset \Bad(\ell-1)\colon D'\subset C}\Ker_{\ell-1}(D')=\Ker_{\ell}(C).\]
\end{proof}
With the help of Lemma~\ref{lem2:distance_comp} we are able to show how the discrete radius and diameter of a bad cluster, depends on the sum of anchors.
\begin{restatable}{lemma}{LemmaAncRadius}
	\label{lem2:radius_bad}
	Let $1\leq \ell\leq k$ and $C$ be a bad cluster at time $\ell$ anchored at $\ell'$. Then for any point $Z\in \PP$  there is $X\in \Ker_\ell(C)$ such that
	\begin{align*}
	d(Z,X)\geq \sum_{a\in\Anc_{\ell}(C)}a.
	\end{align*}
\end{restatable}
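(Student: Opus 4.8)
The plan is to induct on $\ell$, following exactly the recursive structure of Definition~\ref{def2:anc_set}. For the base case $\ell = 1$, a bad cluster $C$ anchored at $\ell'$ has $\Ker_1(C) = C$ disconnected in $G_{\ell'-1}$, so for any $Z \in \PP$ there is a point $X \in C$ lying in a different connected component of $G_{\ell'-1}$ than $Z$; by Lemma~\ref{lem:opt_cl} (or rather the distance observation underlying it) we get $d(Z,X) \ge \ell' = \sum_{a \in \Anc_1(C)} a$. For the inductive step, fix $Z \in \PP$ and let $D = \prev(C)$, which is bad at time $\ell-1$; by Lemma~\ref{lem2:kernel_subset} we have $\Ker_{\ell-1}(D) \subseteq \Ker_\ell(C)$, so any point found in $\Ker_{\ell-1}(D)$ also lies in $\Ker_\ell(C)$.

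In Case~1 of Definition~\ref{def2:anc_set} we have $\Anc_\ell(C) = \Anc_{\ell-1}(D)$ and $D$ is anchored at the same $\ell'$, so the induction hypothesis applied to $D$ and the point $Z$ directly yields $X \in \Ker_{\ell-1}(D) \subseteq \Ker_\ell(C)$ with $d(Z,X) \ge \sum_{a \in \Anc_{\ell-1}(D)} a = \sum_{a \in \Anc_\ell(C)} a$, and we are done. Case~2 is the substantive one: here $\Anc_\ell(C) = \Anc_{\ell-1}(D) \cup \{\ell'\}$, and $D$ is anchored at some $\ell_D < \ell'$, while $\Ker_\ell(C)$ is connected in $G_{\ell'}$ but disconnected in $G_{\ell'-1}$. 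The idea is to use the induction hypothesis on $D$ to find $X' \in \Ker_{\ell-1}(D)$ with $d(Z, X') \ge \sum_{a \in \Anc_{\ell-1}(D)} a$, and then separately exploit the fact that $\Ker_\ell(C)$ is disconnected in $G_{\ell'-1}$ to "go further" by an additional $\ell'$. Concretely, since $\Ker_\ell(C)$ is disconnected in $G_{\ell'-1}$, there are two points in it lying in different components of $G_{\ell'-1}$; we want to pick $X \in \Ker_\ell(C)$ that is both far from $Z$ (inherited via $X'$) and in a $G_{\ell'-1}$-component different from the one containing the relevant intermediate point, then invoke Lemma~\ref{lem2:distance_comp} to add up the two contributions. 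The natural route is: apply the induction hypothesis not to $Z$ directly but to a suitably chosen projection/anchor point, use Lemma~\ref{lem2:distance_comp} with $j = \ell'$ to split a shortest path from $Z$ (or from $X'$) to a point $X$ in the far $G_{\ell'-1}$-component into a piece of length $\ge \sum_{a \in \Anc_{\ell-1}(D)} a$ plus the weight-$\ell'$ crossing edge, giving a total of at least $\sum_{a \in \Anc_{\ell-1}(D)} a + \ell' = \sum_{a \in \Anc_\ell(C)} a$.

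The main obstacle is making Case~2 rigorous: one must argue that the "distance budget" $\sum_{a \in \Anc_{\ell-1}(D)} a$ obtained inside $\Ker_{\ell-1}(D)$ can genuinely be \emph{added} to the crossing cost $\ell'$, rather than overlapping with it. This is where Lemma~\ref{lem2:distance_comp} does the work: it guarantees that a shortest path from a point to something in a different $G_{\ell'-1}$-component decomposes as (distance within the starting component to the unique crossing vertex) $+\, \ell' +$ (distance within the target component), so the two quantities live on disjoint portions of the path. The care needed is in choosing which point of $\Ker_\ell(C)$ to aim for — it must lie in a $G_{\ell'-1}$-component distinct from the one reached by the induction hypothesis, and one uses that $\Ker_\ell(C)$ spans at least two such components together with the fact that $\Ker_{\ell-1}(D) \subseteq \Ker_\ell(C)$ is entirely contained in a single $G_{\ell_D}$-component with $\ell_D < \ell'$, hence in a single $G_{\ell'-1}$-component. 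I would set up the intermediate point as the projection of the chosen far point into $D$'s component (in the spirit of $(P \mid X_{[j:k]})$ from Lemma~\ref{lem2:distance_comp}) and apply the induction hypothesis there, then stitch the bounds together with the triangle inequality and Lemma~\ref{lem2:distance_comp}.
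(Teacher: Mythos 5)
Your base case and Case~1 match the paper, but your treatment of Case~2 has a genuine gap: you assume the ``distance budget'' $\sum_{a\in\Anc_{\ell-1}(D)}a$ can always be realized inside $\Ker_{\ell-1}(D)$ and then augmented by a crossing of weight $\ell'$. This only works when $Z$ lies \emph{outside} the $G_{\ell'-1}$-component containing $\Ker_{\ell-1}(D)$. If $Z$ lies in the \emph{same} component as $\Ker_{\ell-1}(D)$ (which is perfectly possible, since $Z\in\PP$ is arbitrary), the induction hypothesis applied to $D$ gives you a point far from some reference point inside that component, but this yields no lower bound on $d(Z,X)$ for any $X$ you could output: if $X\in\Ker_{\ell-1}(D)$ there is no forced weight-$\ell'$ crossing, and if $X$ is a generic point of $\Ker_\ell(C)$ in a far component, the segment of the path inside that far component carries no budget at all, because you have identified no structure there. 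The paper resolves this by introducing a \emph{second} bad cluster $E\subset C$ at time $\ell-1$: since $\Ker_\ell(C)$ is disconnected in $G_{\ell'-1}$ while $\Ker_{\ell-1}(D)$ is connected in it, some kernel point $V\in\Ker_\ell(C)$ lies in a different component, and the time-$(\ell-1)$ cluster $E$ containing $V$ is bad with $\Ker_{\ell-1}(E)$ connected in $G_{\ell'-1}$ and disjoint from $D$'s component. Then $Z$ is disconnected in $G_{\ell'-1}$ from at least one of $\Ker_{\ell-1}(D)$, $\Ker_{\ell-1}(E)$; the induction hypothesis is applied to \emph{that} cluster at the projection point, and Lemma~\ref{lem2:distance_comp} stitches the bounds. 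Crucially, when the far cluster is $E$, the induction hypothesis yields $\sum_{a\in\Anc_{\ell-1}(E)}a$, not $\sum_{a\in\Anc_{\ell-1}(D)}a$, and one needs the minimality of $\sum_{a\in\Anc_{\ell-1}(D)}a$ among all bad subclusters of $C$ (built into Case~2 of Definition~\ref{def2:anc_set}) to conclude $\sum_{a\in\Anc_{\ell-1}(E)}a\geq\sum_{a\in\Anc_{\ell-1}(D)}a$ and hence the claimed bound for $\Anc_\ell(C)=\Anc_{\ell-1}(D)\cup\{\ell'\}$. Your proposal never invokes this minimality, which is a sign the argument cannot be complete: without it the specific choice of $D=\prev(C)$ in the definition would play no role.

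A smaller technical point: you fix $j=\ell'$ when applying Lemma~\ref{lem2:distance_comp}, but $Z$ and the far kernel need not differ in a row-$\ell'$ coordinate; being in different components of $G_{\ell'-1}$ only guarantees (via Lemma~\ref{lem2:construction_components}) a differing coordinate in some row $r\geq\ell'$. The paper therefore applies Lemma~\ref{lem2:distance_comp} with that row $r$ and uses $r\geq\ell'$ in the final chain of inequalities; you should do the same.
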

\begin{proof}
	Let $Z\in \PP$ and suppose that $C$ is a bad cluster at time $\ell$ anchored at $\ell'$.
	We prove the lemma via induction over $\ell$. For $\ell=1$ we know that $\Ker_\ell(C)$ is disconnected in $G_{\ell'-1}$ by definition. Thus there is a point $X\in\Ker_\ell(C)$ which is disconnected from $Z$ in $G_{\ell'-1}$ yielding 
	\[d(Z,X)\geq \ell'=\sum_{a\in\Anc_{1}(C)}a.\] 
	
	Let $\ell>1$. If $D=\prev(C)$ is anchored at $\ell'$ we apply Lemma~\ref{lem2:kernel_subset} to observe that $\Ker_{\ell-1}(D)\subset \Ker_\ell(C)$. By induction hypothesis the lemma holds for $D$. Since $\Anc_\ell(C)=\Anc_{\ell-1}(D)$ the lemma also holds for $C$.
	
	Otherwise let $D=\prev(C)$ be anchored at $\ell_D<\ell'$. We know that $\Ker_\ell(C)$ is disconnected in $G_{\ell'-1}$. On the other hand $\Ker_{\ell-1}(D)$ is connected in $G_{\ell'-1}$ since $\ell_D<\ell'$. Thus there is $V\in \Ker_{\ell}(C)$ which is disconnected from $\Ker_{\ell-1}(D)$ in $G_{\ell'-1}$. Let $E\subset C$ be the cluster at time $\ell-1$ which contains $V$. Since $V\in \Ker_\ell(C)$ we know that $E$ is a bad cluster at time $\ell-1$ anchored at $\ell_E<\ell'$. We know that $\Ker_{\ell-1}(E)$ is connected in $G_{\ell'-1}$ and lies in a different connected component than $\Ker_{\ell-1}(D)$. Thus $Z$ is disconnected from $\Ker_{\ell-1}(D)$ or $\Ker_{\ell-1}(E)$ in $G_{\ell'-1}$.
	
	We assume without loss of generality that $Z$ is disconnected from $E$ in $G_{\ell'-1}$. Since $\Ker_{\ell-1}(E)$ is connected in $G_{\ell'-1}$ we know by Lemma~\ref{lem2:construction_components} that $(P\mid Y_{[\ell':k]})=(P\mid Y'_{[\ell':k]})$ for all $Y,Y'\in \Ker_{\ell-1}(E)$.
	Also by Lemma~\ref{lem2:construction_components} there is $\ell'\leq r\leq k$ and $1\leq s\leq N_{r-1}$ such that $z_{rs}\neq y_{rs}$ for all $Y\in \Ker_{\ell-1}(E)$. Let $P=\phi_{r-1}^{-1}(s)$. 
	Thus we know by induction hypothesis that there is a point $X\in\Ker_{\ell-1}(E)\subset \Ker_{\ell}(C)$ with
	\[d(X, (P\mid X_{[r:k]}))\geq \sum_{a\in \Anc_{\ell-1}(E)}a.\] Figure~\ref{fig:rad_bound} shows an exemplary path between $X$ and $Z$.
	
	We apply Lemma~\ref{lem2:distance_comp} to see that
	\begin{align*}
	d(Z,X)
	&\geq d\big (Z, (P\mid Z_{[r:k]})\big)+r+ d\big(X, (P\mid X_{[r:k]})\big)\\
	&\geq r+\sum_{a\in \Anc_{\ell-1}(E)}a\\
	&\geq \ell'+\sum_{a\in \Anc_{\ell-1}(E)}a\\
	&\geq \sum_{a\in \Anc_{\ell}(C)}a
	\end{align*}
	Here the last inequality follows from the minimality of $\sum_{a\in\Anc_{\ell-1}(D)}a$ among all clusters $D'\in \Bad(\ell-1)$ with $D'\subset C$. 
	
	If $Z$ is disconnected from $D$ in $G_{\ell'-1}$ our argument still works after replacing $E$ by $D$.
\end{proof}
\begin{figure}
	\centering
	\includegraphics[scale=1]{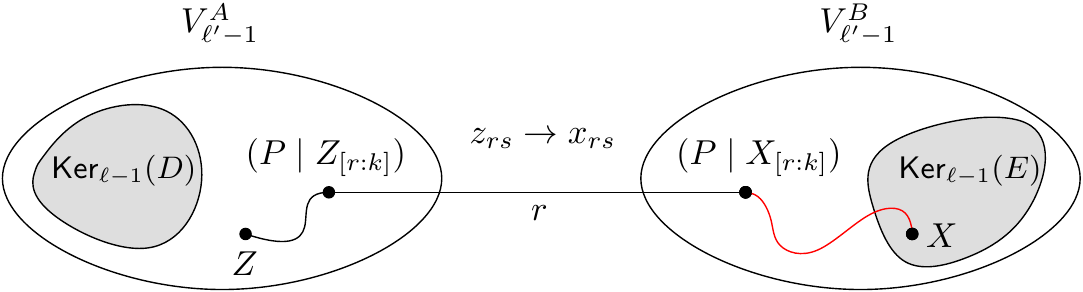}
	\caption{Shows the special case where $Z_{[r:k]}$ and $Y_{[r:k]}$ only differ in the $rs$-coordinate. The length of the red path is lower bounded by $\sum_{a\in \Anc_{\ell-1}(E)}a$.}
	\label{fig:rad_bound}
\end{figure}

\begin{figure}
	\centering
	\includegraphics[scale=1]{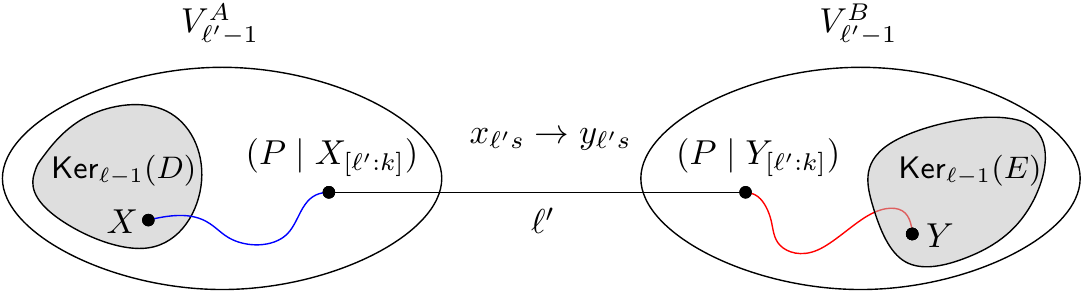}
	\caption{Shows the special case where $X_{[\ell':k]}$ and $Y_{[\ell':k]}$ only differ in the $\ell's$-coordinate. The length of the blue path is lower bounded by $\sum_{a\in \Anc_{\ell-1}(D)}a$, while the length of the red path is lower bounded by $\sum_{a\in\Anc_{\ell-1}(E)}a$.}
	\label{fig:diam_bound}
\end{figure}

\begin{restatable}{lemma}{LemmaAncDiameter}
	\label{lem2:diameter_bad}
	Let $1\leq \ell\leq k$ and $C$ be a bad cluster at time $\ell$ anchored at $\ell'$. Then there are two points $X,Y\in \Ker_\ell(C)$ such that
	\begin{align*} 
	d(X,Y)\geq \ell'+2\sum_{a\in\Anc_{\ell}(C)\backslash\{\ell'\}}a.
	\end{align*}
\end{restatable}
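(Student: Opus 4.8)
The plan is to prove the statement by induction on~$\ell$, mirroring the proof of Lemma~\ref{lem2:radius_bad} but now exhibiting two points of $\Ker_\ell(C)$ that lie far from \emph{each other} rather than one point far from a prescribed external point. For the base case $\ell=1$ we have $\Anc_1(C)=\{\ell'\}$, so the claim reduces to finding $X,Y\in\Ker_1(C)=C$ with $d(X,Y)\geq\ell'$; since $C$ is anchored at~$\ell'$ it is disconnected in $G_{\ell'-1}$, hence meets two distinct connected components of $G_{\ell'-1}$, and any two points of $C$ taken from different components satisfy $d(X,Y)\geq\ell'$ (a shortest path must use an edge of weight at least~$\ell'$; this also follows from Lemma~\ref{lem2:distance_comp}).

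For the inductive step let $D=\prev(C)$. In Case~1 of Definition~\ref{def2:anc_set} the cluster $D$ is bad at time $\ell-1$ and anchored at the same~$\ell'$, and $\Anc_\ell(C)=\Anc_{\ell-1}(D)$; since $\Ker_{\ell-1}(D)\subseteq\Ker_\ell(C)$ by Lemma~\ref{lem2:kernel_subset}, the induction hypothesis for~$D$ directly supplies the required pair of points. The substantial case is Case~2, where $D$ is anchored at some $\ell_D<\ell'$ and $\Anc_\ell(C)=\Anc_{\ell-1}(D)\cup\{\ell'\}$; since every anchor in $\Anc_{\ell-1}(D)$ is at most $\ell_D<\ell'$ we have $\Anc_\ell(C)\setminus\{\ell'\}=\Anc_{\ell-1}(D)$, so the goal becomes $d(X,Y)\geq\ell'+2\sum_{a\in\Anc_{\ell-1}(D)}a$. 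Exactly as in Lemma~\ref{lem2:radius_bad}, $\Ker_\ell(C)$ is disconnected in $G_{\ell'-1}$ whereas $\Ker_{\ell-1}(D)$ is connected there (as $\ell_D\leq\ell'-1$), so there is a point $V\in\Ker_\ell(C)$ in a different component of $G_{\ell'-1}$ than $\Ker_{\ell-1}(D)$; let $E\subseteq C$ be the time-$(\ell-1)$ cluster containing~$V$. Then $E$ is bad at time $\ell-1$ (because $V\in\Ker_\ell(C)$ forces $V\in\Ker_{\ell-1}(E)$), it is anchored at some $\ell_E\leq\ell'$ (because $\Ker_{\ell-1}(E)\subseteq\Ker_\ell(C)$ is contained in a single component of $G_{\ell'}$), and $\ell_E\neq\ell'$ by the defining assumption of Case~2, hence $\ell_E<\ell'$ and $\Ker_{\ell-1}(E)$ is connected in $G_{\ell'-1}$ as well. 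By Lemma~\ref{lem2:construction_components} the kernels $\Ker_{\ell-1}(D)$ and $\Ker_{\ell-1}(E)$ lie in distinct components $V_{\ell'-1}^{X_D}$ and $V_{\ell'-1}^{X_E}$, so $X_D$ and $X_E$ differ in some coordinate $(r,s)$ with $\ell'\leq r\leq k$; moreover all points of $\Ker_{\ell-1}(D)$ (resp.\ of $\Ker_{\ell-1}(E)$) agree in rows $r,\dots,k$, so with $P=\phi_{r-1}^{-1}(s)$ the point $Z_D=(P\mid Y_{[r:k]})$ (resp.\ $Z_E=(P\mid X_{[r:k]})$) does not depend on the choice of $Y\in\Ker_{\ell-1}(D)$ (resp.\ $X\in\Ker_{\ell-1}(E)$).

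Now invoke the already established Lemma~\ref{lem2:radius_bad} twice: applied to~$D$ with the external point~$Z_D$ it produces $Y\in\Ker_{\ell-1}(D)$ with $d(Y,Z_D)\geq\sum_{a\in\Anc_{\ell-1}(D)}a$, and applied to~$E$ with the external point~$Z_E$ it produces $X\in\Ker_{\ell-1}(E)$ with $d(X,Z_E)\geq\sum_{a\in\Anc_{\ell-1}(E)}a$. Both $X$ and $Y$ belong to $\Ker_\ell(C)$ by Lemma~\ref{lem2:kernel_subset}, and since $x_{rs}=(X_E)_{rs}\neq(X_D)_{rs}=y_{rs}$, Lemma~\ref{lem2:distance_comp} yields
\[d(X,Y)\ \geq\ d(X,Z_E)+r+d(Y,Z_D)\ \geq\ \sum_{a\in\Anc_{\ell-1}(E)}a+\ell'+\sum_{a\in\Anc_{\ell-1}(D)}a,\]
using $r\geq\ell'$ in the middle term. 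Since $E$ is a bad cluster at time $\ell-1$ contained in~$C$ and $D$ was chosen in Case~2 to minimize $\sum_{a\in\Anc_{\ell-1}(D')}a$ over all such clusters~$D'$, we may bound $\sum_{a\in\Anc_{\ell-1}(E)}a$ from below by $\sum_{a\in\Anc_{\ell-1}(D)}a$, which gives $d(X,Y)\geq\ell'+2\sum_{a\in\Anc_{\ell-1}(D)}a=\ell'+2\sum_{a\in\Anc_\ell(C)\setminus\{\ell'\}}a$ and closes the induction (see Figure~\ref{fig:diam_bound}). The main obstacle is Case~2: one must produce the second anchoring cluster~$E$, check that its anchor lies strictly below~$\ell'$, and arrange that the reference points $Z_D,Z_E$ are constant on the respective kernels so that Lemma~\ref{lem2:radius_bad} can be used to bound both halves of the path; once this is in place, decomposing $d(X,Y)$ into the radius of~$E$ around~$Z_E$, the weight-$r$ crossing edge, and the radius of~$D$ around~$Z_D$, and applying the minimality of~$D$, makes the bound fall out.
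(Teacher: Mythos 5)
Your proof is correct and follows essentially the same route as the paper's: the same induction with the two cases of Definition~\ref{def2:anc_set}, the same construction of a second bad cluster $E$ with anchor below $\ell'$, two applications of Lemma~\ref{lem2:radius_bad} to bound the two halves of the path, Lemma~\ref{lem2:distance_comp} for the weight-$r$ crossing, and the minimality of $D$ to convert $\sum_{a\in\Anc_{\ell-1}(E)}a$ into $\sum_{a\in\Anc_{\ell-1}(D)}a$. The only (harmless) deviations are that you allow the differing coordinate to lie in any row $r\geq\ell'$ whereas the paper pins it to row $\ell'$ exactly, and you spell out why $\ell_E<\ell'$, which the paper merely asserts.
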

\begin{proof}
	Suppose that $C$ is a bad cluster at time $\ell$ anchored at $\ell'$.
	We prove the lemma via induction over $\ell$. For $\ell=1$ we know that $\Ker_\ell(C)$ is disconnected in $G_{\ell'-1}$ by definition. Thus there are two points $X,Y\in\Ker_\ell(C)$ that are disconnected in $G_{\ell'-1}$ yielding 
	\[d(X,Y)\geq \ell'=\ell'+2\sum_{a\in\Anc_{1}(C)\backslash\{\ell'\}}a.\] 
	
	Let $\ell>1$. If $D=\prev(C)$ is anchored at $\ell'$ we apply Lemma~\ref{lem2:kernel_subset} to observe that $\Ker_{\ell-1}(D)\subset \Ker_\ell(C)$. By induction hypothesis the lemma holds for $D$. Since $\Anc_\ell(C)=\Anc_{\ell-1}(D)$ the lemma also holds for $C$.
	
	Otherwise let $D=\prev(C)$ be anchored at $\ell_D<\ell'$. We know that $\Ker_\ell(C)$ is disconnected in $G_{\ell'-1}$ and $\Ker_{\ell-1}(D)$ is connected in $G_{\ell'-1}$. Thus there is $V\in \Ker_{\ell}(C)$ which is disconnected from $\Ker_{\ell-1}(D)$ in $G_{\ell'-1}$. Let $E\subset C$ be the cluster at time $\ell-1$ which contains $V$. We know that $E$ is a bad cluster at time $\ell-1$ anchored at $\ell_E<\ell'$. Furthermore $\Ker_{\ell-1}(E)$ is connected in $G_{\ell'-1}$ and lies in a different connected component than $\Ker_{\ell-1}(D)$.

	Since $\Ker_{\ell-1}(D)$ and $\Ker_{\ell-1}(E)$ are disconnected in $G_{\ell'-1}$ but connected in $G_{\ell'}$, there must be $1\leq s\leq N_{\ell'-1}$ such that for all $U\in\Ker_{\ell-1}(D)$ and $T\in\Ker_{\ell-1}(E)$ we have $u_{\ell's}\neq t_{\ell' s}$ by Lemma~\ref{lem2:construction_components}. Let $P=\phi^{-1}_{\ell'-1}(s)$, we know by Lemma~\ref{lem2:distance_comp} that 
	\begin{align*}
	d(U,T)\geq d(U,(P\mid U_{[\ell':k]}))+\ell'+d(T,(P\mid T_{[\ell':k]})).
	\end{align*}
	Let $U\in\Ker_{\ell-1}(D)$ and $T\in\Ker_{\ell-1}(E)$. We know by Lemma~\ref{lem2:radius_bad} that for any two points $Z=(P\mid U_{[\ell':k]})$ and  $Z'=(P\mid T_{[\ell':k]})$ there must be $X\in\Ker_{\ell-1}(D)$ and $Y\in\Ker_{\ell-1}(E)$ such that 
	\[d(X,Z)\geq \sum_{a\in\Anc_{\ell-1}(D)}a\]
	and 
	\[d(Y,Z')\geq \sum_{a\in\Anc_{\ell-1}(E)}a.\]
	We use Lemma~\ref{lem2:construction_components} to observe that  $Z=(P\mid X_{[\ell':k]})$ and $Z'=(P\mid Y_{[\ell':k]})$ because $X$ is connected to $U$ and $Y$ is connected to $T$ in $G_{\ell'-1}$. Figure~\ref{fig:diam_bound} shows an exemplary path between $X$ and $Y$.
	Thus
	\begin{align*}
	d(X,Y)
	&\geq d(X,(P\mid X_{[\ell':k]}))+\ell'+d(Y,(P\mid Y_{[\ell':k]}))\\
	&\geq d(X,Z)+\ell'+d(Y,Z')\\
	&\geq \ell'+\sum_{a\in\Anc_{\ell-1}(D)}a+\sum_{a\in\Anc_{\ell-1}(E)}a\\
	&\geq \ell'+2\sum_{a\in \Anc_{\ell}(C)\backslash\{\ell'\}}a
	\end{align*} 
	Here the last inequality follows from the minimality of $\sum_{a\in\Anc_{\ell-1}(D)}a$ among all clusters $D'\in \Bad(\ell-1)$ with $D'\subset C$. 
\end{proof}
\subsection{Comparison to Optimal Clusterings}
Our initial motivation was to construct an instance where any hierarchical clustering has a high approximation ratio. If we consider an arbitrary time $1\leq \ell\leq k$ then the hierarchical clustering $\Hr$ on $(\X,\PP,d)$ may be even optimal at time $\ell$.
Thus the bounds which we develop in Lemma~\ref{lem2:radius_bad} and Lemma~\ref{lem2:diameter_bad} on the discrete radius and diameter of bad clusters are useless without linking the cost of a bad cluster at time $\ell$ to the cost of bad clusters at other time steps. Therefore we construct a sequence of clusters $C_1\subset C_2\ldots\subset C_k$ where $C_i$ is a bad cluster at time $i$ such that $\Anc_1(C_1)\subset\Anc_{2}(C_2)\subset\ldots\subset\Anc_k(C_k)$. We then show with the help of Lemma~\ref{lem2:radius_bad} and Lemma~\ref{lem2:diameter_bad} that at least one of these clusters has a high discrete radius and diameter compared to the optimal cost.
\begin{lemma}
	\label{lem2:inc_cluster_seq}
	Let $C_k$ be a bad cluster at time $k$. For $1\leq i\leq k-1$ we define $C_i=\prev(C_{i+1})$. For all $1\leq i\leq k-1$ cluster $C_i$ is a bad at time $i$ and one of the following two cases occurs:
	\begin{enumerate}
		\item $\Anc_i(C_i)= \Anc_{i+1}(C_{i+1})$,
		\item $\Anc_{i+1}(C_{i+1})\backslash \{\ell\}=  \Anc_i(C_i)$, where $\ell= \max \Anc_{i+1}(C_{i+1})$.
	\end{enumerate} 
\end{lemma}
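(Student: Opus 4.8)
The plan is a short downward induction on $i$ together with one auxiliary monotonicity fact about anchor sets, which I would establish first. I note right away that the first statement is essentially built into Definition~\ref{def2:anc_set}: by downward induction, $C_k$ is bad at time $k$ by assumption, and whenever $C_{i+1}$ is bad at time $i+1\ge 2$ it is anchored at some $\ell'\ge i+1$, so Definition~\ref{def2:anc_set} (in both Case~1 and Case~2, since $i+1>1$) sets $\prev(C_{i+1})$ to a cluster that is bad at time $i$; as $C_i=\prev(C_{i+1})$, every $C_i$ is bad at time $i$. The real content is the description of how $\Anc$ changes, and for that the key lemma to prove first is: for every $1\le j\le k$ and every cluster $D$ that is bad at time $j$ and anchored at $\ell_D$, one has $\max\Anc_j(D)=\ell_D$.

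I would prove this auxiliary claim by induction on $j$ following Definition~\ref{def2:anc_set}. For $j=1$, $\Anc_1(D)=\{\ell_D\}$ and there is nothing to show. For $j>1$, in Case~1 we have $\Anc_j(D)=\Anc_{j-1}(D')$ with $D'=\prev(D)$ bad at time $j-1$ and anchored at the same $\ell_D$, so the induction hypothesis gives the claim directly. In Case~2, $\Anc_j(D)=\Anc_{j-1}(D')\cup\{\ell_D\}$ with $D'=\prev(D)$ bad at time $j-1$ and anchored at some $\ell_{D'}$, and here I would first justify the remark after Definition~\ref{def2:anc_set} that $\ell_{D'}<\ell_D$. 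By Lemma~\ref{lem2:kernel_subset}, $\Ker_{j-1}(D')\subseteq\Ker_j(D)$, and $\Ker_j(D)$ lies in a single connected component of $G_{\ell_D}$; hence so does $\Ker_{j-1}(D')$. Since $G_0\subseteq G_1\subseteq\cdots\subseteq G_k$ and adding edges cannot split a connected component, if we had $\ell_{D'}>\ell_D$ then $\Ker_{j-1}(D')$ would also lie in a single component of $G_{\ell_{D'}-1}$, contradicting that $D'$ is anchored at $\ell_{D'}$; and $\ell_{D'}\ne\ell_D$ because being in Case~2 means that no sub-cluster of $D$ that is bad at time $j-1$ is anchored at $\ell_D$. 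Thus $\ell_{D'}<\ell_D$, and by the induction hypothesis $\max\Anc_j(D)=\max(\max\Anc_{j-1}(D'),\ell_D)=\max(\ell_{D'},\ell_D)=\ell_D$.

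With the auxiliary claim in hand, the second statement follows by a single application of Definition~\ref{def2:anc_set} to $C_{i+1}$. Fix $1\le i\le k-1$ and let $\ell'$ be the level at which $C_{i+1}$ is anchored; since $i+1>1$, the definition applies with $\prev(C_{i+1})=C_i$. If Case~1 holds for $C_{i+1}$, then $C_i$ is anchored at $\ell'$ and $\Anc_{i+1}(C_{i+1})=\Anc_i(C_i)$, which is case~(1) of the lemma. If Case~2 holds, then (as shown above) $C_i$ is anchored at some $\ell_D<\ell'$ and $\Anc_{i+1}(C_{i+1})=\Anc_i(C_i)\cup\{\ell'\}$; by the auxiliary claim $\max\Anc_i(C_i)=\ell_D<\ell'$, so $\ell'\notin\Anc_i(C_i)$, hence $\ell'=\max\Anc_{i+1}(C_{i+1})$ and $\Anc_{i+1}(C_{i+1})\setminus\{\ell'\}=\Anc_i(C_i)$. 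This is case~(2) of the lemma with $\ell=\ell'$.

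I expect the only genuinely delicate step to be the auxiliary claim, and within it the strict inequality $\ell_{D'}<\ell_D$. It relies on the fact that the anchor level of a vertex set is a uniquely determined threshold — once a set is contained in a single connected component of $G_\ell$ it remains so in all $G_{\ell''}$ with $\ell''\ge \ell$, because $G_0\subseteq\cdots\subseteq G_k$ and merging edges cannot disconnect a component — combined with the containment $\Ker_{j-1}(\prev(D))\subseteq\Ker_j(D)$ from Lemma~\ref{lem2:kernel_subset}. Everything else is routine bookkeeping with the two cases of Definition~\ref{def2:anc_set}.
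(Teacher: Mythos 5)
Your proof is correct and follows essentially the same route as the paper: a direct case analysis on Definition~\ref{def2:anc_set} for $C_{i+1}$, using $\Ker_i(C_i)\subset\Ker_{i+1}(C_{i+1})$ (Lemma~\ref{lem2:kernel_subset}) and monotonicity of the graphs $G_0\subseteq\cdots\subseteq G_k$ to compare anchor levels. Your auxiliary claim that $\max\Anc_j(D)$ equals the anchor level of $D$ just makes explicit a fact the paper's terser proof uses implicitly (namely that $\ell\notin\Anc_i(C_i)$ and $\ell=\max\Anc_{i+1}(C_{i+1})$ in Case~2), which is a welcome addition rather than a deviation.
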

\begin{proof}
	For $i=k$ cluster $C_k$ is bad at time $k$ by assumption. If $C_{i+1}$ is a bad cluster at time $i+1$ then $C_i=\prev(C_{i+1})$ is a bad cluster at time $i$, by definition of $\prev$.
	
	Let $C_i$ be anchored at $\ell'\geq i$ and $C_{i+1}$ be anchored at $\ell\geq i+1$. Since $\Ker_{i}(C_i)\subset \Ker_{i+1}(C_{i+1})$ by Lemma~\ref{lem2:kernel_subset}, we know that $\ell'\leq \ell$.
	If $\ell'=\ell$ we obtain by Definition~\ref{def2:anc_set}, that $\Anc_{i}(C_i)=\Anc_{i+1}(C_{i+1})$, so the lemma holds in this case.
	
	If $\ell'<\ell$ we know by Definition~\ref{def2:anc_set} that $\Anc_{i}(C_i)=\Anc_{i+1}(C_{i+1})\backslash\{\ell\}$. So the lemma also holds in this case. 
\end{proof}
\begin{corollary}
	\label{cor:anc_sequence}
	Let $C_k$ be a bad cluster at time $k$. For $1\leq i\leq k-1$ we define $C_i=\prev(C_{i+1})$. Let $\Anc_k(C_k)=\{\ell_1,\ldots,\ell_s\}$ such that $\ell_{t-1}<\ell_t$ for all $2\leq t\leq s$ and let $\ell_0=0$. Then for any $1\leq t\leq s$ and for any $i$ with $\ell_{t-1}< i \leq \ell_t$, we have $\{\ell_1,\ldots,\ell_t\}\subset \Anc_{i}(C_i)$.
\end{corollary}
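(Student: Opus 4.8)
The plan is to show that along the chain $C_1\subseteq C_2\subseteq\cdots\subseteq C_k$ the anchor sets $\Anc_i(C_i)$ shrink in a tightly controlled way, and then to exploit that a bad cluster at time~$i$ is anchored at some $\ell'\geq i$. By Lemma~\ref{lem2:inc_cluster_seq}, for every $1\leq i\leq k-1$ we have either $\Anc_i(C_i)=\Anc_{i+1}(C_{i+1})$ or $\Anc_i(C_i)=\Anc_{i+1}(C_{i+1})\setminus\{\max\Anc_{i+1}(C_{i+1})\}$. Since $\Anc_k(C_k)=\{\ell_1<\cdots<\ell_s\}$, a downward induction on~$i$ then shows that each $\Anc_i(C_i)$ is an initial segment $\{\ell_1,\ldots,\ell_{t_i}\}$ of $\Anc_k(C_k)$ for suitable integers $s=t_k\geq t_{k-1}\geq\cdots\geq t_1\geq 1$: in the first case set $t_i=t_{i+1}$, and in the second case the removed element is exactly $\ell_{t_{i+1}}$, so set $t_i=t_{i+1}-1$. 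Here $t_i\geq 1$ holds because $\Anc_i(C_i)$ is non-empty, which in turn follows since $C_i$ is a bad cluster at time~$i$ (Lemma~\ref{lem2:inc_cluster_seq}), hence anchored at some $\ell'_i\geq i$, and $\ell'_i\in\Anc_i(C_i)$ by a one-line induction on the recursion of Definition~\ref{def2:anc_set} (true for $\ell=1$, inherited in Case~1, explicit in Case~2). In particular $\max\Anc_i(C_i)=\ell_{t_i}\geq\ell'_i\geq i$ for every~$i$.

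With these facts the corollary follows by a short contradiction. Fix $1\leq t\leq s$ and $i$ with $\ell_{t-1}<i\leq\ell_t$, and suppose $t_i<t$, i.e.\ $t_i\leq t-1$. Then $\ell_{t_i}\leq\ell_{t-1}<i$, contradicting $\ell_{t_i}\geq i$ established above. Hence $t_i\geq t$, and therefore $\{\ell_1,\ldots,\ell_t\}\subseteq\{\ell_1,\ldots,\ell_{t_i}\}=\Anc_i(C_i)$, which is precisely the assertion.

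I do not expect a real obstacle here: everything reduces to Lemma~\ref{lem2:inc_cluster_seq} together with the definition of an anchored cluster. The one point that needs a little care is the downward induction in the first paragraph, where one must confirm that both alternatives of Lemma~\ref{lem2:inc_cluster_seq} preserve the ``initial segment'' shape of the anchor set; this works because the element deleted in the second alternative is the maximum, hence the last element, of an initial segment of $\{\ell_1,\ldots,\ell_s\}$.
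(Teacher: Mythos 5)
Your proof is correct and follows essentially the same route as the paper: a downward induction on $i$ driven by Lemma~\ref{lem2:inc_cluster_seq}, combined with the observation that $C_i$ is anchored at some $\ell'_i\geq i$ with $\ell'_i\in\Anc_i(C_i)$, so the maximum of $\Anc_i(C_i)$ cannot drop to $\ell_{t-1}$ or below. Your explicit ``initial segment'' bookkeeping with the indices $t_i$ merely spells out what the paper's induction hypothesis carries implicitly.
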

\begin{proof}
	We prove this via induction over $i$, starting from $i=k$ in decreasing order.
	There is nothing to show for $i=k$. For $i<k$ we distinguish two cases. If $\Anc(C_i)=\Anc_{i+1}(C_{i+1})$, the lemma follows from the induction hypothesis. 
	
	Otherwise remember that $\Anc_i(C_i)\subset \Anc_k(C_k)$ and $\ell_{t-1}<i$. Thus we know that $\max \Anc_{i}(C_{i})\in \{\ell_t,\ldots,\ell_s\}$ and therefore $\ell_t\leq \max\Anc_i(C_i)$.
	By Lemma~\ref{lem2:inc_cluster_seq} we know that $\Anc_i(C_i)=\Anc_{i+1}(C_{i+1})\backslash\{\ell\}$, where $\ell=\max \Anc_{i+1}(C_{i+1})$. Thus
	\[\ell_t\leq  \max \Anc_{i}(C_{i})<\max\Anc_{i+1}(C_{i+1})=\ell.\]
	By induction hypothesis we obtain
	\[\{\ell_1,\ldots,\ell_t\} \subset \Anc_{i+1}(C_{i+1})\backslash\{\ell\}=\Anc_i(C_i).\qedhere\]
\end{proof}
In consequence of Corollary~\ref{cor:anc_sequence} we obtain for $\ell_{t-1}< i \leq \ell_t$ in combination with Lemma~\ref{lem2:radius_bad} and Lemma~\ref{lem2:diameter_bad} the following lower bound on the approximation guarantee of $\Hr$ at time $i$
\[
\frac{\rad\Big(\Ht_{\frac{N_k}{\Gamma N_{i-1}}}\Big)}{\rad(\A_i)} =
\frac{2\rad\Big(\Ht_{\frac{N_k}{\Gamma N_{i-1}}}\Big)}{2\rad(\A_i)} \geq
\frac{\diam\Big(\Ht_{\frac{N_k}{\Gamma N_{i-1}}}\Big)}{\diam(\A_i)}\geq \frac{\diam(C_i)}{\diam(\A_i)}\geq \frac{\ell_t+2\sum_{j=0}^{t-1}\ell_j}{i}\]
and 
\[\frac{\drad\Big(\Ht_{\frac{N_k}{\Gamma N_{i-1}}}\Big)}{\drad(\A_i)}\geq \frac{\drad(C_i)}{\drad(\A_i)}\geq \frac{\ell_t+\sum_{j=0}^{t-1}\ell_j}{i}\]
while the right term attains its maximum for $i=\ell_{t-1}+1$. Thus the approximation factor of $\Hr$ is lower bounded by 
$\max_{1\leq t\leq s}\frac{\ell_t+2\sum_{j=0}^{t-1}\ell_j}{\ell_{t-1}+1}$
for the diameter and 
$\max_{1\leq t\leq s}\frac{\ell_t+\sum_{j=0}^{t-1}\ell_j}{\ell_{t-1}+1}$ for the discrete radius. The next lemma shows that this is indeed our desired lower bound. 

\begin{restatable}{lemma}{lemmaNmbSeq}
	\label{lem2:number_sequence} For every $\epsilon>0$ there exists $k\in\mathbb N$ such that for every $s\in\mathbb N$ any sequence of $s+1$ numbers $(\ell_0,\ldots,\ell_s)\in \mathbb R_{\geq0}^{s+1}$ with $\ell_0=0$ and $\ell_s=k$ satisfies the following.  
	 \begin{enumerate}
	 	\item There exists $1\leq t\leq s$ such that for $\alpha_1=4-\epsilon$ and $\Delta_1=1$ we have 
	 	\begin{align*}
	 	\frac{\ell_t+\Delta_1\sum_{i=0}^{t-1}\ell_i}{\ell_{t-1}+1}>\alpha_1.
	 	\end{align*}
	 	\item There exists $1\leq t\leq s$ such that for $\alpha_2=3+2\sqrt{2}-\epsilon$ and $\Delta_2=2$ we have 
	 	\begin{align*}
	 	\frac{\ell_t+\Delta_2\sum_{i=0}^{t-1}\ell_i}{\ell_{t-1}+1}>\alpha_2.
	 	\end{align*}
	 \end{enumerate}
\end{restatable}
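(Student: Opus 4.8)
The plan is to prove the contrapositive in quantitative form: for each of the two cases $(\alpha,\Delta)\in\{(4-\epsilon,1),\,(3+2\sqrt2-\epsilon,2)\}$ I will produce a constant $K=K(\epsilon)$ so that every nonnegative sequence $0=\ell_0,\ell_1,\dots,\ell_s$ satisfying $\ell_t+\Delta\sum_{i=0}^{t-1}\ell_i\le\alpha(\ell_{t-1}+1)$ for all $1\le t\le s$ already has $\ell_s\le K$. Choosing an integer $k$ larger than the maximum of the two constants then makes it impossible for a sequence with $\ell_s=k$ to satisfy all constraints, i.e.\ some ratio must exceed $\alpha$. It suffices to treat small $\epsilon$, since decreasing $\epsilon$ only strengthens the statement: a witnessing index $t$ for a fixed $\epsilon_0$ is a fortiori a witness for every $\epsilon\ge\epsilon_0$, with the same $k$.

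First I would reformulate. Writing $S_t=\sum_{i=0}^t\ell_i$, the constraint reads $S_t\le\alpha(\ell_{t-1}+1)$ when $\Delta=1$ and $S_t+S_{t-1}\le\alpha(\ell_{t-1}+1)$ when $\Delta=2$. Subtracting the constraint at time $t$ from the one at time $t+1$, and using $S_{t+1}-S_t=\ell_{t+1}$ (resp.\ $S_{t+1}-S_{t-1}=\ell_t+\ell_{t+1}$), yields for $1\le t\le s-1$ a homogeneous second-order linear recurrence inequality
\[
\ell_{t+1}\le\alpha\ell_t-\alpha\ell_{t-1}\quad(\Delta=1),\qquad \ell_{t+1}\le(\alpha-1)\ell_t-\alpha\ell_{t-1}\quad(\Delta=2),
\]
with characteristic polynomial $h(x)=x^2-\alpha x+\alpha$, resp.\ $h(x)=x^2-(\alpha-1)x+\alpha$. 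Its discriminant is $\alpha^2-4\alpha$, resp.\ $\alpha^2-6\alpha+1$, and the two thresholds $4$ and $3+2\sqrt2$ are exactly the values of $\alpha$ at which these vanish; for $\alpha=4-\epsilon$ (resp.\ $\alpha=3+2\sqrt2-\epsilon$, $\epsilon$ small) the discriminant is negative, so $h>0$ on all of $\mathbb R$.

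The core of the argument is then to exploit nonnegativity through the ratio map. Since $\ell_{t+1}\ge0$, the recurrence forces $\ell_t\ge\ell_{t-1}$ (resp.\ $\ell_t\ge\tfrac{\alpha}{\alpha-1}\ell_{t-1}$) for $1\le t\le s-1$, so from its first positive index $j$ on the sequence is positive and increasing; note $\ell_{j-1}=0$. For $t>j$ set $r_t=\ell_t/\ell_{t-1}\ge1$. Dividing the recurrence by $\ell_{t-1}>0$ gives $r_{t+1}\le g(r_t)$ with $g(r)=\alpha-\alpha/r$ (resp.\ $g(r)=(\alpha-1)-\alpha/r$). The map $g$ is increasing and bounded above by $\alpha$ (resp.\ $\alpha-1$), and crucially $r-g(r)=h(r)/r>0$ for every $r>0$ --- this is where the negative discriminant is used, $g$ has no fixed point. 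Hence all $r_t$ with $j+1\le t\le s-1$ lie in the fixed compact interval $[1,\alpha]$ (resp.\ $[1,\alpha-1]$), on which $h$ attains a positive minimum (one checks the vertex of $h$ lies inside), so $r-g(r)\ge c_0$ there for a constant $c_0=c_0(\epsilon)>0$. Thus $r_{t+1}\le r_t-c_0$, which together with $r_t\ge1$ forces $s-j\le 2+(\alpha-1)/c_0$. Finally $\ell_j\le S_j\le\alpha(\ell_{j-1}+1)=\alpha$ and $\ell_t\le S_t\le\alpha(\ell_{t-1}+1)$ for every $t$, so iterating over the boundedly many indices $j,\dots,s$ bounds $\ell_s$ by a constant $K(\epsilon)$.

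I expect the obstacles to be mostly bookkeeping: the recurrence inequality only runs up to $t=s-1$, so $\ell_s$ must be bounded separately from $\ell_{s-1}$; the ratios are undefined along the (possibly long) initial run of zeros, which one sidesteps by starting the estimate at the first positive index and using $\ell_{j-1}=0$; and one must confirm that the vertex of $h$ falls inside the relevant compact interval so that $c_0>0$ is uniform over all sequences. The one genuinely conceptual point is noticing that nonnegativity is indispensable and that it bites precisely because at these values of $\alpha$ the characteristic roots are complex of modulus $\sqrt\alpha$ --- so the ``extremal'' tight sequence would oscillate into negative values --- with the fixed-point-free map $g$ being the clean way to convert this into a length bound.
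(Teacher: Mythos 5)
There is a genuine gap at the very first step of your argument: you cannot ``subtract the constraint at time $t$ from the one at time $t+1$''. Both constraints are upper bounds ($S_{t+1}\le\alpha(\ell_t+1)$ and $S_t\le\alpha(\ell_{t-1}+1)$), and subtracting two inequalities of the same direction is not valid -- to bound $\ell_{t+1}=S_{t+1}-S_t$ from above you would need a \emph{lower} bound on $S_t$, which the constraint at time $t$ does not provide. Indeed the recurrence inequality you derive is simply false for feasible sequences: take $(\ell_0,\ell_1,\ell_2,\ell_3)=(0,1,1,5)$. For $\alpha_1=4-\epsilon$, $\Delta_1=1$ all three constraints hold ($1\le\alpha_1$, $2\le 2\alpha_1$, $7\le 2\alpha_1$), yet your inequality at $t=2$ would demand $\ell_3\le\alpha_1(\ell_2-\ell_1)=0$; the same sequence also refutes the $\Delta_2=2$ version, where it would demand $\ell_3\le(\alpha_2-1)-\alpha_2=-1$. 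Everything downstream (the ratio map $r_{t+1}\le g(r_t)$, the fixed-point-free argument, the length bound) rests on this false inequality, so the proof as written does not go through. Weakening the step to something that is actually implied by the constraints, e.g.\ $S_t\ge\ell_t+\ell_{t-1}$, gives only $\ell_{t+1}\le(\alpha-\Delta)\ell_t-\Delta\ell_{t-1}+\alpha$, whose characteristic polynomial has real roots at these values of $\alpha$, so the mechanism you rely on disappears.

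The missing ingredient is exactly the paper's first step: one must first reduce to an \emph{extremal} feasible sequence. The paper shows the set of feasible sequences (with $\ell_0=0$, $\ell_s=k$) is compact, minimizes the total sum over it, and proves that at a minimizer every constraint after the first positive entry holds with \emph{equality}; only then is the subtraction legitimate, yielding the exact recurrence $a_t=\alpha(a_{t-1}-a_{t-2})-(\Delta-1)a_{t-1}$ with your characteristic polynomial and negative discriminant. Your intuition about complex roots of modulus $\sqrt{\alpha}$ forcing oscillation is precisely the paper's Part~2 (done there by solving the recurrence explicitly and watching the real part turn negative), and your fixed-point-free ratio map would in fact be a clean alternative way to finish \emph{once applied to the tight sequence}; but without the tightness reduction the argument has no valid starting point.
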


The proof of Lemma~\ref{lem2:number_sequence} can be found in Appendix~\ref{ap:lower_bound}.

\ThmLowerBound*
\begin{proof}
	Let $\epsilon>0$ and $k$ be the respective number from Lemma~\ref{lem2:number_sequence}. We claim that the approximation factor of any hierarchical clustering $\Hr=(\Ht_{N_k},\ldots,\Ht_{1})$ on the instance $(\X,\PP,d)$ is larger than $3+2\sqrt{2}-\epsilon$ if $\cost\in\{\diam,\rad\}$ and larger than $4-\epsilon$ if $\cost=\drad$. First we use Lemma~\ref{lem2:hierarchical_number_bad} to observe that there is a cluster $C_k\in\Ht_{\frac{N_k}{\Gamma N_{k-1}}}$ that is bad at time $k$. For $1\leq i\leq k-1$ we define $C_i=\prev(C_{i+1})$.
	Let $\Anc_k(C_k)=\{\ell_1,\ldots,\ell_s\}$ with $\ell_{t-1}<\ell_t$ for $2\leq t\leq s$ and let $\ell_0=0$. We know by Corollary~\ref{cor:anc_sequence}, that for any  $1\leq t\leq s$ and for $i=\ell_{t-1}+1$ we have $\{\ell_1,\ldots,\ell_t\}\subset\Anc_{i}(C_i)$. Let $\ell'=\max\Anc_i(C_i)$, we obtain by Lemma~\ref{lem2:diameter_bad} and Lemma~\ref{lem2:radius_bad} that
	\begin{align*}
	\diam(C_i)&\geq \ell'+2\sum_{a\in\Anc_i(C_i)\backslash\{\ell'\}}a \geq \ell_t+2\sum_{u=1}^{t-1}\ell_u,\\
	\rad(C_i)&\geq \frac{\diam(C_i)}{2}\geq \frac{\ell_t+2\sum_{u=1}^{t-1}\ell_u}{2},\\
	\drad(C_i)&\geq \sum_{a\in\Anc_i(C_i)}a\geq \sum_{u=1}^{t}\ell_u.
	\end{align*}
	Remember that by Corollary~\ref{cor:opt_cl} $\A_i$ is an optimal $\frac{N_k}{\Gamma N_{i-1}}$-clustering with $\cost(\A_i)=i$ if $\cost\in\{\diam,\drad\}$ and $\cost(\A_i)=i/2$ if $\cost=\rad$. We obtain
	\begin{align*}
	\frac{\rad(C_i)}{\rad(\A_i)}&=\frac{2\rad(C_i)}{2\rad(\A_i)}\geq \frac{\diam(C_i)}{\diam(\A_i)}\geq \frac{\ell_t+2\sum_{u=1}^{t-1}\ell_u}{\ell_{t-1}+1}\\
	\frac{\drad(C_i)}{\drad(\A_i)}&\geq \frac{\sum_{u=1}^{t}\ell_u}{\ell_{t-1}+1}
	\end{align*}
	which are lower bounds on the approximation factor of $\Hr$.
	
	We apply Lemma~\ref{lem2:number_sequence} on $(\ell_0,\ldots,\ell_s)$ to observe that there is $1\leq t'\leq s$ such that 
	\[\frac{\ell_{t'}+2\sum_{u=1}^{t'-1}\ell_u}{\ell_{t'-1}+1}>3+2\sqrt{2}-\epsilon\] and an $1\leq t''\leq s$ such that 
	\[\frac{\sum_{u=1}^{t''}\ell_u}{\ell_{t''-1}+1}> 4-\epsilon.\]
	This proves the theorem. 
\end{proof}

\section{Conclusions and Open Problems}

We have proved tight bounds for the price of hierarchy with respect to the diameter and (discrete) radius. It would be interesting to also obtain a better understanding of the price of hierarchy for other important objective functions like $k$-median and $k$-means. The best known upper bound is 16 for $k$-median~\cite{Dai14} and 32 for $k$-means~\cite{G20} but no non-trivial lower bounds are known. Closing this gap also for these objectives is a challenging problem for further research.

Another natural question is which approximation factors can be achieved by algorithms running in polynomial time. The algorithm we used in this article to prove the upper bounds is not a polynomial-time algorithm because it assumes that for each level~$k$ an optimal $k$-clustering is given. The approximation factors worsen if only approximately optimal clusterings are used instead. It is known that 8-approximate hierarchical clusterings can be computed efficiently with respect to the diameter and (discrete) radius~\cite{DasguptaL05}. It is not clear whether or not it is NP-hard to obtain better hierarchical clusterings. The only NP-hardness results come from the problems with given~$k$. Since computing a $(2-\epsilon)$-approximation for $k$-clustering with respect to the diameter and (discrete) radius is NP-hard, this is also true for the hierarchical versions. However, this is obsolete due to our lower bound, which shows that in general there does not even exist a $(2-\epsilon)$-approximate hierarchical clustering. 

\bibliography{literature}
\bibliographystyle{plainurl}
\appendix

\section{On the Lower Bound}
\label{ap:lower_bound}
\lemmaNmbSeq*
\begin{proof}
	Let $k,s\in \mathbb N$ and $j\in\{1,2\}$.
	We call a sequence $(a_0,\ldots,a_s)\in \mathbb R_{\geq0}^{s+1}$ \emph{feasible} if $a_0=0, a_s=k$ and for all $1\leq t\leq s$ we have 
	\begin{align}
		\label{eq:sum}
		\frac{a_t+\Delta_j\sum_{i=0}^{t-1}a_i}{a_{t-1}+1}\leq\alpha_j.
	\end{align}
	
	Our proof is divided in two parts. In the first part we argue that for all $k,s\in\mathbb N$ the existence of a feasible sequence $(\ell_0,\ldots,\ell_s)$ yields the existence of a feasible sequence $(b_0,\ldots,b_s)$ which satisfies (\ref{eq:sum}) for all $u+1\leq t\leq s$ with equality, where $u$ is the smallest number such that $b_u\neq 0$. In the second part we observe that there exists $k\in\mathbb N$ such that for all $s\in\mathbb N$ there is no feasible sequence $(a_0,\ldots,a_s)\in \mathbb R_{\geq0}^{s+1}$ which satisfies (\ref{eq:sum}) for all $u+1\leq t\leq s$ with equality, where $u$ is the smallest number such that $a_u\neq 0$. In combination both parts yield the lemma.
	
	\textbf{Part 1:}  Let $k,s\in\mathbb N$ and suppose that there exists a feasible sequence $(\ell_0,\ldots,\ell_s)$.
	We consider the set
	\[M=\{(a_0,\ldots,a_s)\in \mathbb R_{\geq0}^{s+1} \mid (a_0,\ldots, a_s) \textup{ is feasible}\}\]
	of all feasible sequences. 
	
	For  $(a_0,\ldots,a_s)\in M$, we claim that $a_t\leq (\alpha_j+1)^{t+1}$ for all $0\leq t\leq s$. We show this via a simple induction over $t$. If $t=0$ there is nothing to show since $a_0=0$. For $t>0$  we obtain
	\begin{align*}
		a_t&\leq\alpha_j(a_{t-1}+1)-\Delta_j\sum_{i=0}^{t-1}a_i\leq \alpha_j(a_{t-1}+1)\leq \alpha_j((\alpha_j+1)^{t}+1)\leq (\alpha_j+1)^{t+1}.
	\end{align*}
	Here the first inequality follows from the feasibility of the sequence. 
	As a consequence we see that $M$ is a bounded set. 
	Furthermore $M$ is also closed since $a_0=0, a_t=k$ are both linear inequalities and \eqref{eq:sum} is a linear inequality for all $1\leq t\leq s$. Thus $M$ is compact.
	
	We consider the function $F\colon M\rightarrow \mathbb R$ with $F(a_0,\ldots, a_s)=\sum_{i=0}^{s}a_i$. Since $F$ is continuous and $M$ is compact and non-empty we know that $F$ attains a minimum on $M$, i.e., there is $(b_0,\ldots, b_s)\in M$ with $F(b_0,\ldots, b_s)\leq F(a_0,\ldots, a_s)$ for all $(a_0,\ldots, a_s)\in M$. 
	We claim that $(b_0,\ldots, b_s)$ satisfies \eqref{eq:sum} with equality for all $u+1\leq t\leq s$, where $u$ is the smallest number such that $b_u\neq 0$.
	Suppose this is not the case and let $u+1\leq t\leq s$ be a number such that 
	\begin{align*}
		\frac{b_{t}+\Delta_j\sum_{i=0}^{t-1}b_i}{b_{t-1}+1}<\alpha_j.
	\end{align*}
	If $b_{t-1}=0$, then $(0,\ldots, 0, b_{t},\ldots, b_s)$ is also feasible and moreover 
	\[F(0,\ldots, 0,b_{t},\ldots, b_s)=\sum_{i=t}^{s}b_i<b_u+ \sum_{i=t}^{s}b_i\leq F(b_0,\ldots, b_s)\]
	in contradiction to $(b_0,\ldots, b_s)$ being a minimum.
	Thus we must have $b_{t-1}>0$ and therefore by continuity there exists an $\epsilon\in (0,b_{t-1})$, such that
	\begin{align*}
		\frac{b_t+\Delta_j(b_{t-1}-\epsilon)+\Delta_j\sum_{i=0}^{t-2}b_i}{b_{t-1}-\epsilon+1}\leq\alpha_j.
	\end{align*}
	Observe that the sequence $(c_0,\ldots, c_s)=(b_0,\ldots, b_{t-2},b_{t-1}-\epsilon,b_t,\ldots, b_s)$ is still feasible. The $t$-th inequality is satisfied by choice of $\epsilon$. All other inequalities are satisfied, since for all $1\leq r\leq s$ with $r\neq t$ we have
	
	\begin{align*}
		\frac{c_r+\Delta_j\sum_{i=0}^{r-1}c_i}{c_{r-1}+1}\leq\frac{b_{r}+\Delta_j\sum_{i=0}^{r-1}b_i}{b_{r-1}+1}\leq \alpha_j.
	\end{align*}
	On the other hand
	\begin{align*}
		F(c_0,\ldots, c_s)=\sum_{i=0}^{s}c_i=-\epsilon + \sum_{i=0}^{s}b_i<F(b_0,\ldots, b_s),
	\end{align*}
	which again stands in contradiction to $(b_0,\ldots, b_s)$ being the minimum. Thus $(b_0,\ldots, b_s)$ is of the desired form. 
	
	\textbf{Part 2:}
	Let $k,s\in\mathbb N$ and $(a_0,\ldots,a_s)\in \mathbb R_{\geq0}^{s+1}$  be a feasible sequence which satisfies (\ref{eq:sum}) for all $u+1\leq t\leq s$ with equality, where $u$ is the smallest number such that $a_u\neq 0$. Thus we know that $a_1=\ldots=a_{u-1}=0$ and $a_u\in(0,\alpha_j]$. Furthermore 
	\[a_{u+1}=\alpha_j(a_u+1)-\Delta_j\sum_{i=0}^{u}a_i=\alpha_j(a_u+1)-\Delta_j a_u\]
	
	and for $u+2\leq t\leq s$ we have 
	\begin{align*}
		a_t
		&=\alpha_j(a_{t-1}+1)-\Delta_j\sum_{i=0}^{t-1}a_i\\
		&=\alpha_j(a_{t-1}+1)-\Delta_j a_{t-1}-\Delta_j\sum_{i=0}^{t-2} a_{i}\\
		&=\alpha_j(a_{t-1}+1)-\Delta_ja_{t-1}-(\alpha_j(a_{t-2}+1)-a_{t-1})\\
		&=\alpha_j(a_{t-1}-a_{t-2})-(\Delta_j-1)a_{t-1}.
	\end{align*}
	Here we use that \eqref{eq:sum} is satisfied with equality for $t$ and $t-1$.
	
	Let \[\Psi=\frac{\alpha_j-\Delta_j+1+\sqrt{(\alpha_j-\Delta_j+1)^2-4\alpha_j}}{2}\] and
	\[\Theta=\frac{\alpha_j-\Delta_j+1-\sqrt{(\alpha_j-\Delta_j+1)^2-4\alpha_j}}{2}\] 
	be the two roots of the polynomial $X^2-(\alpha_j-\Delta_j+1)X+\alpha_j$. We observe later that $\Phi\neq\Theta$. Let $x=\frac{\Theta a_u-a_{u+1}}{\Theta-\Phi}$ and $y=\frac{a_{u+1}-\Phi a_{u}}{\Theta-\Phi}$. 
	
	\textbf{Claim:} It holds that $a_t=\Phi^{t-u}x+\Theta^{t-u}y$ for all $u\leq t\leq s$.
	
	We prove this claim by induction over $t$. 
	For $t=u$ we obtain 
	\[x+y=\frac{\Theta a_u-a_{u+1}+a_{u+1}-\Phi a_u}{\Theta-\Phi}=a_u.\]
	For $t=u+1$ we obtain
	\[\Phi x+\Theta y=\frac{\Phi\Theta a_u-\Phi a_{u+1}+\Theta a_{u+1}-\Theta\Phi a_u}{\Theta-\Phi}=a_{u+1}.\]
	For $t>u+1$ we obtain
	\begin{align*}
		&\Phi^{t-u} x+\Theta^{t-u} y\\
		&=\Phi^{t-u-2}x ((\alpha_j-\Delta_j+1)\Phi-\alpha_j) + \Theta^{t-u-2}y ((\alpha_j-\Delta_j+1)\Theta-\alpha_j)\\
		&=\alpha_j((\Phi^{t-u-1}x+\Theta^{t-u-1}y)-(\Phi^{t-u-2}x+\Theta^{t-u-2}y))-(\Delta_j-1)(\Phi^{t-u-1}x+\Theta^{t-u-1}y)\\
		&=\alpha_j(a_{t-1}-a_{t-2})-(\Delta_j-1)a_{t-1}\\
		&=a_t.
	\end{align*}
	For the first equality we used that $\Phi$ and $\Theta$ are roots of $X^2-(\alpha_j-\Delta_j+1)X+\alpha_j$, i.e., $\Phi^2=(\alpha_j-\Delta_j+1)\Phi-\alpha_j$ and $\Theta^2=(\alpha_j-\Delta_j+1)\Theta-\alpha_j$.
	For the third equality we used the induction hypothesis. This proves the claim.
	
	We argue that if $k$ is large enough, there must be $u\leq t \leq s$ with $a_t<0$ in contradiction to our assumption that $(a_0,\ldots, a_s)$ is feasible. For this we observe that by choice of $\alpha_j$ and $\Delta_j$, we get $(\alpha_j-\Delta_j+1)^2-4\alpha_j<0$ and thus $\Phi$ and $\Theta$ are complex numbers. Furthermore $\Phi$ and $\Theta$ are complex conjugates and so are $x$ and $y$. Thus there exists $r>0$ such that the real part of $\Phi^rx$ and $\Theta^ry$ is negative and thus $\Phi^{r}x+\Theta^{r}y$ is negative, see Figure~\ref{fig:complex}.
	
	\begin{figure}[t]
		\centering
		\includegraphics[scale=0.5]{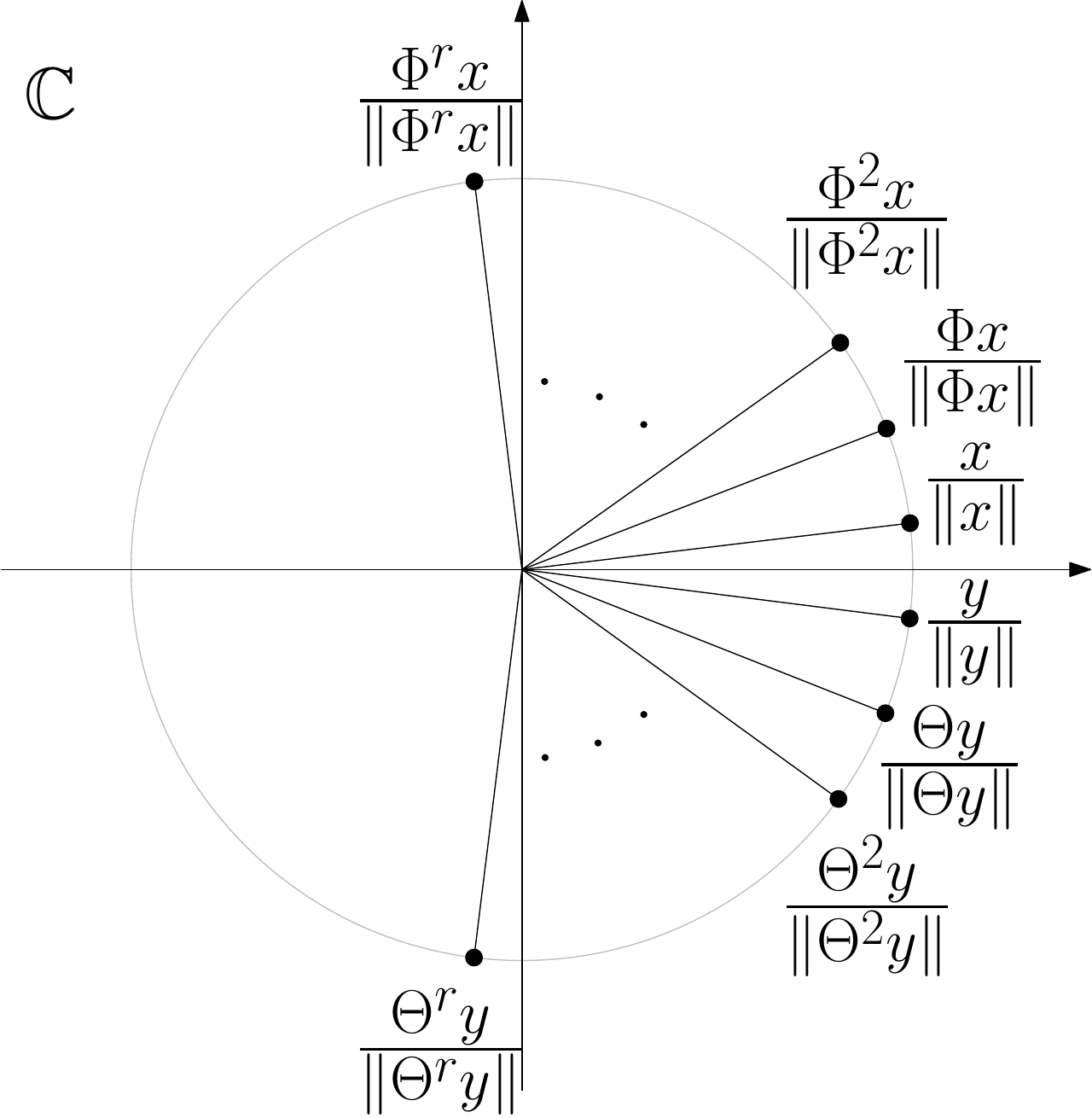}
		\caption{Here we see the normalized numbers on the complex plane.}
		\label{fig:complex}
	\end{figure}
	
	Observe that $a_t\leq (\alpha_j+1)^{t-u+1}$ for $u\leq t\leq s$. One can prove this similar to the bound in Part 1. Thus if $k\geq (\alpha_j+1)^{r}$ we obtain $s\geq r+u$ and thus $a_{r+u}=\Phi^{r}x+\Theta^{r}y$ is negative. Therefore $(a_0,\ldots, a_s)$ is not feasible in contradiction to our assumption.
	
	Let now $k\geq (\alpha_j+1)^{r}$ and suppose there exists $s\in\mathbb N$ and a feasible sequence $(\ell_0,\ldots,\ell_s)$. By the first part we know that there exists a feasible sequence $(a_0,\ldots, a_s)$ which satisfies (\ref{eq:sum}) for all $u+1\leq t\leq s$ with equality, where $u$ is the smallest number such that $a_u\neq 0$. This is in contradiction with the second part, where we prove that for $k\geq (\alpha_j+1)^{r}$ such a sequence cannot exist.
\end{proof}

\section{Counterexample for Mondal's Algorithm}
\label{ap:mondal}
\begin{figure}[t]
	\centering
	\includegraphics[scale=0.7]{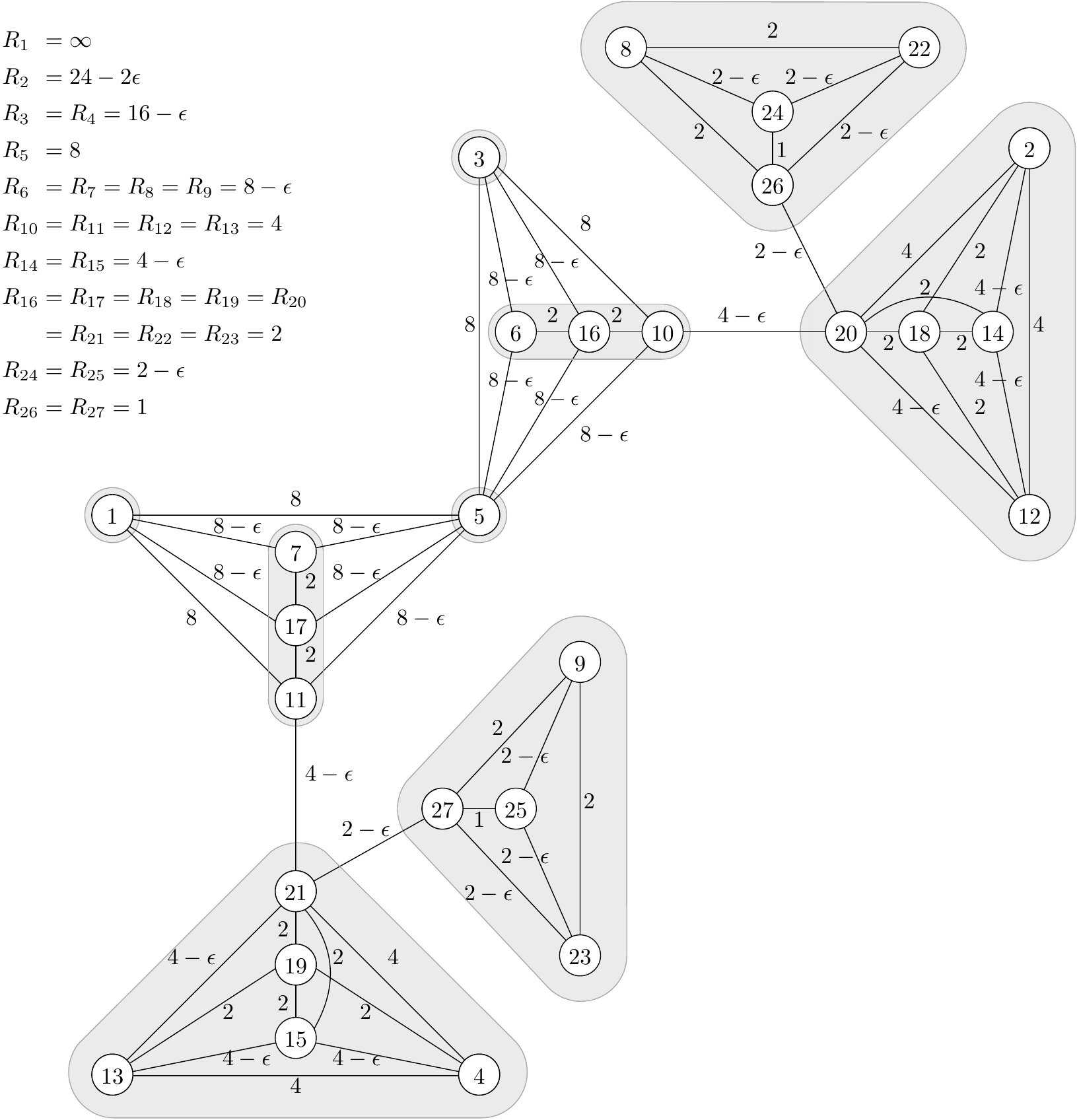}
	\caption{Here we see the clustering instance and the numbering obtained from Gonzales' algorithm as well as the optimal $9$-clustering with radius $2$ depicted in gray. }
	\label{fig:mondal1}
\end{figure}
\begin{figure}[t]
	\centering
	\includegraphics[scale=0.7]{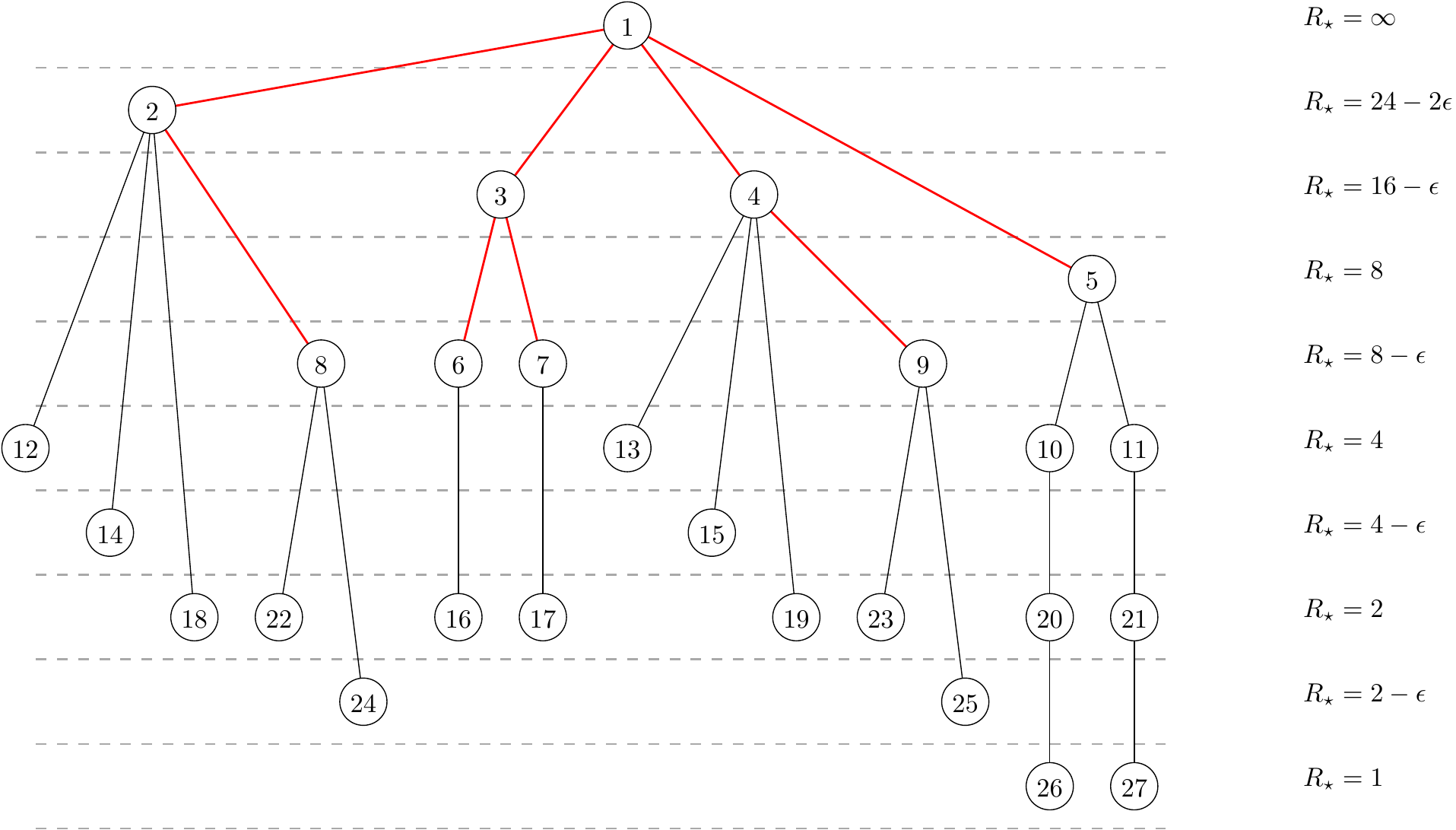}
	\caption{Here we see the final tree. To obtain the $9$-clustering we cut the red edges. The resulting clustering contains the cluster $\{x_5,x_{10},x_{11},x_{20},x_{21},x_{26},x_{27}\}$ of radius $14-3\epsilon$.}
	\label{fig:mondal2}
\end{figure}
The algorithm by Dasgupta and Long~\cite{DasguptaL05} computes a hierarchical clustering which is an 8-approximation with respect to the discrete radius objective and the diameter objective. Mondal's algorithm is a modification of this algorithm and should compute a 6-approximation for the discrete radius objective~\cite[Theorem 3.7]{Mdl18}. We claim that this is not correct and present an example where the approximation factor is $7$. First we give a brief summary of Mondal's algorithm.

Let $(\X,\PP,d)$ be the clustering instance. In the beginning we compute a numbering of the points in $\PP$ by running Gonzales' algorithm~\cite{G85}. The numbering is computed as follows. We pick the first point $x_1\in\PP$ arbitrarily and set $R_{1}=\infty$. For $2\leq k\leq |\PP|$ we set 
\[x_k=\textup{argmax}_{x\in \PP\backslash \{x_1,\ldots, x_{k-1}\}}\min_{1\leq i\leq k-1}d(x,x_i)\]
and $R_k=\min_{1\leq i\leq k-1}d(x_k,x_i)$. In other words the $k$-th point is picked as far as possible from the points $x_1,\ldots, x_{k-1}$ and we denote by $R_k$ the distance of $x_k$ to $x_1,\ldots, x_{k-1}$. 

Based on the $R$-values we define the parent of a point $x\in\PP\backslash\{x_1\}$. Let $N(x)=\textup{argmin}\{d(x,y)\mid y\in\PP, R_x\leq \frac{R_y}{2}\}$ denote the parent of $x$. In other words $N(x)$ is the point nearest to $x$ that satisfies $R_x\leq \frac{R_{N(x)}}{2}$.
Notice that every point in $\PP\backslash\{x_1\}$ has a properly defined parent, as $R_1=\infty$.

We build a tree on $\PP$ as follows. For every point $x\in \PP$ we simply add an edge between $x$ and $N(x)$. The resulting graph is cycle free, since $R_x<R_{N(x)}$ for all $x\in \PP$, and contains $|\PP|-1$ edges. Thus it is indeed a tree.

For any given $1\leq k\leq |\PP|$ we observe that by deleting the edges $\{x_i,N(x_i)\}$ for all $2\leq i\leq k$ the tree decomposes into $k$ connected components with vertex sets $H_k^1,\ldots, H_k^k$. We define the $k$-clustering on $\PP$ to be $\Ht_k=(H_k^1,\ldots, H_k^k)$. Then $\Hr=(\Ht_{|\PP|},\ldots, \Ht_1)$ is a hierarchical clustering of $\PP$. 

We believe that the algorithm by Mondal does not differ significantly from the algorithm by Dasgupta and Long. Since we already know that the analysis of the approximation guarantee of Dasgupta and Long's algorithm is tight~\cite{DM09} the significant improvement on the approximation guarantee seems surprising. We present an example where Mondal's algorithm in fact computes a $7-\epsilon$ approximation for some arbitrarily small $\epsilon>0$, contradicting the claimed approximation guarantee of $6$. We believe that this example can be generalized to prove that the approximation guarantee of Mondal's algorithm is at least $8$. 

Let $\epsilon\in(0,\frac{1}{2})$, Figure~\ref{fig:mondal1} shows a graph with $27$ points which need to be clustered. The metric is given by the shortest path metric in the graph. We perform Mondal's algorithm on this instance under the assumption that we can decide how to break ties, whenever they occur. 

In Figure~\ref{fig:mondal1} we see the numbering of the points which is computed by Gonzales' algorithm as well as all $R$-values. Figure~\ref{fig:mondal2} shows the resulting tree. We obtain the $9$-clustering by cutting all edges $\{x_i,N(x_i)\}$ with $2\leq i\leq 9$. This clustering contains the cluster $\{x_5,x_{10},x_{11},x_{20},x_{21},x_{26},x_{27}\}$, whose radius is $14-3\epsilon$, while the radius of the optimal $9$-clustering is $2$ (see Figure~\ref{fig:mondal1}).

\end{document}